\providecommand{\U}[1]{\protect\rule{.1in}{.1in}}
\newtheorem{theorem}{Theorem}
\newtheorem{corollary}{Corollary}
\newtheorem{definition}{Definition}
\newtheorem{proposition}{Proposition}
\newtheorem{remark}{Remark}
\newenvironment{proof}[1][Proof]{\noindent\textbf{#1.} }{\ \rule{0.5em}{0.5em}}
\begin{document}
\title{Entanglement cost and quantum channel simulation}
\author{Mark M. Wilde}
\affiliation{Hearne Institute for Theoretical Physics, Department of Physics and Astronomy,
Center for Computation and Technology, Louisiana State University, Baton
Rouge, Louisiana 70803, USA}
\keywords{entanglement cost, quantum channel simulation, entanglement of formation}
\pacs{}

\begin{abstract}
This paper proposes a revised definition for the entanglement cost of a
quantum channel $\mathcal{N}$. In particular, it is defined here to be the
smallest rate at which entanglement is required, in addition to free classical
communication, in order to simulate $n$ calls to $\mathcal{N}$, such that the
most general discriminator cannot distinguish the $n$ calls to $\mathcal{N}$
from the simulation. The most general discriminator is one who tests the
channels in a sequential manner, one after the other, and this discriminator
is known as a quantum tester [Chiribella \textit{et al}., Phys.~Rev.~Lett.,
\textbf{101}, 060401 (2008)] or one who is implementing a quantum co-strategy
[Gutoski \textit{et al}., Symp.~Th.~Comp., 565 (2007)]. As such, the proposed
revised definition of entanglement cost of a quantum channel leads to a rate
that cannot be smaller than the previous notion of a channel's entanglement
cost [Berta \textit{et al}., IEEE Trans.~Inf.~Theory, \textbf{59}, 6779
(2013)], in which the discriminator is limited to distinguishing parallel
uses of the channel from the simulation. Under this revised notion, I prove
that the entanglement cost of certain teleportation-simulable channels is
equal to the entanglement cost of their underlying resource states. Then I
find single-letter formulas for the entanglement cost of some fundamental
channel models, including dephasing, erasure, three-dimensional
Werner--Holevo channels, epolarizing channels (complements of depolarizing channels), as well as single-mode pure-loss and pure-amplifier
bosonic Gaussian channels. These examples demonstrate that the resource theory of entanglement for quantum channels is not reversible. Finally, I discuss how to generalize the basic
notions to arbitrary resource theories.

\end{abstract}
\date{\today}
\startpage{1}
\endpage{10}
\maketitle

\section{Introduction}

The resource theory of entanglement \cite{BDSW96}\ has been one of the richest
contributions to quantum information theory \cite{H12,H06book,W17,Wat16}, and
these days, the seminal ideas coming from it are influencing diverse areas of
physics \cite{CG18}. A fundamental question in entanglement theory is to
determine the smallest rate at which Bell states (or ebits) are needed, along
with the assistance of free classical communication, in order to generate $n$
copies of an arbitrary bipartite state $\rho_{AB}$ reliably (in this
introduction, $n$ should be understood to be an arbitrarily large number)
\cite{BDSW96}. The optimal rate is known as the entanglement cost of
$\rho_{AB}$ \cite{BDSW96}, and a formal expression is known for this quantity
in terms of a regularization of the entanglement of formation \cite{HHT01}. An
upper bound in terms of entanglement of formation has been known for some time
\cite{BDSW96,HHT01}, while a lower bound in terms of a semi-definite programming quantity has been determined recently 
\cite{WD17}. Conversely, a related fundamental question is to determine the
largest rate at which one can distill ebits reliably from $n$ copies of
$\rho_{AB}$, again with the assistance of free classical communication
\cite{BDSW96}. This optimal rate is known as the distillable entanglement, and
various lower bounds \cite{DW05} and upper bounds \cite{Rai99,Rai01,CW04,WD16pra} are
known for~it.

The above resource theory is quite rich and interesting, but soon after
learning about it, one might immediately question its operational
significance. How are the bipartite states $\rho_{AB}$ established in the
first place? Of course, a quantum communication channel, such as a fiber-optic
or free-space link, is required. Consequently, in the same paper that
introduced the resource theory of entanglement \cite{BDSW96}, the authors
there appreciated the relevance of this point and proposed that the
distillation question could be extended to quantum channels. The distillation
question for channels is then as follows: given $n$ uses of a quantum channel
$\mathcal{N}_{A\rightarrow B}$ connecting a sender Alice to a receiver Bob,
along with the assistance of free classical communication, what is the optimal
rate at which these channels can produce ebits reliably \cite{BDSW96}? By invoking the
teleportation protocol \cite{BBC+93} and the fact that free classical
communication is allowed, this rate is also equal to the rate at which
arbitrary qubits can be reliably communicated by using the channel $n$ times
\cite{BDSW96}. The optimal rate is known as the distillable entanglement of
the channel \cite{BDSW96}, and various lower bounds \cite{DW05} and upper
bounds \cite{TGW14,TGW14b,W16,BW17} are now known for it, strongly related to
the bounds for distillable entanglement of states, as given above.

Some years after the distillable entanglement of a channel was proposed in
\cite{BDSW96}, the question converse to it was proposed and addressed in
\cite{BBCW13}. The authors of \cite{BBCW13} defined the entanglement cost of a
quantum channel $\mathcal{N}_{A\rightarrow B}$\ as the smallest rate at which
entanglement is required, in addition to the assistance of free classical
communication, in order to simulate $n$~uses of $\mathcal{N}_{A\rightarrow B}%
$. Key to their definition of entanglement cost is the particular notion of
simulation considered. In particular, the goal of their simulation protocol is
to simulate $n$~parallel uses of the channel, written as $(\mathcal{N}%
_{A\rightarrow B})^{\otimes n}$. Furthermore, they considered a simulation
protocol $\mathcal{P}_{A^{n}\rightarrow B^{n}}$ to have the following form:%
\begin{equation}
\mathcal{P}_{A^{n}\rightarrow B^{n}}(\omega_{A^{n}})\equiv\mathcal{L}%
_{A^{n}\overline{A}_{0}\overline{B}_{0}\rightarrow B^{n}}(\omega_{A^{n}%
}\otimes\Phi_{\overline{A}_{0}\overline{B}_{0}}),
\label{eq:berta-parallel-prot}%
\end{equation}
where $\omega_{A^{n}}$ is an arbitrary input state, $\mathcal{L}%
_{A^{n}\overline{A}_{0}\overline{B}_{0}\rightarrow B^{n}}$ is a free channel,
whose implementation is restricted to consist of local operations and
classical communication (LOCC) \cite{BDSW96,CLM+14}, and $\Phi_{\overline
{A}_{0}\overline{B}_{0}}$ is a maximally entangled resource state. For
$\varepsilon\in\left[  0,1\right]  $, the simulation is then considered
$\varepsilon$-distinguishable from $(\mathcal{N}_{A\rightarrow B})^{\otimes
n}$ if the following condition holds%
\begin{equation}
\frac{1}{2}\left\Vert (\mathcal{N}_{A\rightarrow B})^{\otimes n}%
-\mathcal{P}_{A^{n}\rightarrow B^{n}}\right\Vert _{\Diamond}\leq\varepsilon,
\label{eq:berta-sim}%
\end{equation}
where $\left\Vert \cdot\right\Vert _{\Diamond}$ denotes the diamond norm
\cite{Kit97}. The physical meaning of the above inequality is that it places a
limitation on how well any discriminator can distinguish the channel
$(\mathcal{N}_{A\rightarrow B})^{\otimes n}$ from the simulation
$\mathcal{P}_{A^{n}\rightarrow B^{n}}$ in a guessing game. Such a guessing
game consists of the discriminator preparing a quantum state $\rho_{RA^{n}}$,
the referee picking $(\mathcal{N}_{A\rightarrow B})^{\otimes n}$ or
$\mathcal{P}_{A^{n}\rightarrow B^{n}}$ at random and then applying it to the
$A^{n}$ systems of $\rho_{RA^{n}}$, and the discriminator finally performing a
quantum measurement on the systems $RB^{n}$. If the inequality in
\eqref{eq:berta-sim} holds, then the probability that the discriminator can
correctly distinguish the channel from its simulation is bounded from above by
$\frac{1}{2}\left(  1+\varepsilon\right)  $, regardless of the particular
state $\rho_{RA^{n}}$ and final measurement chosen for his distinguishing
strategy \cite{Kit97,H69,H73,Hel76}. Thus, if $\varepsilon$ is close to zero,
then this probability is not much better than random guessing, and in this
case, the channels are considered nearly indistinguishable and the simulation
thus reliable.

In parallel to the above developments in entanglement theory, there have
indubitably been many advances in the theory of quantum channel discrimination
\cite{CDP08b,CDP08a,GDP09,DFY09,Harrow10,CMW16} and related developments in
the theory of quantum interactive proof systems \cite{GW07,G09,G12,GAJ18}.
Notably, the most general method for distinguishing a quantum memory channel
from another one consists of a quantum-memory-assisted discrimination protocol
\cite{CDP08a,GDP09}. In the language of quantum interactive proof systems,
memory channels are called strategies and memory-assisted discrimination
protocols are called co-strategies \cite{GW07,G09,G12}. For a visual
illustration of the physical setup, please consult \cite[Figure~2]{CDP08a} or
\cite[Figure~2]{GW07}. In subsequent work after \cite{GW07,CDP08a}, a number
of theoretical results listed above have been derived related to memory
channel discrimination or quantum strategies.

\begin{figure*}[ptb]
\begin{center}
\includegraphics[
width=4.8399in
]
{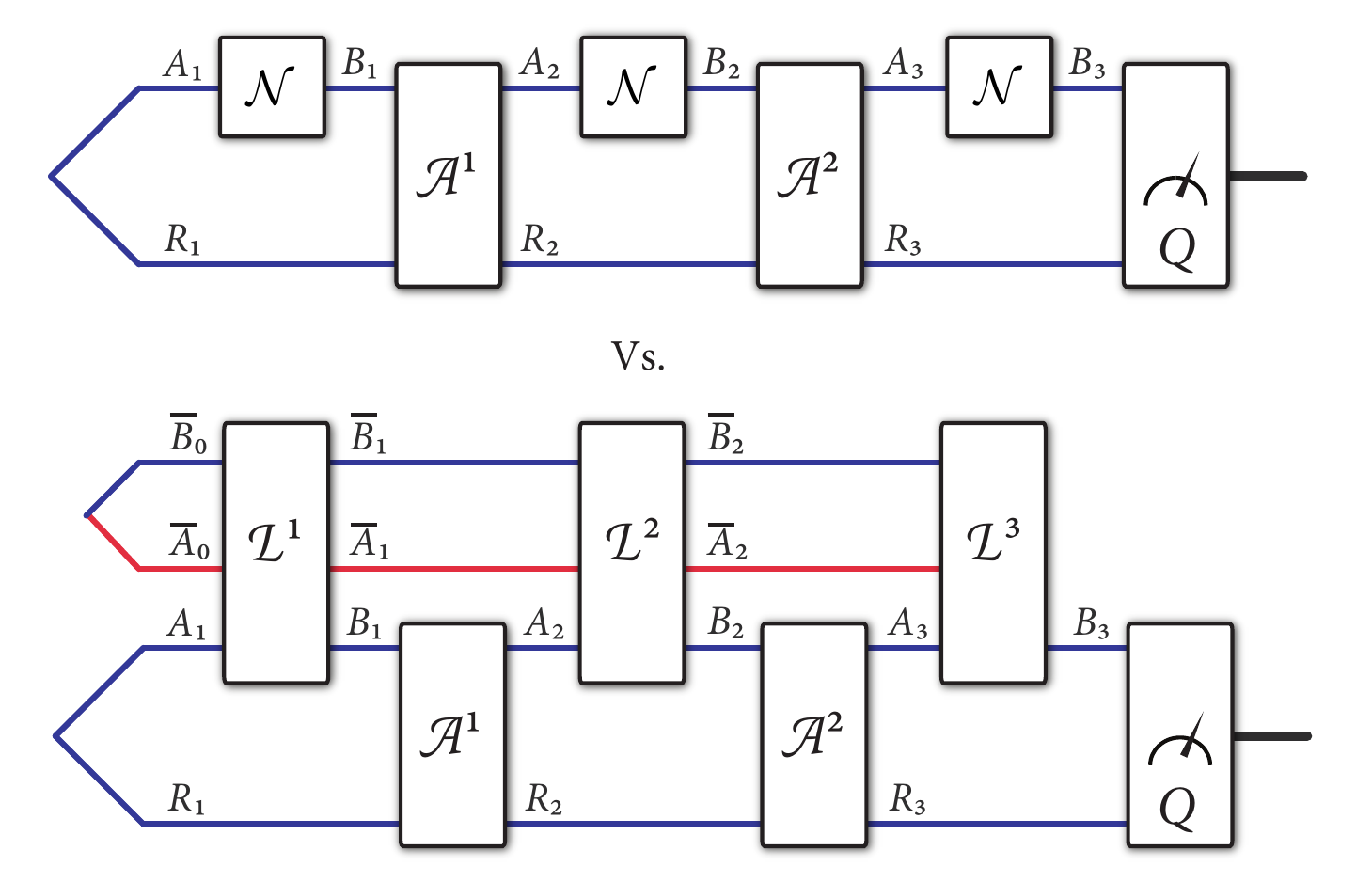}
\end{center}
\caption{The top part of the figure displays a three-round interaction between
the discriminator and the simulator in the case that the actual channel
$\mathcal{N}_{A\rightarrow B}$ is called three times. The bottom part of the
figure displays the interaction between the discriminator and the simulator in
the case that the simulation of three channel uses is called.}%
\label{fig:adaptive-prot}%
\end{figure*}
The aforementioned developments in the theory of quantum channel
discrimination indicate that the notion of channel simulation proposed in
\cite{BBCW13} is not the most general notion that could be considered. In
particular, if a simulator is claiming to have simulated $n$ uses of the
channel $\mathcal{N}_{A\rightarrow B}$, then the discriminator should be able
to test this assertion in the most general way possible, as given in
\cite{GW07,CDP08a,GDP09}. That is, we would like for the simulation to pass
the strongest possible test that could be performed to distinguish it from the
$n$ uses of $\mathcal{N}_{A\rightarrow B}$. Such a test allows for the
discriminator to prepare an arbitary state $\rho_{R_{1}A_{1}}$, call the first
channel use $\mathcal{N}_{A_{1}\rightarrow B_{1}}$ or its simulation, apply an
arbitrary channel $\mathcal{A}_{R_{1}B_{1}\rightarrow R_{2}A_{2}}^{(1)}$, call
the second channel use or its simulation, etc. After the $n$th call is made,
the discriminator then performs a joint measurement on the remaining quantum
systems. See Figure~\ref{fig:adaptive-prot} for a visual depiction. If the
simulation is good, then the probability for the discriminator to distinguish
the $n$ channels from the simulation should be no larger than $\frac{1}%
{2}\left(  1+\varepsilon\right)  $, for small $\varepsilon$.

In this paper, I propose a new definition for the entanglement cost of a
channel $\mathcal{N}_{A\rightarrow B}$, such that it is the smallest rate at
which ebits are needed, along with the assistance of free classical
communication, in order to simulate $n$ uses of $\mathcal{N}_{A\rightarrow B}%
$, in such a way that a discriminator performing the most stringest test, as
described above, cannot distinguish the simulation from $n$ actual calls of
$\mathcal{N}_{A\rightarrow B}$ (Section~\ref{sec:ent-cost-new-def}). Here I
denote the optimal rate by $E_{C}(\mathcal{N})$, and the prior quantity
defined in \cite{BBCW13} by $E_{C}^{(p)}(\mathcal{N})$, given that the
simulation there was only required to pass a less stringent parallel
discrimination test, as discussed above. Due to the fact that it is more
difficult to pass the simulation test as specified by the new definition, it follows that
$E_{C}(\mathcal{N})\geq E_{C}^{(p)}(\mathcal{N})$ (discussed in more detail in
what follows). After establishing definitions, I then prove a general
upper bound on the entanglement cost of a quantum channel, using the notion of
teleportation simulation (Section~\ref{sec:upper-bnd-TP-sim}). I prove that
the entanglement cost of certain ``resource-seizable,''
teleportation-simulable channels takes on a particularly simple form
(Section~\ref{sec:eq-res-seize}), which allows for concluding single-letter
formulas for the entanglement cost of dephasing, erasure, 
three-dimensional Werner--Holevo channels, and epolarizing channels (complements of depolarizing channels), as detailed in Section~\ref{sec:examples}. Note
that the result about entanglement cost of dephasing channels solves an open
question from \cite{BBCW13}. I then extend the results to the case of bosonic
Gaussian channels (Section~\ref{sec:BGCs}), proving single-letter formulas for
the entanglement cost of fundamental channel models, including pure-loss and
pure-amplifier channels (Theorem~\ref{thm:pure-bosonic-formulas} in
Section~\ref{sec:pure-BGCs}). These examples lead to the conclusion that the resource theory of entanglement for quantum channels is not reversible. I also prove that the entanglement cost
of thermal, amplifier,
and additive-noise bosonic Gaussian channels is bounded from below by the entanglement cost 
of their
``Choi states.''
In Section~\ref{sec:resource-theory-gen}, I discuss how to generalize the basic notions to other resource theories.
Finally, Section~\ref{sec:concl}
concludes with a summary and some open questions.

\section{Notions of quantum channel simulation}

In this section, I review the definition of entanglement cost of a quantum
channel, as detailed in \cite{BBCW13}, and I also review the main theorem from
\cite{BBCW13}. After that, I propose the revised definition of a channel's
entanglement cost.

Before starting, let us define a maximally entangled state $\Phi_{AB}$ of
Schmidt rank $d$ as%
\begin{equation}
\Phi_{AB}\equiv\frac{1}{d}\sum_{i,j=1}^d|i\rangle\langle j|_{A}\otimes
|i\rangle\langle j|_{B},
\end{equation}
where $\{|i\rangle_{A}\}_{i}$ and $\{|i\rangle_{B}\}_{i}$ are orthonormal
bases. An LOCC channel $\mathcal{L}_{A^{\prime}B^{\prime}\rightarrow AB}$ is a
bipartite channel that can be written in the following form:%
\begin{equation}
\mathcal{L}_{A^{\prime}B^{\prime}\rightarrow AB}=\sum_{y}\mathcal{E}%
_{A^{\prime}\rightarrow A}^{y}\otimes\mathcal{F}_{B^{\prime}\rightarrow B}%
^{y}, \label{eq:LOCC-channel-decomp}%
\end{equation}
where $\{\mathcal{E}_{A^{\prime}\rightarrow A}^{y}\}_{y}$ and $\{\mathcal{F}%
_{B^{\prime}\rightarrow B}^{y}\}_{y}$ are sets of completely positive,
trace-non-increasing maps, such that the sum map $\sum_{y}\mathcal{E}%
_{A^{\prime}\rightarrow A}^{y}\otimes\mathcal{F}_{B^{\prime}\rightarrow B}%
^{y}$ is a quantum channel (completely positive and trace preserving)
\cite{CLM+14}. However, not every  channel of the form in \eqref{eq:LOCC-channel-decomp} is an LOCC channel (there are separable channels of the form in \eqref{eq:LOCC-channel-decomp} that are not implementable by LOCC \cite{PhysRevA.59.1070}). The diamond norm of the difference of two channels
$\mathcal{R}_{A\rightarrow B}$ and $\mathcal{S}_{A\rightarrow B}$ is defined
as \cite{Kit97}
\begin{equation}
\left\Vert \mathcal{R}-\mathcal{S}\right\Vert _{\Diamond}\equiv\sup_{\psi
_{RA}}\left\Vert \mathcal{R}_{A\rightarrow B}(\psi_{RA})-\mathcal{S}%
_{A\rightarrow B}(\psi_{RA})\right\Vert _{1},
\end{equation}
where the optimization is with respect to all pure bipartite states $\psi
_{RA}$ with system $R$ isomorphic to system~$A$ and the trace norm of an operator $X$ is defined as $\| X\|_1 \equiv \operatorname{Tr}\{\sqrt{X^\dag X}\}$. The operational
interpretation of the diamond norm is that it is related to the maximum
success probability $p_{\text{succ}}(\mathcal{R},\mathcal{S})$\ for any
physical experiment, of the kind discussed after \eqref{eq:berta-sim}, to
distinguish the channels $\mathcal{R}$ and $\mathcal{S}$:
\begin{equation}
p_{\text{succ}}(\mathcal{R},\mathcal{S})=\frac{1}{2}\left(  1+\frac{1}%
{2}\left\Vert \mathcal{R}-\mathcal{S}\right\Vert _{\Diamond}\right)  .
\end{equation}

\subsection{Entanglement cost of a quantum channel from~\cite{BBCW13}}

\label{sec:ent-cost-berta}Let us now review the notion of entanglement cost
from \cite{BBCW13}. Fix $n,M\in\mathbb{N}$, $\varepsilon\in\left[  0,1\right]
$, and a quantum channel $\mathcal{N}_{A\rightarrow B}$. According to
\cite{BBCW13}, an $(n,M,\varepsilon)$ (parallel)\ LOCC-assisted channel
simulation code consists of an LOCC\ channel $\mathcal{L}_{A^{n}\overline
{A}_{0}\overline{B}_{0}\rightarrow B^{n}}$ and a maximally entangled resource
state $\Phi_{\overline{A}_{0}\overline{B}_{0}}$\ of Schmidt rank $M$, such
that together they implement a simulation channel $\mathcal{P}_{A^{n}%
\rightarrow B^{n}}$, as defined in \eqref{eq:berta-parallel-prot}. In this
model, to be clear, we assume that Alice has access to all systems labeled by
$A$, Bob has access to all systems labeled by $B$, and they are in distant
laboratories. The simulation $\mathcal{P}_{A^{n}\rightarrow B^{n}}$ is
considered $\varepsilon$-distinguishable from $n$ parallel calls
$(\mathcal{N}_{A\rightarrow B})^{\otimes n}$ of the actual channel
$\mathcal{N}_{A\rightarrow B}$ if the condition in
\eqref{eq:berta-sim}\ holds. Note here again that the condition in
\eqref{eq:berta-sim} corresponds to a discriminator who is restricted to
performing only a parallel test to distinguish the $n$ calls of $\mathcal{N}%
_{A\rightarrow B}$ from its simulation. Let us also note here that the
condition in \eqref{eq:berta-sim} can be understood as the simulation
$\mathcal{P}_{A^{n}\rightarrow B^{n}}$ providing an approximate teleportation
simulation of $(\mathcal{N}_{A\rightarrow B})^{\otimes n}$, in the language of
the later work of \cite{KW17a}.

A rate $R$ is said to be achievable for (parallel)\ channel simulation of
$\mathcal{N}_{A\rightarrow B}$\ if for all $\varepsilon\in(0,1]$, $\delta>0$,
and sufficiently large $n$, there exists an $(n,2^{n\left[  R+\delta\right]
},\varepsilon)$ LOCC-assisted channel simulation code. The
(parallel)\ entanglement cost $E_{C}^{(p)}(\mathcal{N})$ of the channel
$\mathcal{N}$ is equal to the infimum of all achievable rates, with the
superscript $(p)$ indicating that the test of the simulation is restricted to
be a parallel discrimination test.

The main result of \cite{BBCW13} is that the channel's entanglement cost
$E_{C}^{(p)}(\mathcal{N})$ is equal to the regularization of its entanglement
of formation. To state this result precisely, recall that the entanglement of
formation of a bipartite state $\rho_{AB}$ is defined as \cite{BDSW96}%
\begin{multline}
E_{F}(A;B)_{\rho}\equiv\label{eq:EoF}\\
\inf\left\{  \sum_{x}p_{X}(x)H(A)_{\psi^{x}}:\rho_{AB}=\sum_{x}p_{X}( x)
\psi_{AB}^{x}\right\}  ,
\end{multline}
where the infimum is with respect to all convex decompositions of $\rho_{AB}$
into pure states $\psi_{AB}^{x}$ and%
\begin{equation}
H(A)_{\psi^{x}}\equiv-\operatorname{Tr}\{\psi_{A}^{x}\log_{2}\psi_{A}^{x}\}
\end{equation}
is the quantum entropy of the marginal state $\psi_{A}^{x}=\operatorname{Tr}%
_{B}\{\psi_{AB}^{x}\}$. The entanglement of formation does not increase under
the action of an LOCC\ channel \cite{BDSW96}. A channel's entanglement of
formation $E_{F}(\mathcal{N})$ is then defined as%
\begin{equation}
E_{F}(\mathcal{N})\equiv\sup_{\psi_{RA}}E_{F}(R;B)_{\omega},
\end{equation}
where $\omega_{RB}\equiv\mathcal{N}_{A\rightarrow B}(\psi_{RA})$, and it
suffices to take the optimization with respect to a pure state input
$\psi_{RA}$, with system $R$ isomorphic to system $A$, due to purification, the
Schmidt decomposition theorem, and the LOCC\ monotonicity of entanglement of
formation \cite{BDSW96}. We can now state the main result of \cite{BBCW13}
described above:%
\begin{equation}
E_{C}^{(p)}(\mathcal{N})=\lim_{n\rightarrow\infty}\frac{1}{n}E_{F}%
(\mathcal{N}^{\otimes n}). \label{eq:main-result-berta}%
\end{equation}
The regularized formula on the right-hand side may be difficult to evaluate in
general, and thus can only be considered a formal expression, but if the
additivity relation $\frac{1}{n}E_{F}(\mathcal{N}^{\otimes n}) =
E_{F}(\mathcal{N})$ holds for a given channel $\mathcal{N}$ for all $n \geq 1$, then it
simplifies significantly as $E_{C}^{(p)}(\mathcal{N}) = E_{F}(\mathcal{N})$.

\subsection{Proposal for a revised notion of entanglement cost of a channel}

\label{sec:ent-cost-new-def}

Now I propose the new or revised definition for entanglement cost of a
channel. As motivated in the introduction, a parallel test of channel
simulation is not the most general kind of test that can be considered.
Thus, the new definition proposes that the entanglement cost of a channel
should incorporate the most stringent test possible.

To begin with, let us fix $n,M\in\mathbb{N}$, $\varepsilon\in\left[
0,1\right]  $, and a quantum channel $\mathcal{N}_{A\rightarrow B}$. We define
an $(n,M,\varepsilon)$ (sequential)\ LOCC-assisted channel simulation code to
consist of a maximally entangled resource state $\Phi_{\overline{A}%
_{0}\overline{B}_{0}}$\ of Schmidt rank $M$ and a set%
\begin{equation}
\{\mathcal{L}_{A_{i}\overline{A}_{i-1}\overline{B}_{i-1}\rightarrow
B_{i}\overline{A}_{i}\overline{B}_{i}}^{(i)}\}_{i=1}^{n} \label{eq:sim-prot}%
\end{equation}
of LOCC channels. Note that the systems $\overline{A}_{n}\overline{B}_{n}$ of
the final LOCC\ channel $\mathcal{L}_{A_{n}\overline{A}_{n-1}\overline
{B}_{n-1}\rightarrow B_{n}\overline{A}_{n}\overline{B}_{n}}^{(n)}$ can be
taken trivial without loss of generality. As before, Alice has access to all
systems labeled by $A$, Bob has access to all systems labeled by $B$, and they
are in distant laboratories. The structure of this simulation protocol is
intended to be compatible with a discrimination strategy that can test the
actual $n$ channels versus the above simulation in a sequential way, along the
lines discussed in \cite{CDP08a,GDP09}\ and\ \cite{G12}. I later show how this
encompasses the parallel tests discussed in the previous section.

A sequential discrimination strategy consists of an initial state
$\rho_{R_{1}A_{1}}$, a set $\{\mathcal{A}_{R_{i}B_{i}\rightarrow
R_{i+1}A_{i+1}}^{(i)}\}_{i=1}^{n-1}$ of adaptive channels, and a quantum
measurement $\{Q_{R_{n}B_{n}},I_{R_{n}B_{n}}-Q_{R_{n}B_{n}}\}$. Let us employ
the shorthand $\{\rho,\mathcal{A},Q\}$ to abbreviate such a discrimination
strategy. Note that, in performing a discrimination strategy, the
discriminator has a full description of the channel $\mathcal{N}_{A\rightarrow
B}$ and the simulation protocol, which consists of $\Phi_{\overline{A}%
_{0}\overline{B}_{0}}$ and the set in \eqref{eq:sim-prot}. If this
discrimination strategy is performed on the $n$ uses of the actual channel
$\mathcal{N}_{A\rightarrow B}$, the relevant states involved are%
\begin{equation}
\rho_{R_{i+1}A_{i+1}}\equiv\mathcal{A}_{R_{i}B_{i}\rightarrow R_{i+1}A_{i+1}%
}^{(i)}(\rho_{R_{i}B_{i}}),
\end{equation}
for $i\in\left\{  1,\ldots,n-1\right\}  $ and
\begin{equation}
\rho_{R_{i}B_{i}}\equiv\mathcal{N}_{A_{i}\rightarrow B_{i}}(\rho_{R_{i}A_{i}%
}),
\end{equation}
for $i\in\left\{  1,\ldots,n\right\}  $. If this discrimination strategy is
performed on the simulation protocol discussed above, then the relevant states
involved are%
\begin{align}
\tau_{R_{1}B_{1}\overline{A}_{1}\overline{B}_{1}}  &  \equiv\mathcal{L}%
_{A_{1}\overline{A}_{0}\overline{B}_{0}\rightarrow B_{1}\overline{A}%
_{1}\overline{B}_{1}}^{(1)}(\tau_{R_{1}A_{1}}\otimes\Phi_{\overline{A}%
_{0}\overline{B}_{0}}),\nonumber\\
\tau_{R_{i+1}A_{i+1}\overline{A}_{i}\overline{B}_{i}}  &  \equiv
\mathcal{A}_{R_{i}B_{i}\rightarrow R_{i+1}A_{i+1}}^{(i)}(\tau_{R_{i}%
B_{i}\overline{A}_{i}\overline{B}_{i}}),
\end{align}
for $i\in\left\{  1,\ldots,n-1\right\}  $, where $\tau_{R_{1}A_{1}}%
=\rho_{R_{1}A_{1}}$, and
\begin{equation}
\tau_{R_{i}B_{i}\overline{A}_{i}\overline{B}_{i}}\equiv\mathcal{L}%
_{A_{i}\overline{A}_{i-1}\overline{B}_{i-1}\rightarrow B_{i}\overline{A}%
_{i}\overline{B}_{i}}^{(i)}(\tau_{R_{i}A_{i}\overline{A}_{i-1}\overline
{B}_{i-1}}),
\end{equation}
for $i\in\left\{  2,\ldots,n\right\}  $. The discriminator then performs the
measurement $\{Q_{R_{n}B_{n}},I_{R_{n}B_{n}}-Q_{R_{n}B_{n}}\}$ and guesses
\textquotedblleft actual channel\textquotedblright\ if the outcome is
$Q_{R_{n}B_{n}}$ and \textquotedblleft simulation\textquotedblright\ if the
outcome is $I_{R_{n}B_{n}}-Q_{R_{n}B_{n}}$. Figure~\ref{fig:adaptive-prot}%
\ depicts the discrimination strategy in the case that the actual channel is
called $n=3$ times and in the case that the simulation is performed.

If the \textit{a priori} probabilities for the actual channel or simulation
are equal, then the success probability of the discriminator in distinguishing
the channels is given by%
\begin{multline}
\frac{1}{2}\left[  \operatorname{Tr}\{Q_{R_{n}B_{n}}\rho_{R_{n}B_{n}%
}\}+\operatorname{Tr}\{\left(  I_{R_{n}B_{n}}-Q_{R_{n}B_{n}}\right)
\tau_{R_{n}B_{n}}\}\right] \\
\leq\frac{1}{2}\left(  1+\frac{1}{2}\left\Vert \rho_{R_{n}B_{n}}-\tau
_{R_{n}B_{n}}\right\Vert _{1}\right)  ,
\end{multline}
where the latter inequality is well known from the theory of quantum state
discrimination \cite{H69,H73,Hel76}. For this reason, we say that the $n$
calls to the actual channel $\mathcal{N}_{A\rightarrow B}$ are $\varepsilon
$-distinguishable from the simulation if the following condition holds for the
respective final states%
\begin{equation}
\frac{1}{2}\left\Vert \rho_{R_{n}B_{n}}-\tau_{R_{n}B_{n}}\right\Vert _{1}%
\leq\varepsilon.
\end{equation}
If this condition holds for all possible discrimination strategies
$\{\rho,\mathcal{A},Q\}$, i.e., if%
\begin{equation}
\frac{1}{2}\sup_{\left\{  \rho,\mathcal{A}\right\}  }\left\Vert \rho
_{R_{n}B_{n}}-\tau_{R_{n}B_{n}}\right\Vert _{1}\leq\varepsilon,
\label{eq:good-sim}%
\end{equation}
then the simulation protocol constitutes an $(n,M,\varepsilon)$ channel
simulation code. It is worthwhile to remark: If we ascribe the shorthand
$(\mathcal{N})^{n}$ for the $n$ uses of the channel and the shorthand
$(\mathcal{L})^{n}$ for the simulation, then the condition in
\eqref{eq:good-sim} can be understood in terms of the $n$-round strategy norm
of \cite{CDP08a,GDP09,G12}:%
\begin{equation}
\frac{1}{2}\left\Vert (\mathcal{N})^{n}-(\mathcal{L})^{n}\right\Vert
_{\Diamond,n}\leq\varepsilon.
\end{equation}

As before, a rate $R$ is achievable for (sequential)\ channel simulation of
$\mathcal{N}$ if for all $\varepsilon\in(0,1]$, $\delta>0$, and sufficiently
large $n$, there exists an $(n,2^{n\left[  R+\delta\right]  },\varepsilon)$
(sequential)\ channel simulation code for $\mathcal{N}$. We define the
(sequential)\ entanglement cost $E_{C}(\mathcal{N})$ of the channel
$\mathcal{N}$ to be the infimum of all achievable rates. Due to the fact that
this notion is more general, we sometimes simply refer to $E_{C}(\mathcal{N})$
as the entanglement cost of the channel $\mathcal{N}$ in what follows.

\subsection{LOCC monotonicity of the entanglement cost}

\label{sec:LOCC-mon-cost}

Let us note here that if a channel $\mathcal{N}_{A\rightarrow B}$ can be
realized from another channel $\mathcal{M}_{A^{\prime}\rightarrow B^{\prime}}$
via a preprocessing LOCC\ channel $\mathcal{L}_{A\rightarrow A^{\prime}%
A_{M}B_{M}}^{\text{pre}}$ and a postprocessing LOCC\ channel $\mathcal{L}%
_{B^{\prime}A_{M}B_{M}\rightarrow B}^{\text{post}}$\ as%
\begin{equation}
\mathcal{N}_{A\rightarrow B}=\mathcal{L}_{B^{\prime}A_{M}B_{M}\rightarrow
B}^{\text{post}}\circ\mathcal{M}_{A^{\prime}\rightarrow B^{\prime}}%
\circ\mathcal{L}_{A\rightarrow A^{\prime}A_{M}B_{M}}^{\text{pre}},
\label{eq:pre-post}%
\end{equation}
then it follows that any $(n,M,\varepsilon)$ protocol for sequential channel
simulation of $\mathcal{M}_{A^{\prime}\rightarrow B^{\prime}}$ realizes an
$(n,M,\varepsilon)$ protocol for sequential channel simulation of
$\mathcal{N}_{A\rightarrow B}$. This is  an immediate consequence of
the fact that the best strategy for discriminating $\mathcal{N}_{A\rightarrow
B}$ from its simulation can be understood as a particular 
strategy for discriminating $\mathcal{M}_{A^{\prime}\rightarrow B^{\prime}}$ from a simulation of $\mathcal{M}_{A^{\prime}\rightarrow B^{\prime}}$, due to the
structural decomposition in \eqref{eq:pre-post}. Following definitions, a
simple consequence is the following LOCC\ monotonicity inequality for the
entanglement cost of these channels:%
\begin{equation}
E_{C}(\mathcal{N})\leq E_{C}(\mathcal{M}).
\label{eq:DP-bnd-pre-post}
\end{equation}
Thus, it takes more or the same entanglement to simulate the channel
$\mathcal{M}$ than it does to simulate $\mathcal{N}$. Furthermore, the 
decomposition in \eqref{eq:pre-post} and the bound in \eqref{eq:DP-bnd-pre-post} can be used to bound the entanglement cost of a
channel $\mathcal{M}$ from below. Note that the structure in
\eqref{eq:pre-post} was discussed recently in the context of general resource
theories \cite[Section~III-D-5]{CG18}.

\subsection{Parallel tests as a special case of sequential tests}

A parallel test of the form described in Section~\ref{sec:ent-cost-berta}\ is
a special case of the sequential test outlined above. One can see this in two
seemingly different ways. First, we can think of the sequential strategy
taking a particular form. The state $\xi_{RA_{1}A_{2}\cdots A_{n}}$ is
prepared, and here we identify systems $RA_{2}\cdots A_{n}$ with system
$R_{1}$ of $\rho_{R_{1}A_{1}}$ in an adaptive protocol and system $A_{1}$ of
$\xi_{RA_{1}A_{2}\cdots A_{n}}$ with system $A_{1}$ of $\rho_{R_{1}A_{1}}$.
Then the channel $\mathcal{N}_{A_{1}\rightarrow B_{1}}$ or its simulation is
called. After that, the action of the first adaptive channel is simply to swap
in system $A_{2}$ of $\xi_{RA_{1}A_{2}\cdots A_{n}}$ to the second call of the
channel $\mathcal{N}_{A_{2}\rightarrow B_{2}}$ or its simulation, while
keeping systems $RB_{1}A_{3}\cdots A_{n}$ as part of the reference $R_{2}$ of
the state $\rho_{R_{2}A_{2}}$. Then this iterates and the final measurement is
performed on all of the remaining systems.

The other way to see how a parallel test is a special kind of sequential test
is to rearrange the simulation protocol as has been done in
Figure~\ref{fig:sim-parallel}. Here, we see that the simulation protocol has a
memory structure, and it is clear that the simulation protocol can accept as
input a state $\xi_{RA_{1}A_{2}\cdots A_{n}}$ and outputs a state on systems
$RB_{1}\cdots B_{n}$, which can subsequently be measured.

\begin{figure}[ptb]
\begin{center}
\includegraphics[
width=3.3399in
]
{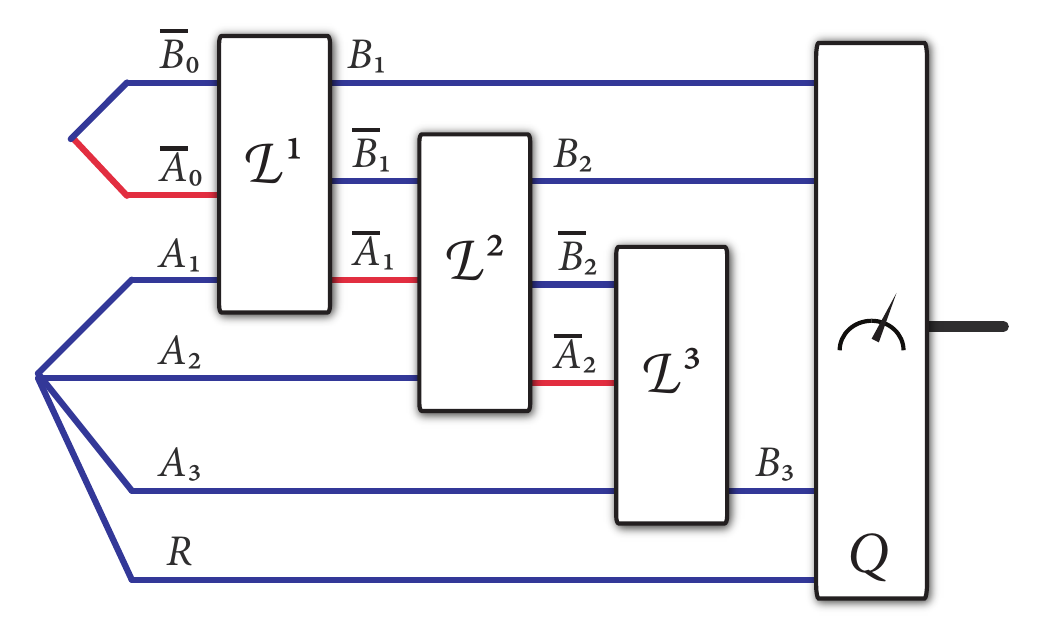}
\end{center}
\caption{The simulation protocol from the bottom part of
Figure~\ref{fig:adaptive-prot} rewritten to clarify that it can participate in
a parallel channel simulation test.}%
\label{fig:sim-parallel}%
\end{figure}

As a consequence of this reduction, any $(n,M,\varepsilon)$ sequential channel
simulation protocol can serve as an $(n,M,\varepsilon)$ parallel channel
simulation protocol. Furthermore, if $R$ is an achievable rate for sequential
channel simulation, then it is also an achievable rate for parallel channel
simulation. Finally, these reductions imply the following inequality:%
\begin{equation}
E_{C}(\mathcal{N})\geq E_{C}^{(p)}(\mathcal{N}).
\label{eq:sequential-greater-than-para}%
\end{equation}
Intuitively, one might sometimes require more entanglement in order to pass
the more stringest test that occurs in sequential channel simulation. As a
consequence of \eqref{eq:main-result-berta} and
\eqref{eq:sequential-greater-than-para}, we have that%
\begin{equation}
E_{C}(\mathcal{N})\geq\lim_{n\rightarrow\infty}\frac{1}{n}E_{F}(\mathcal{N}%
^{\otimes n}).
\end{equation}
It is an interesting question (not addressed here) to determine if there
exists a channel such that the inequality in
\eqref{eq:sequential-greater-than-para} is strict.

If desired, it is certainly possible to obtain a non-asymptotic, weak-converse
bound that implies the above bound after taking limits. Let us state this
bound as follows:

\begin{proposition}
\label{prop:non-asym-lower-bnd-e-cost}Let $\mathcal{N}_{A\rightarrow B}$ be a
quantum channel, and let $n,M\in\mathbb{N}$ and $\varepsilon\in\left[
0,1\right]  $. Set $d=\min\left\{  \left\vert A\right\vert ,\left\vert
B\right\vert \right\}  $, i.e., the minimum of the input and output dimensions
of the channel $\mathcal{N}_{A\rightarrow B}$. Then the following bound holds
for any $(n,M,\varepsilon)$ sequential channel simulation code:%
\begin{equation}
\frac{1}{n}\log_{2}M\geq\frac{1}{n}E_{F}(\mathcal{N}^{\otimes n}%
)-\sqrt{\varepsilon}\log d-\frac{1}{n}g_{2}(\sqrt{\varepsilon}),
\end{equation}
where $\frac{1}{n}\log_{2}M$ is understood as the non-asymptotic entanglement
cost of the protocol and the bosonic entropy function $g_{2}(x)$ is defined
for $x\geq0$ as%
\begin{equation}
g_{2}(x)\equiv\left(  x+1\right)  \log_{2}(x+1)-x\log_{2}x.
\label{eq:bosonic-ent}%
\end{equation}

\end{proposition}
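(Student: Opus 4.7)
The plan is to reduce the sequential simulation setting to the parallel one (already shown in the preceding subsection), and then combine three ingredients: the LOCC monotonicity of entanglement of formation, the fact that $E_F$ of the resource state $\Phi_{\overline{A}_0\overline{B}_0}$ equals $\log_2 M$, and a continuity bound for $E_F$ in order to pass from the simulation's output to the actual channel's output.

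First I would invoke the reduction associated with Figure \ref{fig:sim-parallel}: any $(n,M,\varepsilon)$ sequential simulation code yields an $(n,M,\varepsilon)$ parallel simulation code, so I may assume an LOCC channel $\mathcal{L}_{A^n\overline{A}_0\overline{B}_0\to B^n}$ and a maximally entangled state $\Phi_{\overline{A}_0\overline{B}_0}$ of Schmidt rank $M$ such that $\tfrac{1}{2}\|(\mathcal{N})^{\otimes n}-\mathcal{P}\|_\Diamond \leq \varepsilon$. Let $\psi_{RA^n}$ be a pure state with $|R|=|A|^n$ that achieves (or nearly achieves) the supremum defining $E_F(\mathcal{N}^{\otimes n})$, and define
\begin{align}
\omega_{RB^n} &\equiv \mathcal{N}_{A\to B}^{\otimes n}(\psi_{RA^n}),\\
\tau_{RB^n} &\equiv \mathcal{L}_{A^n\overline{A}_0\overline{B}_0\to B^n}(\psi_{RA^n}\otimes \Phi_{\overline{A}_0\overline{B}_0}).
\end{align}
The simulation condition implies $\tfrac{1}{2}\|\omega_{RB^n}-\tau_{RB^n}\|_1 \leq \varepsilon$.

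Next I would bound $E_F(R;B^n)_\tau$. Relative to the bipartition with $R,A^n,\overline{A}_0$ on Alice's side and $\overline{B}_0$ on Bob's side, the input state $\psi_{RA^n}\otimes \Phi_{\overline{A}_0\overline{B}_0}$ has $E_F$ equal to $\log_2 M$ by additivity and the fact that $E_F(\Phi_{\overline{A}_0\overline{B}_0})=\log_2 M$. Since $\mathcal{L}$ is an LOCC channel acting with $\overline{B}_0$ on Bob's side and producing $B^n$ on Bob's side while leaving $R$ untouched on Alice's side, the LOCC monotonicity of entanglement of formation gives $E_F(R;B^n)_\tau \leq \log_2 M$.

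Then I would invoke Winter's continuity bound for the entanglement of formation: for any two states on a bipartite system whose smaller dimension is $D$, if $\tfrac{1}{2}\|\omega-\tau\|_1 \leq \varepsilon$, then
\begin{equation}
|E_F(\omega)-E_F(\tau)| \leq \sqrt{\varepsilon}\log_2 D + g_2(\sqrt{\varepsilon}).
\end{equation}
Applying this with $D = \min\{|A|^n,|B|^n\} = d^n$, together with $E_F(R;B^n)_\omega = E_F(\mathcal{N}^{\otimes n})$ (from the optimality of $\psi_{RA^n}$), yields
\begin{equation}
\log_2 M \;\geq\; E_F(\mathcal{N}^{\otimes n}) - \sqrt{\varepsilon}\, n\log_2 d - g_2(\sqrt{\varepsilon}),
\end{equation}
and dividing by $n$ gives the claimed inequality. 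The main obstacle is conceptual rather than computational: correctly identifying the bipartition across which LOCC monotonicity of $E_F$ is to be applied (the $R$ system stays on Alice's side throughout), and invoking a continuity bound for $E_F$ with precisely the claimed $\sqrt{\varepsilon}\log d + g_2(\sqrt{\varepsilon})$ dependence; once those are in place, the proof is essentially a one-line chain of inequalities.
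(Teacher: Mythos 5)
Your proposal is correct and follows essentially the same route as the paper's proof: reduce the sequential code to a parallel one, apply LOCC monotonicity of $E_F$ across the $RA^{n}\overline{A}_{0}\,|\,\overline{B}_{0}$ cut to obtain $E_{F}(R;B^{n})_{\tau}\leq\log_{2}M$, and transfer to the actual channel output via the continuity bound for $E_F$ with dimension $d^{n}$. The only cosmetic remark is that the step you attribute to ``additivity'' requires no additivity of $E_F$ at all---$\psi_{RA^{n}}$ lies entirely on Alice's side of the relevant cut, so it is a local ancilla that can be appended or removed for free, which is exactly how the paper justifies that equality.
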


\begin{proof}
To see this, suppose that there exists an $(n,M,\varepsilon)$ protocol for
sequential channel simulation. Then by the above reasoning (also see
Figure~\ref{fig:sim-parallel}), it can be thought of as a parallel channel
simulation protocol, such that the criterion in \eqref{eq:berta-sim}\ holds.
Suppose that $\psi_{RA_{1}\cdots A_{n}}$ is a test input state, with $|R| =
|A|^{n}$, leading to $\omega_{RB_{1}\cdots B_{n}}=(\mathcal{N}_{A\rightarrow
B})^{\otimes n}(\psi_{RA_{1}\cdots A_{n}})$ when the actual channels are
applied and $\sigma_{RB_{1}\cdots B_{n}}$ when the simulation is applied. Then
we have that%
\begin{align}
&  E_{F}(R;B_{1}\cdots B_{n})_{\omega}\nonumber\\
&  \leq E_{F}(R;B_{1}\cdots B_{n})_{\sigma}+n\sqrt{\varepsilon}\log
d+g_{2}(\sqrt{\varepsilon})\nonumber\\
&  \leq E_{F}(RA_{1}\cdots A_{n}\overline{A}_{0};\overline{B}_{0}%
)_{\psi\otimes\Phi}+n\sqrt{\varepsilon}\log d+g_{2}(\sqrt{\varepsilon
})\nonumber\\
&  =E_{F}(\overline{A}_{0};\overline{B}_{0})_{\Phi}+n\sqrt{\varepsilon}\log
d+g_{2}(\sqrt{\varepsilon})\nonumber\\
&  =\log_{2}M+n\sqrt{\varepsilon}\log d+g_{2}(\sqrt{\varepsilon}).
\end{align}
The first inequality follows from the condition in \eqref{eq:good-sim}, as
well as from the continuity bound for entanglement of formation from
\cite[Corollary~4]{Winter15}. The second inequality follows from the
LOCC\ monotonicity of the entanglement of formation \cite{BDSW96}, here
thinking of the person who possesses systems $RA_{1}\cdots A_{n}$ to be in the
same laboratory as the one possessing the systems $\overline{A}_{i}$, while
the person who possesses the $\overline{B}_{i}$ systems is in a different
laboratory. The first equality follows from the fact that $\psi_{RA_{1}\cdots
A_{n}}$ is in tensor product with $\Phi_{\overline{A}_{0}\overline{B}_{0}}$,
so that by a local channel, one may remove $\psi_{RA_{1}\cdots A_{n}}$ or
append it for free. The final equality follows because the entanglement of
formation of the maximally entangled state is equal to the logarithm of its
Schmidt rank. Since the bound holds uniformly regardless of the input state
$\psi_{RA_{1}\cdots A_{n}}$, after an optimization and a rearrangement we
conclude the stated lower bound on the non-asymptotic entanglement cost
$\frac{1}{n}\log_{2}M$ of the protocol.
\end{proof}

\begin{remark}
\label{rem:ent-break}Let us note here that the entanglement cost of a quantum
channel is equal to zero if and only if the channel is entanglement-breaking
\cite{HSR03,Holevo2008}. The \textquotedblleft if-part\textquotedblright%
\ follows as a straightforward consequence of definitions and the fact that
these channels can be implemented as a measurement followed by a preparation
\cite{HSR03,Holevo2008}, given that this measure-prepare procedure is a
particular kind of LOCC\ and thus allowed for free (without any cost)\ in the
above model. The \textquotedblleft only-if\textquotedblright\ part follows
from \eqref{eq:sequential-greater-than-para}\ and \cite[Corollary~18]{BBCW13},
the latter of which depends on the result from \cite{YHHS05}.
\end{remark}

\section{Bounds for the entanglement cost of teleportation-simulable channels}

\subsection{Upper bound on the entanglement cost of teleportation-simulable
channels}

\label{sec:upper-bnd-TP-sim}

The most trivial method for simulating a channel is to employ the
teleportation protocol \cite{BBC+93}\ directly. In this method, Alice and Bob
could use the teleportation protocol so that Alice could transmit the input of
the channel to Bob, who could then apply the channel. Repeating this $n$
times, this trivial method would implement an $(n,\left\vert A\right\vert
^{n},0)$ simulation protocol in either the parallel or sequential model.
Alternatively, Alice could apply the channel first and then teleport the
output to Bob, and repeating this $n$ times would implement an $(n,\left\vert
B\right\vert ^{n},0)$ simulation protocol in either the parallel or sequential
model. Thus, they could always achieve a rate of $\log_{2}(\min\left\{
\left\vert A\right\vert ,\left\vert B\right\vert \right\}  )$ using this
approach, and this reasoning establishes a simple dimension upper bound on the
entanglement cost of a channel:%
\begin{equation}
E_{C}(\mathcal{N}_{A\rightarrow B})\leq\log_{2}(\min\left\{  \left\vert
A\right\vert ,\left\vert B\right\vert \right\}  ).
\end{equation}
In this context, also see \cite[Proposition~9]{KW17a}.

A less trivial approach is to exploit the fact that some channels of interest
could be teleportation-simulable with associated resource state $\omega
_{A^{\prime}B^{\prime}}$, in which the resource state need not be a maximally
entangled state (see \cite[Section~V]{BDSW96} and \cite[Eq.~(11)]{HHH99}).
Recall from these references that a channel $\mathcal{N}_{A\rightarrow B}$ is
teleportation-simulable with associated resource state $\omega_{A^{\prime
}B^{\prime}}$ if there exists an LOCC channel $\mathcal{L}_{AA^{\prime
}B^{\prime}\rightarrow B}$ such that the following equality holds for all
input states $\rho_{A}$:%
\begin{equation}
\mathcal{N}_{A\rightarrow B}(\rho_{A})=\mathcal{L}_{AA^{\prime}B^{\prime
}\rightarrow B}(\rho_{A}\otimes\omega_{A^{\prime}B^{\prime}}).
\label{eq:tp-sim}%
\end{equation}
If a channel possesses this structure, then we arrive at the following upper
bound on the entanglement cost:

\begin{proposition}
\label{prop:ach-TP-sim}Let $\mathcal{N}_{A\rightarrow B}$ be a quantum channel
that is teleportation-simulable with associated resource state $\omega
_{A^{\prime}B^{\prime}}$, as defined in \eqref{eq:tp-sim}. Let $n,M\in
\mathbb{N}$ and $\varepsilon\in\left(  0,1\right)  $. Then there exists an
$(n,M,\sqrt{\varepsilon})$ sequential channel simulation code satisfying the
following bound%
\begin{equation}
\frac{1}{n}\log_{2}M\leq\frac{1}{n}E_{F,0}^{\varepsilon/2}(A^{\prime
n};B^{\prime n})_{\omega^{\otimes n}},
\end{equation}
where $\frac{1}{n}\log_{2}M$ is understood as the non-asymptotic entanglement
cost of the protocol, and $E_{F,0}^{\varepsilon/2}(A^{\prime n};B^{\prime
n})_{\omega^{\otimes n}}$ is the $\varepsilon/2$-smooth entanglement of
formation (EoF) \cite{BD11} recalled in Definition~\ref{def:smooth-EoF}\ below.
\end{proposition}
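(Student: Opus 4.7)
The plan is to exploit the teleportation-simulation structure to reduce the channel simulation problem to a state preparation problem. If Alice and Bob shared $n$ perfect copies $\omega_{A'B'}^{\otimes n}$ of the resource state, then for each $i\in\{1,\ldots,n\}$ they could apply the LOCC channel $\mathcal{L}_{AA'B'\to B}$ from \eqref{eq:tp-sim} to the $i$-th copy together with the current input, exactly implementing $\mathcal{N}$ each time. So the task reduces to approximately preparing $\omega^{\otimes n}$ via LOCC from a maximally entangled state $\Phi_M$ of as small a Schmidt rank $M$ as possible, which is precisely what the smooth entanglement of formation $E_{F,0}^{\varepsilon/2}(A'^{n};B'^{n})_{\omega^{\otimes n}}$ quantifies.

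First, I would invoke the definition (to be recalled in Definition~\ref{def:smooth-EoF}) of the smooth EoF to obtain an integer $M$ with $\log_{2}M=E_{F,0}^{\varepsilon/2}(A'^{n};B'^{n})_{\omega^{\otimes n}}$, a maximally entangled state $\Phi_{\overline{A}_{0}\overline{B}_{0}}$ of Schmidt rank $M$, and an LOCC channel $\mathcal{K}_{\overline{A}_{0}\overline{B}_{0}\to A'^{n}B'^{n}}$ such that the prepared state $\widetilde{\omega}_{A'^{n}B'^{n}}\equiv\mathcal{K}(\Phi_{\overline{A}_{0}\overline{B}_{0}})$ is within $\varepsilon/2$ of $\omega_{A'B'}^{\otimes n}$ in the appropriate smoothing distance. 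Second, I would construct the sequential simulation protocol as follows: the shared resource is $\Phi_{\overline{A}_{0}\overline{B}_{0}}$; the first LOCC round consists of applying $\mathcal{K}$ (producing $\widetilde{\omega}_{A'^{n}B'^{n}}$ stored in Alice's and Bob's labs) and then applying $\mathcal{L}_{A_{1}A'_{1}B'_{1}\to B_{1}}$ to the first tensor factor together with the input $A_{1}$; the $i$-th round ($i\ge 2$) applies $\mathcal{L}_{A_{i}A'_{i}B'_{i}\to B_{i}}$ to the $i$-th tensor factor together with the input $A_{i}$, passing the remaining factors forward as $\overline{A}_{i}\overline{B}_{i}$.

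Third, I would bound the strategy-norm error. For any sequential discrimination strategy $\{\rho,\mathcal{A},Q\}$ as in Section~\ref{sec:ent-cost-new-def}, both the actual final state $\rho_{R_{n}B_{n}}$ and the simulated final state $\tau_{R_{n}B_{n}}$ arise by applying exactly the same overall channel (composed of the tester's preparations and adaptive channels, interleaved with the LOCC maps $\mathcal{L}$) to either the ideal resource $\omega^{\otimes n}_{A'^{n}B'^{n}}$ or the approximation $\widetilde{\omega}_{A'^{n}B'^{n}}$. By monotonicity of the trace distance under this composite channel, $\tfrac{1}{2}\|\rho_{R_{n}B_{n}}-\tau_{R_{n}B_{n}}\|_{1}\le \tfrac{1}{2}\|\omega^{\otimes n}-\widetilde{\omega}\|_{1}$. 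Converting the $\varepsilon/2$-closeness in the smoothing distance (presumably purified distance, as is standard for Buscemi--Datta-type smooth quantities) to trace distance via the Fuchs--van de Graaf inequalities yields a trace-distance bound of $\sqrt{\varepsilon}$ after taking the square root, giving the claimed $(n,M,\sqrt{\varepsilon})$ sequential simulation code.

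The main obstacle is the bookkeeping between the smoothing distance used in the definition of $E_{F,0}^{\varepsilon/2}$ and the trace-distance criterion \eqref{eq:good-sim} required for a sequential simulation code; in particular, tracking the correct parameter that yields exactly $\sqrt{\varepsilon}$ rather than some other function of $\varepsilon$. A secondary conceptual point is that since $\widetilde{\omega}$ need not be a tensor product, the $n$ simulated channel uses may be correlated across rounds, so the reduction must be carried out at the level of the strategy norm on the \emph{shared resource state} rather than on individual channel uses; this is exactly where the data-processing argument above becomes essential, and why the sequential framework from Section~\ref{sec:ent-cost-new-def} handles the construction cleanly.
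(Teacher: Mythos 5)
Your proposal is correct and follows essentially the same route as the paper: reduce to one-shot entanglement dilution of $\omega_{A'B'}^{\otimes n}$ at rate $E_{F,0}^{\varepsilon/2}$, then apply $\mathcal{L}_{AA'B'\to B}$ round by round and bound the strategy norm by the trace distance between $\omega^{\otimes n}$ and its approximation via data processing. The only difference is cosmetic: rather than converting a purified-distance smoothing via Fuchs--van de Graaf, the paper simply invokes \cite[Theorem~1]{BD11}, which directly guarantees $\tfrac{1}{2}\Vert\omega^{\otimes n}-\widetilde{\omega}\Vert_{1}\leq\sqrt{\varepsilon}$ for the dilution protocol at that rate.
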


\begin{definition}
[Smooth EoF \cite{BD11}]\label{def:smooth-EoF}Let $\delta\in\left(
0,1\right)  $ and $\tau_{CD}$ be a bipartite state. Let $\mathcal{E}=\left\{
p_{X}(x),\phi_{CD}^{x}\right\}  $ denote a pure-state ensemble decomposition
of $\tau_{CD}$, meaning that $\tau_{CD}=\sum_{x}p_{X}(x)\phi_{CD}^{x}$, where
$\phi_{CD}^{x}$ is a pure state and $p_{X}$ is a probability distribution.
Define the conditional entropy of order zero $H_{0}(K|L)_{\omega}$ of a
bipartite state $\omega_{KL}$ as%
\begin{equation}
H_{0}(K|L)_{\omega}\equiv\max_{\sigma_{L}}  \log_{2}\operatorname{Tr}%
\{\Pi_{KL}^{\omega}(I_{K}\otimes\sigma_{L})\}  ,
\end{equation}
where $\Pi_{KL}^{\omega}$ denotes the projection onto the support of
$\omega_{KL}$ and $\sigma_{L}$ is a density operator. Then the $\delta$-smooth
entanglement of formation of $\tau_{CD}$ is given by%
\begin{equation}
E_{F,0}^{\delta}(C;D)_{\tau}\equiv\min_{\mathcal{E},\widetilde{\tau}_{XC}\in
B_{cq}^{\delta}(\tau_{XC})}H_{0}(C|X)_{\widetilde{\tau}},
\end{equation}
where the minimization is with respect to all pure-state ensemble decompositions
$\mathcal{E}$ of $\tau_{CD}$, $\tau_{XCD}=\sum_{x}p_{X}(x)|x\rangle\langle
x|_{X}\otimes\phi_{CD}^{x}$ is a labeled pure-state extension of $\tau_{CD}$,
and the $\delta$-ball $B_{cq}^{\delta}(\tau_{XC})$ of cq states for a cq state
$\tau_{XC}$ is defined as%
\begin{multline}
B_{cq}^{\delta}(\tau_{XC})\equiv\Big\{\omega_{XC}:\omega_{XC}\geq
0,\ \omega_{XC}=\sum_{x}|x\rangle\langle x|\otimes\omega_{C}^{x},\\
\left\Vert \omega_{XC}-\tau_{XC}\right\Vert _{1}\leq\delta\Big\}.
\end{multline}
The $\delta$-smooth entanglement of formation has the property that, for a
tensor-power state $\tau_{CD}^{\otimes n}$, the following limit holds
\cite[Theorem~2]{BD11}%
\begin{align}
\lim_{\delta\rightarrow0}\lim_{n\rightarrow\infty}\frac{1}{n}E_{F,0}^{\delta
}(C^{n};D^{n})_{\tau^{\otimes n}} &  =\lim_{n\rightarrow\infty}\frac{1}%
{n}E_{F}(C;D)_{\tau},\label{eq:one-shot-cost-tens-pow}\\
&  = E_{C}(\tau_{CD}).
\end{align}
where the latter quantity denotes the entanglement cost of the state
$\tau_{CD}$ \cite{HHT01}.
\end{definition}

\begin{proof}
[Proof of Proposition~\ref{prop:ach-TP-sim}]The approach for an
$(n,M,\varepsilon)$\ sequential channel simulation consists of the following steps:

First, employ the one-shot entanglement cost protocol from \cite[Theorem~1]%
{BD11}, which consumes a maximally entangled state $\Phi_{\overline{A}%
_{0}\overline{B}_{0}}$ of Schmidt rank $M$ along with an LOCC\ channel
$\mathcal{P}_{\overline{A}_{0}\overline{B}_{0}\rightarrow A^{\prime
n}B^{\prime n}}$ to generate $n$ approximate copies of the resource state
$\omega_{A^{\prime}B^{\prime}}$. In particular, using the maximally entangled
state $\Phi_{\overline{A}_{0}\overline{B}_{0}}$ with%
\begin{equation}
\log_{2}M=E_{F,0}^{\varepsilon/2}(A^{\prime n};B^{\prime n})_{\omega^{\otimes
n}},
\end{equation}
one can achieve the following approximation \cite[Theorem~1]{BD11}:%
\begin{equation}
\frac{1}{2}\left\Vert \omega_{A^{\prime}B^{\prime}}^{\otimes n}-\widetilde
{\omega}_{A^{\prime n}B^{\prime n}}\right\Vert _{1}\leq\sqrt{\varepsilon},
\end{equation}
where%
\begin{equation}
\widetilde{\omega}_{A^{\prime n}B^{\prime n}}\equiv\mathcal{P}_{\overline
{A}_{0}\overline{B}_{0}\rightarrow A^{\prime n}B^{\prime n}}(\Phi
_{\overline{A}_{0}\overline{B}_{0}}).
\end{equation}

Next, at the first instance in which the channel should be simulated, Alice
and Bob apply the LOCC channel $\mathcal{L}_{AA^{\prime}B^{\prime}\rightarrow
B}$ from \eqref{eq:tp-sim} to the $A_{1}^{\prime}$ and $B_{1}^{\prime}$
systems of $\widetilde{\omega}_{A^{\prime n}B^{\prime n}}$. For the second
instance, they apply the LOCC channel $\mathcal{L}_{AA^{\prime}B^{\prime
}\rightarrow B}$ from \eqref{eq:tp-sim} to the $A_{2}^{\prime}$ and
$B_{2}^{\prime}$ systems of $\widetilde{\omega}_{A^{\prime n}B^{\prime n}}$.
This continues for the next $n-2$ rounds of the sequential channel simulation.

By the data processing inequality for trace distance, it is guaranteed that
the following bound holds on the performance of this protocol for sequential
channel simulation:%
\begin{equation}
\frac{1}{2}\left\Vert (\mathcal{N})^{n}-(\mathcal{L})^{n}\right\Vert
_{\Diamond,n}\leq\frac{1}{2}\left\Vert \omega_{A^{\prime}B^{\prime}}^{\otimes
n}-\widetilde{\omega}_{A^{\prime n}B^{\prime n}}\right\Vert _{1}\leq
\sqrt{\varepsilon}. \label{eq:limitation-on-dist-seq-sim}%
\end{equation}
This follows because the distinguishability of the simulation from the actual
channel uses is limited by the distinguishability of the states $\omega
_{A^{\prime}B^{\prime}}^{\otimes n}$ and $\widetilde{\omega}_{A^{\prime
n}B^{\prime n}}$, due to the assumed structure of the channel in
\eqref{eq:tp-sim}, as well as the structure of the sequential channel simulation.
\end{proof}

By applying definitions, the bound in Proposition~\ref{prop:ach-TP-sim},
taking the limits $n\rightarrow\infty$ and then $\varepsilon\rightarrow0$
(with $M=2^{n[R+\delta]}$ for a fixed rate $R$ and arbitrary $\delta>0$), and
applying \eqref{eq:one-shot-cost-tens-pow}, we conclude the following statement:

\begin{corollary}
Let $\mathcal{N}_{A\rightarrow B}$ be a quantum channel that is
teleportation-simulable with associated resource state $\omega_{A^{\prime
}B^{\prime}}$, as defined in \eqref{eq:tp-sim}. Then the entanglement cost of
the channel $\mathcal{N}$ is never larger than the entanglement cost of the
resource state $\omega_{A^{\prime}B^{\prime}}$:%
\begin{equation}
E_{C}(\mathcal{N})\leq E_{C}(\omega_{A^{\prime}B^{\prime}}).
\label{eq:channel-cost-less-than-resource-cost}%
\end{equation}

\end{corollary}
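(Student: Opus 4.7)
The plan is a direct combination of the one-shot achievability bound in Proposition~\ref{prop:ach-TP-sim} with the asymptotic property of the smooth entanglement of formation recorded in \eqref{eq:one-shot-cost-tens-pow}. To establish \eqref{eq:channel-cost-less-than-resource-cost}, I would fix an arbitrary rate $R > E_{C}(\omega_{A^{\prime}B^{\prime}})$ together with arbitrary $\delta > 0$, and then exhibit, for every sufficiently large $n$, an $(n, 2^{n[R+\delta]}, \varepsilon_n)$ sequential channel simulation code for $\mathcal{N}_{A\rightarrow B}$ with $\varepsilon_n \to 0$.

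The construction is exactly the one already provided by Proposition~\ref{prop:ach-TP-sim}: for each $\varepsilon' \in (0,1)$, that proposition produces an $(n, M_n, \sqrt{\varepsilon'})$ sequential code satisfying
\begin{equation}
\frac{1}{n}\log_{2} M_n \;\leq\; \frac{1}{n}\,E_{F,0}^{\varepsilon'/2}(A^{\prime n}; B^{\prime n})_{\omega^{\otimes n}}.
\end{equation}
It therefore suffices to argue that the right-hand side can be made at most $R + \delta$ while $\sqrt{\varepsilon'}$ is made arbitrarily small. For this I would invoke \eqref{eq:one-shot-cost-tens-pow}, which asserts
\begin{equation}
\lim_{\delta'\to 0}\lim_{n\to\infty}\frac{1}{n}\,E_{F,0}^{\delta'}(A^{\prime n}; B^{\prime n})_{\omega^{\otimes n}} \;=\; E_{C}(\omega_{A^{\prime}B^{\prime}}) \;<\; R.
\end{equation}
A standard diagonal extraction then allows me to pick $\varepsilon'$ sufficiently small (so that the smoothing parameter $\varepsilon'/2$ is small enough to bring the inner limit below $R + \delta/2$) and subsequently $n$ large enough to push the non-asymptotic quantity $\frac{1}{n} E_{F,0}^{\varepsilon'/2}(A^{\prime n}; B^{\prime n})_{\omega^{\otimes n}}$ below $R + \delta$. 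The resulting simulation error $\sqrt{\varepsilon'}$ can simultaneously be made arbitrarily small. By the definition of achievable rate for sequential channel simulation, $R$ is achievable, hence $E_{C}(\mathcal{N}) \leq R$; taking $R \searrow E_{C}(\omega_{A^{\prime}B^{\prime}})$ yields the claim.

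There is no real obstacle here. The one-shot step, which combines the structural reduction \eqref{eq:tp-sim} with data processing for the trace distance via the bound \eqref{eq:limitation-on-dist-seq-sim}, is already packaged in Proposition~\ref{prop:ach-TP-sim}, and the asymptotic convergence of smooth EoF to the entanglement cost is imported from \cite{BD11}. The only mildly delicate point is coordinating the joint $n\to\infty$ and $\varepsilon' \to 0$ limits so that both the simulation error and the smoothing parameter vanish without inflating the rate; this is the usual diagonal-sequence bookkeeping for asymptotic resource conversions and requires no new idea.
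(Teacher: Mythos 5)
Your argument is correct and is essentially identical to the paper's: the paper likewise obtains the corollary by feeding the one-shot bound of Proposition~\ref{prop:ach-TP-sim} into the double limit \eqref{eq:one-shot-cost-tens-pow} from \cite{BD11}, taking $n\rightarrow\infty$ and then the smoothing parameter to zero. Your write-up merely makes explicit the diagonal bookkeeping that the paper leaves implicit.
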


The above corollary captures the intuitive idea that if a single instance of
the channel $\mathcal{N}$ can be simulated via LOCC\ starting from a resource
state $\omega_{A^{\prime}B^{\prime}}$, then the entanglement cost of the
channel should not exceed the entanglement cost of the resource state. The
idea of the above proof is simply to prepare a large number $n$ of copies of
$\omega_{A^{\prime}B^{\prime}}$ approximately and then use these to simulate $n$ uses of the
channel $\mathcal{N}$, such that the simulation could not be distinguished
from $n$ uses of the channel $\mathcal{N}$ in any sequential test.

\subsection{The entanglement cost of resource-seizable,
teleportation-simulable channels}

\label{sec:eq-res-seize}In this section, I define teleportation-simulable
channels that are resource-seizable, meaning that one can seize the channel's
underlying resource state by the following procedure:

\begin{enumerate}
\item prepare a free, separable state,

\item input one of its systems to the channel, and then

\item post-process with a free, LOCC channel.
\end{enumerate}

\noindent This procedure is indeed related to the channel processing described
earlier in \eqref{eq:pre-post}. After that, I prove that the entanglement cost
of a resource-seizable channel is equal to the entanglement cost of its
underlying resource state.

\begin{definition}
[Resource-seizable channel]\label{def:res-seiz-ch}Let $\mathcal{N}%
_{A\rightarrow B}$ be a teleportation-simulable channel with associated
resource state $\omega_{A^{\prime}B^{\prime}}$, as defined in
\eqref{eq:tp-sim}. Suppose that there exists a separable input state
$\rho_{A_{M}AB_{M}}$ to the channel and a postprocessing LOCC channel
$\mathcal{D}_{A_{M}BB_{M}\rightarrow A^{\prime}B^{\prime}}$\ such that the
resource state $\omega_{A^{\prime}B^{\prime}}$ can be seized from the channel
$\mathcal{N}_{A\rightarrow B}$ as follows:%
\begin{equation}
\mathcal{D}_{A_{M}BB_{M}\rightarrow A^{\prime}B^{\prime}}(\mathcal{N}%
_{A\rightarrow B}(\rho_{A_{M}AB_{M}}))=\omega_{A^{\prime}B^{\prime}}.
\label{eq:seize}%
\end{equation}
Then we say that the channel is a resource-seizable, teleportation-simulable channel.
\end{definition}

In Appendix~\ref{app:res-seize-impl-image}, I discuss how resource-seizable channels are related to those
that are \textquotedblleft implementable from their image,\textquotedblright%
\ as defined in \cite[Appendix~A]{CM17}. In
Section~\ref{sec:resource-theory-gen}, I also discuss how to generalize the
notion of a resource-seizable channel to an arbitrary resource theory.

The main result of this section is the following simplifying form for the
entanglement cost of a resource-seizable channel (as defined above),
establishing that its entanglement cost in the asymptotic regime is the same
as the entanglement cost of the underlying resource state. Furthermore, for
these channels, the entanglement cost is not increased by the need to pass a
more stringest test for channel simulation as required in a sequential test.

\begin{theorem}
\label{thm:tp-sim}Let $\mathcal{N}_{A\rightarrow B}$ be a resource-seizable,
teleportation-simulable channel with associated resource state $\omega
_{A^{\prime}B^{\prime}}$, as given in Definition~\ref{def:res-seiz-ch}. Then
the entanglement cost of the channel $\mathcal{N}_{A\rightarrow B}$ is equal
to its parallel entanglement cost, which in turn is equal to the entanglement
cost of the resource state $\omega_{A^{\prime}B^{\prime}}$:%
\begin{equation}
E_{C}(\mathcal{N})=E_{C}^{(p)}(\mathcal{N})=E_{C}(\omega_{A^{\prime}B^{\prime
}}).
\end{equation}

\end{theorem}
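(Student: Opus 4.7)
The plan is to establish the sandwich $E_C(\omega_{A'B'}) \leq E_C^{(p)}(\mathcal{N}) \leq E_C(\mathcal{N}) \leq E_C(\omega_{A'B'})$, which collapses all three quantities to a common value. The right inequality is the corollary to Proposition~\ref{prop:ach-TP-sim}, and the middle inequality is \eqref{eq:sequential-greater-than-para}. Thus the only substantive task is to prove
$$E_C^{(p)}(\mathcal{N}) \geq E_C(\omega_{A'B'}),$$
and this is where the resource-seizing hypothesis is used.

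The idea is to convert any parallel channel simulation code for $\mathcal{N}^{\otimes n}$ into an LOCC dilution protocol for $\omega_{A'B'}^{\otimes n}$ by sandwiching it between the free operations that define resource-seizing. Suppose $R$ is achievable for parallel channel simulation, so that for every $\varepsilon,\delta>0$ and sufficiently large $n$ there is an $(n, 2^{n(R+\delta)},\varepsilon)$ code, with simulating channel $\mathcal{P}_{A^n \to B^n}$ built from an LOCC map $\mathcal{L}$ and a maximally entangled state of Schmidt rank $M = 2^{n(R+\delta)}$ satisfying $\tfrac{1}{2}\|\mathcal{N}^{\otimes n} - \mathcal{P}\|_\diamond \leq \varepsilon$. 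From this code I build a dilution protocol for $\omega^{\otimes n}$ in three steps: first, Alice and Bob use shared randomness and local preparations to produce $\rho_{A_M A B_M}^{\otimes n}$ (free, since $\rho_{A_M A B_M}$ is separable across the Alice/Bob cut); second, they invoke the simulation by consuming the $M$-dimensional maximally entangled state and applying $\mathcal{L}$ to the $n$ copies of $A$; third, they apply the LOCC postprocessing $\mathcal{D}^{\otimes n}_{A_M B B_M \to A' B'}$. Had $\mathcal{N}^{\otimes n}$ been used in step two instead of $\mathcal{P}$, the composite protocol would output exactly $\omega_{A'B'}^{\otimes n}$ by \eqref{eq:seize}; since the preparation and postprocessing are fixed CPTP maps, monotonicity of the trace norm transports the diamond-norm bound on $\mathcal{N}^{\otimes n} - \mathcal{P}$ to a trace-norm bound $\tfrac{1}{2}\|\omega^{\otimes n} - \widetilde{\omega}\|_1 \leq \varepsilon$ on the actual output $\widetilde{\omega}$.

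This exhibits an $\varepsilon$-close LOCC dilution of $\omega^{\otimes n}$ consuming $n(R+\delta)$ ebits, so $R$ is likewise achievable for the entanglement cost of the state $\omega_{A'B'}$. Taking the infimum over achievable rates then yields $E_C^{(p)}(\mathcal{N}) \geq E_C(\omega_{A'B'})$ and closes the sandwich. The one delicate point to check, rather than a real obstacle, is that no ebits sneak into the construction outside of the simulation subroutine: the separable input truly is free under LOCC, and the postprocessing $\mathcal{D}$ is LOCC by hypothesis, so all entanglement consumption is accounted for by $\log_2 M$. Everything else reduces to unpacking definitions together with the known asymptotic convergence of one-shot dilution rates for i.i.d.\ states to $E_C(\omega_{A'B'})$ from \cite{HHT01,BD11}.
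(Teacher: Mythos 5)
Your proposal is correct, and the sandwich structure is the same as the paper's, but the substantive inequality $E_{C}^{(p)}(\mathcal{N})\geq E_{C}(\omega_{A^{\prime}B^{\prime}})$ is proved by a genuinely different route. The paper argues entropically: it invokes the characterization $E_{C}^{(p)}(\mathcal{N})=\lim_{n}\frac{1}{n}E_{F}(\mathcal{N}^{\otimes n})$ from \cite{BBCW13}, feeds the pure components $[\psi_{A_{M}A}^{x}]^{\otimes n}$ of the separable seizing input into $\mathcal{N}^{\otimes n}$, and then uses convexity and LOCC monotonicity of the entanglement of formation together with \eqref{eq:seize} to get $E_{F}(\mathcal{N}^{\otimes n})\geq E_{F}(A^{\prime n};B^{\prime n})_{\omega^{\otimes n}}$ for every finite $n$, before regularizing via \cite{HHT01}. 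You instead give a purely operational reduction: any $(n,M,\varepsilon)$ parallel simulation code, pre-composed with the free (LOCC) preparation of the separable state $\rho_{A_{M}AB_{M}}^{\otimes n}$ and post-composed with $\mathcal{D}^{\otimes n}$, is an $\varepsilon$-approximate LOCC dilution protocol for $\omega_{A^{\prime}B^{\prime}}^{\otimes n}$ consuming exactly $\log_{2}M$ ebits, with the error controlled by data processing of the trace norm under the diamond-norm guarantee (noting the standard fact that the sup defining the diamond norm over pure states with $R\cong A^{n}$ already dominates the mixed input $\rho^{\otimes n}$ with its larger reference systems $A_{M}^{n}B_{M}^{n}$). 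Your route has the advantage of working directly at the level of the operational definitions of $E_{C}^{(p)}(\mathcal{N})$ and $E_{C}(\omega)$, bypassing the regularized-EoF formulas entirely for this direction, which makes it more self-contained and more transparently portable to the general resource-theoretic setting of Section~\ref{sec:resource-theory-gen}; the paper's entropic route yields the finite-$n$ inequality $\frac{1}{n}E_{F}(\mathcal{N}^{\otimes n})\geq\frac{1}{n}E_{F}(A^{\prime n};B^{\prime n})_{\omega^{\otimes n}}$ as a byproduct, which is a slightly stronger intermediate statement. Both arguments are sound.
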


\begin{proof}
Consider from \eqref{eq:sequential-greater-than-para}\ that%
\begin{equation}
E_{C}(\mathcal{N})\geq E_{C}^{(p)}(\mathcal{N})=\lim_{n\rightarrow\infty}%
\frac{1}{n}E_{F}(\mathcal{N}^{\otimes n}).
\label{eq:seq-greater-than-parallel-1}%
\end{equation}
Let $\psi_{RA^{n}}\equiv\psi_{RA_{1}\cdots A_{n}}$ be an arbitrary pure input
state to consider at the input of the tensor-power channel $(\mathcal{N}%
_{A\rightarrow B})^{\otimes n}$, leading to the state%
\begin{equation}
\sigma_{RB^{n}}\equiv(\mathcal{N}_{A\rightarrow B})^{\otimes n}(\psi
_{RA_{1}\cdots A_{n}}).
\end{equation}
From the assumption that the channel is teleportation-simulable with
associated resource state $\omega_{A^{\prime}B^{\prime}}$, we have from
\eqref{eq:tp-sim}\ that%
\begin{equation}
\sigma_{RB^{n}}=(\mathcal{L}_{AA^{\prime}B^{\prime}\rightarrow B})^{\otimes
n}(\psi_{RA^{n}}\otimes\omega_{A^{\prime}B^{\prime}}^{\otimes n})
\end{equation}
Then%
\begin{align}
E_{F}(R;B^{n})_{\sigma}  &  \leq E_{F}(RA^{n}A^{\prime n};B^{\prime n}%
)_{\psi\otimes\omega^{\otimes n}}\\
&  =E_{F}(A^{\prime n};B^{\prime n})_{\omega^{\otimes n}},
\end{align}
where the inequality follows from LOCC\ monotonicity of the entanglement of
formation. Since the bound holds for an arbitrary input state, we conclude
that the following inequality holds for all $n\in\mathbb{N}$:%
\begin{equation}
\frac{1}{n}E_{F}(\mathcal{N}^{\otimes n})\leq\frac{1}{n}E_{F}(A^{\prime
n};B^{\prime n})_{\omega^{\otimes n}}.
\end{equation}
Now taking the limit $n\rightarrow\infty$, we conclude that%
\begin{equation}
E_{C}^{(p)}(\mathcal{N})\leq E_{C}(\omega_{A^{\prime}B^{\prime}}).
\label{eq:Tp-sim-ineq}%
\end{equation}

To see the other inequality, let a decomposition of the separable input state
$\rho_{A_{M}AB_{M}}$ be given by%
\begin{equation}
\rho_{A_{M}AB_{M}}=\sum_{x}p_{X}(x)\psi_{A_{M}A}^{x}\otimes\phi_{B_{M}}^{x}.
\end{equation}
Considering that $[\psi_{A_{M}A}^{x}]^{\otimes n}$ is a particular input to
the tensor-power channel $(\mathcal{N}_{A\rightarrow B})^{\otimes n}$, we
conclude that%
\begin{equation}
E_{F}(\mathcal{N}^{\otimes n})\geq E_{F}(A_{M}^{n};B^{n})_{[\mathcal{N}%
(\psi^{x})]^{\otimes n}}.
\end{equation}
Since this holds for all $x$, we have that%
\begin{align}
E_{F}(\mathcal{N}^{\otimes n})  &  \geq\sum_{x}p_{X}(x)E_{F}(A_{M}^{n}%
;B^{n})_{[\mathcal{N}(\psi^{x})]^{\otimes n}}\nonumber\\
&  =\sum_{x}p_{X}(x)E_{F}(A_{M}^{n};B^{n}B_{M}^{n})_{[\mathcal{N}(\psi
^{x})\otimes\phi^{x}]^{\otimes n}}\nonumber\\
&  \geq E_{F}(A_{M}^{n};B^{n}B_{M}^{n})_{[\mathcal{N}(\rho)]^{\otimes n}%
}\nonumber\\
&  \geq E_{F}(A^{\prime n};B^{\prime n})_{\omega^{\otimes n}},
\end{align}
where the equality follows because introducing a product state locally does
not change the entanglement, the second inequality follows from convexity of
entanglement of formation \cite{BDSW96}, and the last inequality follows from
the assumption in \eqref{eq:seize} and the LOCC\ monotonicity of the
entanglement of formation. Since the inequality holds for all $n\in\mathbb{N}%
$, we can divide by $n$ and take the limit $n\rightarrow\infty$ to conclude
that%
\begin{equation}
E_{C}^{(p)}(\mathcal{N})\geq E_{C}(\omega_{A^{\prime}B^{\prime}}),
\end{equation}
and in turn, from \eqref{eq:Tp-sim-ineq}, that%
\begin{equation}
E_{C}^{(p)}(\mathcal{N})=E_{C}(\omega_{A^{\prime}B^{\prime}}).
\end{equation}
Combining this equality with the inequalities in
\eqref{eq:channel-cost-less-than-resource-cost} and
\eqref{eq:seq-greater-than-parallel-1}\ leads to the statement of the theorem.
\end{proof}

\section{Examples}

\label{sec:examples}

The equality in Theorem~\ref{thm:tp-sim} provides a formal expression for the
entanglement cost of any resource-seizable, teleportation-simulable channel,
given in terms of the entanglement cost of the underlying resource state
$\omega_{A^{\prime}B^{\prime}}$. Due to the fact that the entanglement cost of
a state is generally not equal to its entanglement of formation \cite{H09}, it
could still be a significant challenge to compute the entanglement cost of
these special channels. However, for some special states, the equality
$E_{C}(\omega_{A^{\prime}B^{\prime}})=E_{F}(A^{\prime};B^{\prime})_{\omega}$
does hold, and I discuss several of these examples and related channels here.

Let us begin by recalling the notion of a covariant channel $\mathcal{N}%
_{A\rightarrow B}$\ \cite{Hol02}. For a group $G$ with unitary channel
representations $\{\mathcal{U}_{A}^{g}\}_{g}$ and $\{\mathcal{V}_{B}^{g}%
\}_{g}$ acting on the input system $A$ and output system $B$ of the channel
$\mathcal{N}_{A\rightarrow B}$, the channel $\mathcal{N}_{A\rightarrow B}$ is
covariant with respect to the group$~G$ if the following equality holds%
\begin{equation}
\mathcal{N}_{A\rightarrow B}\circ\mathcal{U}_{A}^{g}=\mathcal{V}_{B}^{g}%
\circ\mathcal{N}_{A\rightarrow B}.
\label{eq:cov-to-help}
\end{equation}
If the averaging channel is such that $\frac{1}{\left\vert G\right\vert }%
\sum_{g}\mathcal{U}_{A}^{g}(X)=\operatorname{Tr}[X]I/\left\vert A\right\vert $
(implementing a unitary one-design), then we simply say that the channel
$\mathcal{N}_{A\rightarrow B}$ is covariant.

Then from \cite[Section~7]{CDP09} (see also \cite[Appendix~A]{WTB16}), we
conclude that any covariant channel is teleportation-simulable with associated
resource state given by the Choi state of the channel, i.e., $\omega
_{A^{\prime}B^{\prime}}=\mathcal{N}_{A\rightarrow B}(\Phi_{A^{\prime}A})$. As
such, covariant channels are resource-seizable, so that the equality in
Theorem~\ref{thm:tp-sim}\ applies to all covariant channels.
\textit{\textbf{Thus, the entanglement cost of a covariant channel is equal to
the entanglement cost of its Choi state.}} In spite of this reduction, it
could still be a great challenge to compute formulas for the entanglement cost
of these channels, due to the fact that the entanglement of formation is not
necessarily equal to the entanglement cost for the Choi states of these
channels. For example, the entanglement cost of an isotropic state \cite{W89,PhysRevA.59.4206},
which is the Choi state of a depolarizing channel, is not known. In the next few subsections, I detail some example channels for which it is possible to characterize their entanglement cost.

\subsection{Erasure channels}

A simple example of a channel that is covariant is the quantum erasure
channel, defined as \cite{GBP97}%
\begin{equation}
\mathcal{E}^{q}(\rho)\equiv(1-q)\rho+q|e\rangle\langle e|,
\end{equation}
where $\rho$ is a $d$-dimensional input state, $q\in\left[  0,1\right]  $ is
the erasure probability, and $|e\rangle\langle e|$ is a pure erasure state
orthogonal to any input state, so that the output state has $d+1$ dimensions.
By the remark above, we conclude that $E_{C}(\mathcal{E}^{q})=E_{C}%
(\mathcal{E}_{A\rightarrow B}^{q}(\Phi_{RA}))$, and so determining the
entanglement cost boils down to determining the entanglement cost of the Choi
state%
\begin{equation}
\mathcal{E}_{A\rightarrow B}^{q}(\Phi_{RA})=\left(  1-q\right)  \Phi
_{RA}+\frac{I_{R}}{d}\otimes|e\rangle\langle e|.
\end{equation}
An obvious pure-state decomposition for $\mathcal{E}_{A\rightarrow B}^{q}%
(\Phi_{RA})$ (see \cite[Eqs.~(93)--(95)]{BBCW13}) leads to%
\begin{align}
E_{C}(\mathcal{E}_{A\rightarrow B}^{q}(\Phi_{RA}))  &  \leq E_{F}%
(\mathcal{E}_{A\rightarrow B}^{q}(\Phi_{RA}))\\
&  \leq\left(  1-q\right)  \log_{2}d.
\end{align}
As it turns out, these inequalities are tight, due to an operational argument. In
particular, the distillable entanglement of $\mathcal{E}_{A\rightarrow B}%
^{q}(\Phi_{RA})$ is exactly equal to $\left(  1-q\right)  \log_{2}d$
\cite{PhysRevLett.78.3217}, and due to the operational fact that the
distillable entanglement of a state cannot exceed its entanglement cost
\cite{BDSW96}, we conclude that $E_{C}(\mathcal{E}_{A\rightarrow B}^{q}%
(\Phi_{RA}))=\left(  1-q\right)  \log_{2}d$, and in turn that%
\begin{equation}
E_{C}(\mathcal{E}^{q})=E_{C}^{(p)}(\mathcal{E}^{q})=\left(  1-q\right)
\log_{2}d.
\end{equation}
This result generalizes the finding from \cite{BBCW13}, which is that
$E_{C}^{(p)}(\mathcal{E}^{q})=\left(  1-q\right)  \log_{2}d$, and so we
conclude that for erasure channels, the entanglement cost of these channels is
not increased by the need to pass a more stringest test for channel
simulation, as posed by a sequential test.
Note also that the distillable entanglement of the erasure channel is given by $E_D(\mathcal{E}^{q})=\left(  1-q\right)  \log_{2}d$, due to \cite{PhysRevLett.78.3217}.

The fact that the distillable entanglement of an erasure channel is equal to
its entanglement cost, implies that, if we restrict the resource theory of
entanglement for
quantum channels to consist solely of erasure channels, then it is reversible.
By this, we mean that, in the limit of many channel uses, if one begins with
an erasure channel of parameter $q$ and distills ebits from it at a rate
$(1-q)\log_{2} d$, then one can subsequently use these distilled ebits to
simulate the same erasure channel again. As we see below, this reversibility
breaks down when considering other channels.

\subsection{Dephasing channels}

A $d$-dimensional dephasing channel has the following action:%
\begin{equation}
\mathcal{D}^{\mathbf{q}}(\rho)=\sum_{i=0}^{d-1}q_{i}Z^{i}\rho Z^{i\dag},
\end{equation}
where $\mathbf{q}$ is a vector containing the probabilities $q_{i}$ and $Z$
has the following action on the computational basis $Z|x\rangle=e^{2\pi
ix/d}|x\rangle$. This channel is covariant with respect to the
Heisenberg--Weyl group of unitaries, which are well known to be a unitary
one-design. Furthermore, as remarked previously (e.g., in \cite{TWW17}), the
Choi state $\mathcal{D}_{A\rightarrow B}^{\mathbf{q}}(\Phi_{RA})$ of this
channel is a maximally correlated state \cite{Rai99,Rai01}, which has the form%
\begin{equation}
\sum_{i,j}\alpha_{i,j}|i\rangle\langle j|_{R}\otimes|i\rangle\langle j|_{B}.
\end{equation}
As such, Theorem~\ref{thm:tp-sim} applies to these channels, implying that%
\begin{align}
E_{C}(\mathcal{D}^{\mathbf{q}})  &  =E_{C}^{(p)}(\mathcal{D}^{\mathbf{q}%
})=E_{C}(\mathcal{D}_{A\rightarrow B}^{\mathbf{q}}(\Phi_{RA}))\\
&  =E_{F}(\mathcal{D}_{A\rightarrow B}^{\mathbf{q}}(\Phi_{RA})),
\end{align}
with the final equality resulting from the fact that the entanglement cost is
equal to the entanglement of formation for maximally correlated states
\cite{VDC02,HSS03}. In \cite[Section~VI-A]{HSS03}, an optimization procedure
is given for calculating the entanglement of formation of maximally correlated
states, which is simpler than that needed from the definition of entanglement
of formation.

A qubit dephasing channel with a single dephasing parameter $q\in\left[
0,1\right]  $ is defined as%
\begin{equation}
\mathcal{D}^{q}(\rho)=\left(  1-q\right)  \rho+qZ\rho Z.
\end{equation}
For the Choi state of this channel, there is an explicit formula for its
entanglement of formation \cite{W98}, from which we can conclude that%
\begin{equation}
E_{C}(\mathcal{D}^{q})=E_{C}^{(p)}(\mathcal{D}^{q})=h_{2}(1/2+\sqrt{q\left(
1-q\right)  }), \label{eq:qubit-dephasing-cost}%
\end{equation}
where
\begin{equation}
h_{2}(x)\equiv-x\log_{2}x-\left(  1-x\right)  \log_{2}(1-x) \label{eq:bin-ent}%
\end{equation}
is the binary entropy. The equality in \eqref{eq:qubit-dephasing-cost} solves
an open question from \cite{BBCW13}, where it had only been shown that
$E_{C}^{(p)}(\mathcal{D}^{q})\leq h_{2}(1/2+\sqrt{q\left(  1-q\right)  })$.

The results of \cite[Eq.~(57)]{BDSW96} and \cite[Eq.~(8.114)]{H06book} gave a
simple formula for the distillable entanglement of the qubit dephasing
channel:
\begin{equation}
E_{D}(\mathcal{D}^{q}) =1-h_{2}(q).
\end{equation}
Thus, this formula and the formula in \eqref{eq:qubit-dephasing-cost}
demonstrate that the resource theory of entanglement for these channels is
irreversible. That is, if one started from a qubit dephasing channel with
parameter $q\in(0,1)$ and distilled ebits from it at the ideal rate of
$1-h_{2}(q)$, and then subsequently wanted to use these ebits to simulate a
qubit dephasing channel with the same parameter, this is not possible, because
the rate at which ebits are distilled is not sufficient to simulate the
channel again. Figure~\ref{fig:EoF-dephasing} compares the formulas for
entanglement cost and distillable entanglement of the qubit dephasing channel,
demonstrating that there is a noticeable gap between them. At $q=1/2$, the qubit dephasing channel is a completely dephasing, classical channel, so that
$E_{C}(\mathcal{D}^{1/2}) = E_{D}(\mathcal{D}^{1/2})=0$. Thus, a reasonable approximation to the difference is given by a Taylor expansion about $q=1/2$:
\begin{multline}
E_{C}(\mathcal{D}^{q}) -
E_{D}(\mathcal{D}^{q}) = \\
\frac{1}{\ln 2}\left[2\ln\!\left(\frac{1}{|q-\tfrac12|}\right)- 1\right](q-\tfrac12)^2 + O((q-\tfrac12)^4).
\end{multline}

\begin{figure}[ptb]
\begin{center}
\includegraphics[width=3.4in]
{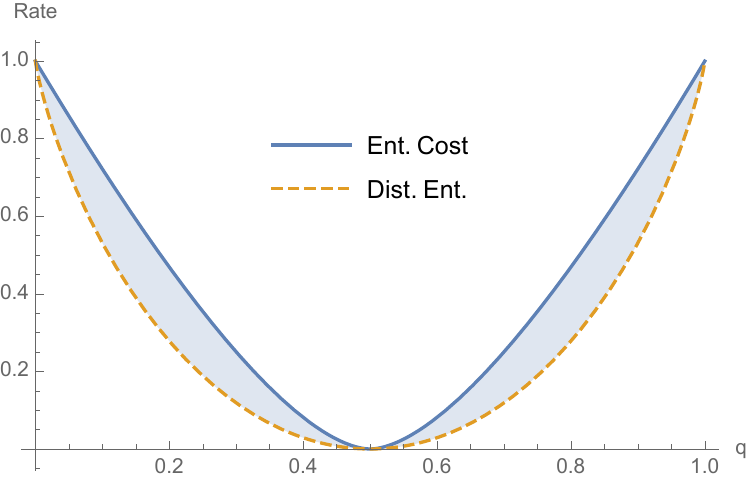}
\end{center}
\caption{Entanglement cost $E_{C}(\mathcal{D}^{q})=h_{2}(1/2+\sqrt{q\left(
1-q\right)  })$ and distillable entanglement $E_{D}(\mathcal{D}^{q})
=1-h_{2}(q)$ of the qubit dephasing channel $\mathcal{D}^{q}$ as a function of
the dephasing parameter $q \in[0,1]$, with the shaded area demonstrating the
gap between them. The units for rate on the vertical axis are ebits per channel use, and $q$ on the horizontal axis is dimensionless.}%
\label{fig:EoF-dephasing}%
\end{figure}

\subsection{Werner--Holevo channels}

A particular kind of Werner--Holevo channel performs the following
transformation on a $d$-dimensional input state $\rho$ \cite{WH02}:%
\begin{equation}
\mathcal{W}^{(d)}(\rho)\equiv\frac{1}{d-1}\left(  \operatorname{Tr}%
\{\rho\}I-T(\rho)\right)  ,
\end{equation}
where $T$ denotes the transpose map $T(\cdot)=\sum_{i,j}|i\rangle\langle
j|(\cdot)|i\rangle\langle j|$. As observed in \cite[Section~II]{WH02} and \cite[Section~VII]{LM15}, this
channel is covariant, and so an immediate consequence of \cite[Section~7]%
{CDP09} is that these channels are teleportation simulable with associated
resource state given by their Choi state. The latter fact was explicitly
observed in \cite[Sections VI and~VII]{LM15}, as well as \cite[Appendix~A]{CM17}. Furthermore, its Choi state is
given by%
\begin{equation}
\mathcal{W}_{A\rightarrow B}^{(d)}(\Phi_{RA})=\alpha_{d}\equiv\frac
{1}{d\left(  d-1\right)  }\left(  I_{RB}-F_{RB}\right)  ,
\label{eq:Choi-WH}
\end{equation}
where $\alpha_{d}$ is the antisymmetric state, i.e., the maximally mixed state
on the antisymmetric subspace of a $d\times d$ quantum system and
$F_{RB}\equiv\sum_{i,j}|i\rangle\langle j|_{R}\otimes|j\rangle\langle i|_{B}$
denotes the unitary swap operator. Theorem~\ref{thm:tp-sim} thus applies to
these channels, and we find that%
\begin{align}
E_{C}(\mathcal{W}^{(d)})  &  =E_{C}^{(p)}(\mathcal{W}^{(d)})\\
&  =E_{C}(\alpha_{d})\\
&  \geq\log_{2}(4/3)\approx0.415,
\label{eq:lower-bnd-CVW-cost-WH}
\end{align}
with the inequality following from \cite[Theorem~2]{CSW12}.\ We also have that%
\begin{equation}
E_{C}(\mathcal{W}^{(d)})=E_{C}^{(p)}(\mathcal{W}^{(d)})=E_{C}(\alpha_{d})\leq
E_{F}(\alpha_{d})=1,
\end{equation}
with the last equality following from the result stated in \cite[Section~IV-C]%
{VW01}. For $d=3$, the entanglement cost $E_{C}(\alpha_{3})$ is known to be
equal to exactly one ebit \cite{F03}:%
\begin{equation}
E_{C}(\mathcal{W}^{(3)})=E_{C}^{(p)}(\mathcal{W}^{(3)})=1.
\label{eq:yura}
\end{equation}

It was observed in \cite[Appendix~A]{CM17} (as well as \cite{W16priv}) that the distillable entanglement of the Werner--Holevo channel $\mathcal{W}^{(d)}$ is equal to the distillable entanglement of its Choi state:
\begin{equation}
E_D(\mathcal{W}^{(d)}) = E_D(\alpha_d).
\end{equation}
Thus, an immediate consequence of \cite[Theorem~1 and Eq.~(5)]{CSW12} is that
\begin{align}
\label{eq:CVW-bnd}
E_D(\mathcal{W}^{(d)})
& \leq  \left. \begin{cases}
                               \log_2\frac{d+2}{d} & \text{ if } d \text{ is even}    \\
                               \frac{1}{2}\log_2\frac{d+3}{d-1} & \text{ if } d \text{ is odd}
                             \end{cases}
                       \right\} \\
                       & = \frac{2}{d\cdot \ln 2} \left(1-\frac{1}{d}\right)
                       + O\!\left(\frac{1}{d^3}\right).
\label{eq:CVW-Taylor-expand}
\end{align}

We can now observe that the resource theory of entanglement is generally not reversible when restricted to Werner--Holevo channels. 
The case $d=2$ is somewhat trivial: in this case, one can verify that the channel $\mathcal{W}^{(2)}$ is a unitary channel, equivalent to acting on the input state with the Pauli $Y$ unitary. Thus, for $d=2$, the channel is a noiseless qubit channel, and we trivially have that 
\begin{equation}
E_D(\mathcal{W}^{(2)}) =  
E_C(\mathcal{W}^{(2)}) = 1,
\label{eq:exact-WH-d=2}
\end{equation}
so that the resource theory of entanglement is clearly reversible in this case.
For $d=3$, the upper bound on distillable entanglement in \eqref{eq:CVW-bnd} evaluates to $\frac{1}{2}\log_2 (3) \approx 0.793$, while the entanglement cost is equal to one, as stated in \eqref{eq:yura}, so that
\begin{equation}
E_D(\mathcal{W}^{(3)}) \leq 
0.793 < 1 = E_C(\mathcal{W}^{(3)}).
\end{equation}
Thus, the resource theory of entanglement is not reversible for $\mathcal{W}^{(3)}$.
For $d\in\{4,5,6\}$, the upper bound in 
\eqref{eq:CVW-bnd} and the lower bound in 
\eqref{eq:lower-bnd-CVW-cost-WH} are not strong enough to make a definitive statement (interestingly, the bounds in 
\eqref{eq:CVW-bnd} and  
\eqref{eq:lower-bnd-CVW-cost-WH} are actually equal for $d=6$). Then for $d\geq 7$, the upper bound in 
\eqref{eq:CVW-bnd} and the lower bound in 
\eqref{eq:lower-bnd-CVW-cost-WH} are strong enough to conclude that
\begin{equation}
E_D(\mathcal{W}^{(d)}) < E_C(\mathcal{W}^{(d)}),
\end{equation}
so that the resource theory is not reversible for $\mathcal{W}^{(d)}$. Figure~\ref{fig:WH-ch} summarizes these observations.

\begin{figure}[ptb]
\begin{center}
\includegraphics[width=3.4in]
{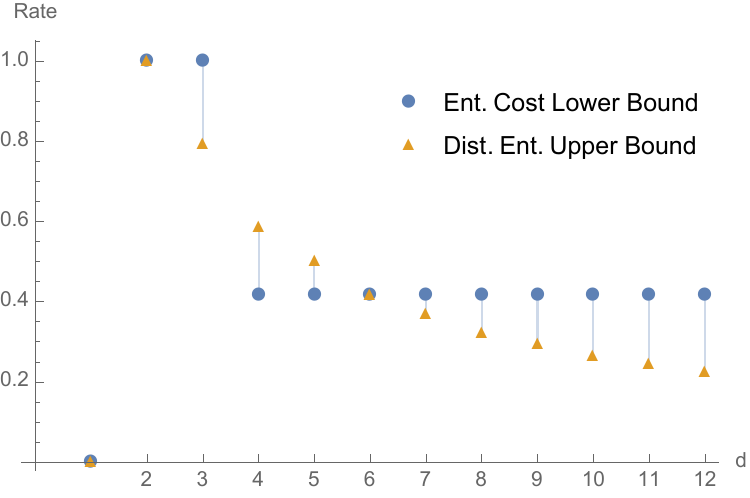}
\end{center}
\caption{Lower bound on the entanglement cost $E_{C}(\mathcal{W}^{(d)})$ from \eqref{eq:lower-bnd-CVW-cost-WH}  and upper bound on distillable entanglement $E_{D}(\mathcal{W}^{(d)})
$ from \eqref{eq:CVW-bnd} for the Werner--Holevo channel
$\mathcal{W}^{(d)}$ as a function of
the parameter $d\geq 4$, with the lines connecting the dots demonstrating the
gap between them. For $d=2$, the points are exact due to \eqref{eq:exact-WH-d=2}, and reversibility holds. For $d=3$,
the entanglement cost
$E_{C}(\mathcal{W}^{(3)})$
is exactly equal to one, as recalled in \eqref{eq:yura}, while
\eqref{eq:CVW-bnd} applies to
$E_{D}(\mathcal{W}^{(3)})$,
and the resource theory is irreversible.
For $d\in \{4,5,6\}$, the bounds are not strong enough to reach a conclusion about reversibility. For $d\geq 7$, the resource theory is irreversible, and the gap
$E_{C}(\mathcal{W}^{(d)})-
E_{D}(\mathcal{W}^{(d)})$
grows at least as large as the difference of \eqref{eq:lower-bnd-CVW-cost-WH} and~\eqref{eq:CVW-Taylor-expand}. The units for rate on the vertical axis are ebits per channel use, and $d$ on the horizontal axis is dimensionless.}
\label{fig:WH-ch}%
\end{figure}

\subsection{Epolarizing channels (complements of depolarizing channels)}

The $d$-dimensional depolarizing channel is a common model of noise in quantum
information, transmitting the input state with probability $1-q\in\left[
0,1\right]  $ and replacing it with the maximally mixed state
$\pi\equiv\frac{I}{d}$
with probability~$q$:%
\begin{equation}
\Delta^{q}(\rho)=\left(  1-q\right)  \rho+q\pi.
\end{equation}
According to
Stinespring's theorem \cite{S55}, every quantum channel $\mathcal{N}_{A\rightarrow B}$
can be realized by the action of some isometric channel $\mathcal{U}%
_{A\rightarrow BE}$ followed by a partial trace:%
\begin{equation}
\mathcal{N}_{A\rightarrow B}(\rho_{A})=\operatorname{Tr}_{E}\{\mathcal{U}%
_{A\rightarrow BE}(\rho_{A})\}.
\label{eq:iso-extend}
\end{equation}
Due to the partial trace and its invariance with respect to isometric channels
acting exclusively on the $E$ system, the extending channel $\mathcal{U}%
_{A\rightarrow BE}$ is not unique in general, but it is unique up to this
freedom. Then given an isometric channel $\mathcal{U}_{A\rightarrow BE}$
extending $\mathcal{N}_{A\rightarrow B}$ as in \eqref{eq:iso-extend}, the complementary channel
$\mathcal{N}_{A\rightarrow E}^{c}$ is defined by a partial trace over the
system $B$ and is interpreted physically as the channel from the input to the
environment:%
\begin{equation}
\mathcal{N}_{A\rightarrow E}^{c}(\rho_{A})=\operatorname{Tr}_{B}%
\{\mathcal{U}_{A\rightarrow BE}(\rho_{A})\}.
\end{equation}

Due to the fact that properties of the original channel are related to
properties of its complementary channel
\cite{KMNR07,ieee2005dev}, there has been significant interest in
understanding complementary channels. In this spirit, and due to the prominent
role of the depolarizing channel, researchers have studied its complementary
channels \cite{DFH06,Leung2017complementary}. In
\cite[Eq.~(3.6)]{DFH06}, the following form was given for a complementary channel of
$\Delta^{q}$:%
\begin{equation}
\rho\rightarrow S_{AF}^{q}\left(  \rho_{A}\otimes I_{F}\right)  S_{AF}^{q\dag
},
\end{equation}
where $I_{F}$ is a $d$-dimensional identity operator and%
\begin{multline}
S_{AF}^{q}\equiv \sqrt{\frac{q}{d}}I_{AF} \ + \\
\sqrt{d}\left(  -\frac{\sqrt{q}}{d}%
+\sqrt{1-q\left(  \frac{d^{2}-1}{d^{2}}\right)  }\right)  \Phi_{AF}.
\end{multline}
A channel complementary to $\Delta^{q}$ has been called an \textquotedblleft
epolarizing channel\textquotedblright\ in \cite{Leung2017complementary}.

An alternative complementary
channel, related to the above one by an isometry acting on the output systems~$AF$, but perhaps more intuitive,
is realized in the following way \cite[Eq.~(28)]{Leung2017complementary}. Consider the isometry $U_{A\rightarrow
SG_{1}G_{2}A}$ defined as%
\begin{multline}
U_{A\rightarrow SG_{1}G_{2}A}|\psi\rangle_{A}\equiv
\\
\text{C-SWAP}_{SG_{1}%
A}\left(  |\phi^{q}\rangle_{S}\otimes|\Phi\rangle_{G_{1}G_{2}}\otimes
|\psi\rangle_{A}\right)  ,
\label{eq:iso-extend-epolar}
\end{multline}
where the control qubit $|\phi^{q}\rangle_{S}\equiv\sqrt{1-q}|0\rangle
_{S}+\sqrt{q}|1\rangle_{S}$,
$|\Phi\rangle_{G_{1}G_{2}}$ is a maximally entangled state of Schmidt rank~$d$,
and the controlled-SWAP\ unitary is given by%
\begin{equation}
\text{C-SWAP}_{SG_{1}A}\equiv|0\rangle\langle0|_{S}\otimes I_{G_{1}%
A}+|1\rangle\langle1|_{S}\otimes\text{SWAP}_{G_{1}A},
\end{equation}
with SWAP$_{G_{1}A}$ denoting a unitary swap operation. By tracing over the
systems $SG_{1}G_{2}$, we recover the original depolarizing channel%
\begin{equation}
\Delta^{q}(\rho_{A})=\operatorname{Tr}_{SG_{1}G_{2}}\{U\rho_{A}U^{\dag}\}.
\end{equation}
Thus, by definition, a channel complementary to $\Delta^{q}$ is realized by%
\begin{equation}
\Lambda_{A\rightarrow SG_{1}G_{2}}^{q}(\rho_{A})\equiv\operatorname{Tr}%
_{A}\{U\rho_{A}U^{\dag}\},
\end{equation}
and in what follows, let us refer to $\Lambda_{A\rightarrow SG_{1}G_{2}}^{q}$
as \textit{the} epolarizing channel.

The isometry $U_{A\rightarrow SG_{1}G_{2}A}$ in \eqref{eq:iso-extend-epolar} is unitarily
covariant, in the sense that for an arbitrary unitary $V_{A}$ acting on the
input, we have that%
\begin{equation}
U_{A\rightarrow SG_{1}G_{2}A}V_{A}=\left(  V_{G_{1}}\otimes\overline{V}%
_{G_{2}}\otimes V_{A}\right)  U_{A\rightarrow SG_{1}G_{2}A}%
,\label{eq:cov-epol-isom}%
\end{equation}
where $\overline{V}$ denotes the complex conjugate of $V$. The identity in
\eqref{eq:cov-epol-isom}
follows because%
\begin{align}
& U_{A\rightarrow SG_{1}G_{2}A}V_{A}|\psi\rangle_{A} \notag\\
& =\text{C-SWAP}_{SG_{1}A}\left(  |\phi^{q}\rangle_{S}|\Phi\rangle_{G_{1}%
G_{2}}V_{A}|\psi\rangle_{A}\right) \notag \\
& =\text{C-SWAP}_{SG_{1}A}\left(  |\phi^{q}\rangle_{S}\left(  V_{G_{1}%
}\overline{V}_{G_{2}}\right)  |\Phi\rangle_{G_{1}G_{2}}V_{A}|\psi\rangle
_{A}\right)  \notag\\
& =\left(  V_{G_{1}}\otimes\overline{V}_{G_{2}}\otimes V_{A}\right)
\text{C-SWAP}_{SG_{1}A}\left(  |\phi^{q}\rangle_{S}|\Phi\rangle_{G_{1}G_{2}%
}|\psi\rangle_{A}\right) \notag \\
& =\left(  V_{G_{1}}\otimes\overline{V}_{G_{2}}\otimes V_{A}\right)
U_{A\rightarrow SG_{1}G_{2}A}|\psi\rangle_{A}.
\end{align}
The above analysis omits some tensor-product symbols for brevity. The third
equality uses the well known fact that $|\Phi\rangle_{G_{1}G_{2}}=\left(
V_{G_{1}}\otimes\overline{V}_{G_{2}}\right)  |\Phi\rangle_{G_{1}G_{2}}$. In
the fourth equality, we have exploited the facts that $\overline{V}_{G_{2}}$
commutes with C-SWAP$_{SG_{1}A}$ and that%
\begin{equation}
\text{SWAP}_{G_{1}A}\left(  V_{G_{1}}\otimes V_{A}\right)  =\left(  V_{G_{1}%
}\otimes V_{A}\right)  \text{SWAP}_{G_{1}A}.
\end{equation}
The covariance in \eqref{eq:cov-epol-isom}\ then implies that the epolarizing
channel is covariant in the following sense:%
\begin{multline}
(\Lambda_{A\rightarrow SG_{1}G_{2}}^{q}\circ\mathcal{V}_{A})(\rho
_{A})\\=(\left(  \mathcal{V}_{G_{1}}\otimes\overline{\mathcal{V}}_{G_{2}%
}\right)  \circ\Lambda_{A\rightarrow SG_{1}G_{2}}^{q})(\rho_{A}),
\end{multline}
where $\mathcal{V}$ denotes the unitary channel realized by the the unitary operator $V$.

As such, by the discussion after \eqref{eq:cov-to-help}, the epolarizing channel is a resource-seizable,
teleportation-simulable channel with associated resource state given by
$\Lambda_{A\rightarrow SG_{1}G_{2}}^{q}(\Phi_{RA})$. Thus, Theorem~\ref{thm:tp-sim} applies
to these channels, implying that the first two of the following equalities hold%
\begin{align}
E_{C}(\Lambda^{q})  & =E_{C}^{(p)}(\Lambda^{q})=E_{C}(\Lambda^{q}(\Phi
_{RA}))\\
& =E_{F}(\Lambda^{q}(\Phi_{RA}))\\
& =-\left(  1-q+\frac{q}{d}\right)  \log_{2}\!\left(  1-q+\frac{q}{d}\right)
\nonumber\\
& \qquad\quad-\left(  d-1\right)  \frac{q}{d}\log_{2}\!\left(  \frac{q}{d}\right)
.\label{eq:explicit-exp-min-out-depo}%
\end{align}
Let us now justify the final two equalities, which give a simple formula for
the entanglement cost of epolarizing channels. First, consider that the Choi state
$\Lambda_{A'\rightarrow SG_{1}G_{2}}^{q}(\Phi_{A'A})$\ of the epolarizing
channel is equal to the state resulting from sending in the maximally mixed
state to the isometric channel $\mathcal{U}_{A\rightarrow SG_{1}G_{2}A}$, defined from~\eqref{eq:iso-extend-epolar}:%
\begin{equation}
\Lambda_{A'\rightarrow SG_{1}G_{2}}^{q}(\Phi_{A'A})=\mathcal{U}_{A\rightarrow
SG_{1}G_{2}A}(\pi_{A}),
\label{eq:max-mixed-iso-epolar}
\end{equation}
where system $A'$ is isomorphic to $A$.
This equality is shown in Appendix~\ref{sec:relation-Choi-comp-max-mixed-iso}. As such, then \cite[Theorem~3]{Matsumoto2004} applies,
as discussed in Example~6 therein, and as a consequence, we can conclude the
second and third equalities in the following, with the bipartite cut of systems taken as $S G_1 G_2 | A$:%
\begin{align}
E_{C}(\Lambda_{A'\rightarrow SG_{1}%
G_{2}}^{q}(\Phi_{A'A}))  & =E_{C}(\mathcal{U}_{A\rightarrow SG_{1}%
G_{2}A}(\pi_{A}))\\
& =E_{F}(\mathcal{U}_{A\rightarrow SG_{1}G_{2}A}(\pi_{A}))\\
%& =E_{F}(\Lambda_{A\rightarrow SG_{1}G_{2}}^{q}(\Phi_{RA}))\\
& =H_{\min}(\Delta^{q}).
\end{align}
The last line features the minimum output entropy of the depolarizing channel,
which was identified in \cite{K03} and shown to be equal to \eqref{eq:explicit-exp-min-out-depo}.

As discussed in previous examples, it is worthwhile to consider the
reversibility of the resource theory of entanglement for epolarizing channels.
In this spirit, by invoking the covariance of $\Lambda^{q}$, the discussion after \eqref{eq:cov-to-help}, \cite[Eq.~(55)]{BDSW96}, and
\cite[Theorem~4.13]{Rai01}, we find the following bound on the distillable entanglement of the
epolarizing channel~$\Lambda^{q}$:%
\begin{equation}
E_{D}(\Lambda^{q})\leq R(A;SG_{1}G_{2})_{\Lambda^{q}(\Phi)},
\label{eq:rains-bnd}
\end{equation}
where $R(A;SG_{1}G_{2})_{\Lambda^{q}(\Phi)}$ denotes the Rains relative
entropy of the state $\Lambda_{A'\rightarrow SG_{1}G_{2}}^{q}(\Phi_{A'A})$.
Recall that the Rains relative entropy for an arbitrary state $\rho_{AB}$ is
defined as \cite{Rai01}
\begin{equation}
R(A;B)_{\rho}\equiv\min_{\tau_{AB}\in\text{PPT}^{\prime}(A;B)}D(\rho_{AB}%
\Vert\tau_{AB}),
\end{equation}
where the quantum relative entropy is defined as \cite{U62} 
\begin{equation}
D(\rho\Vert\tau)\equiv \operatorname{Tr}\{\rho[\log_2 \rho - \log_2 \tau]\}
\end{equation}
and the Rains set PPT$^{\prime}(A;B)$ \cite{AdMVW02} is given by
\begin{equation}
\text{PPT}^{\prime}(A;B) \equiv \left\{  \tau_{AB}%
:\tau_{AB}\geq0\wedge\left\Vert T_{B}(\tau_{AB})\right\Vert _{1}\leq1\right\}
,
\end{equation}
with $T_{B}$ denoting the partial transpose. Appendix~\ref{sec:matlab-listing} details a Matlab
program taking advantage of recent advances in \cite{Fawzi2018,FF18}, in order  to compute the Rains relative
entropy of any bipartite state.

\begin{figure}[ptb]
\begin{center}
\includegraphics[width=3.6in]
{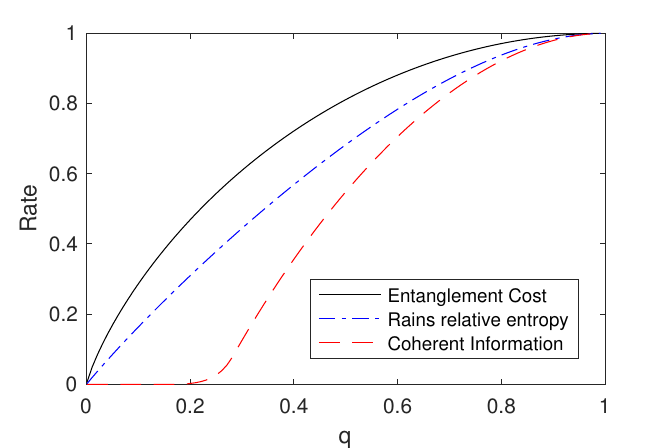}
\end{center}
\caption{The figure depicts the entanglement cost, the Rains bound, and the coherent information of the epolarizing channel $\Lambda^q$, for $d=2$ and $q\in[0,1]$. The gap between the entanglement cost and the Rains bound 
for all $q\in (0,1)$
demonstrates that the resource theory of entanglement is irreversible for epolarizing channels. The units for rate on the vertical axis are ebits per channel use, and $q$ on the horizontal axis is dimensionless.}
\label{fig:epolarizing}%
\end{figure}

Figure~\ref{fig:epolarizing} plots the entanglement cost of the epolarizing channel for $d=2$
(qubit input), and it also plots the Rains bound on distillable entanglement
 in \eqref{eq:rains-bnd}. There is a gap for every value of $q\in(0,1)$, demonstrating that the
resource theory of entanglement is irreversible for epolarizing channels. The
figure also plots the coherent information of the state $\Lambda_{A'\rightarrow
SG_{1}G_{2}}^{q}(\Psi_{A'A}^{s})$, optimized with respect to $|\Psi^{s}%
\rangle_{A'A}\equiv\sqrt{s}|00\rangle_{A'A}+\sqrt{1-s}|11\rangle_{A'A}$ for $s\in[0,1]$, which is
known to be a lower bound on the distillable entanglement of $\Lambda^{q}$ \cite{DW05}.
Note that the coherent information plot is not in contradiction with the
recent result of \cite{Leung2017complementary}, which states that the coherent information is strictly
greater than zero for all $q\in(0,1]$. It is simply that the coherent
information is so small for $q\lesssim0.18$, that it is difficult to witness
its strict positivity numerically.
Matlab files to generate Figure~\ref{fig:epolarizing} are available with the arXiv posting of this paper.

\section{Bosonic Gaussian channels}

\label{sec:BGCs}

In this section, I extend the main ideas of the paper in order to characterize
the entanglement cost of all single-mode bosonic Gaussian channels \cite{S17}.
From a practical perspective, we should be most interested in the single-mode
thermal, amplifier, and additive-noise channels, as these are of the greatest
interest in applications, as stressed in \cite[Section~12.6.3]{H12} and
\cite[Section~3.5]{HG12}. However, it also turns out that these are the only
non-trivial cases to consider among all single-mode bosonic Gaussian channels,
as discussed below.

\subsection{On the definition of entanglement cost for infinite-dimensional
channels}

Before beginning, let us note that there are some subtleties involved when
dealing with quantum information theory in infinite-dimensional Hilbert spaces
\cite{H12}. For example, as advised in \cite{SH08}, the direct use of the
diamond norm in infinite-dimensional Hilbert spaces could be too strong for
applications, and this observation has motivated some recent work
\cite{Sh17,Win17}\ on modifications of the diamond norm that take into account
physical constraints such as energy limitations. On the other hand, the recent
findings in \cite{W18}\ suggest that the direct use of the diamond norm is
reasonable when considering single-mode thermal, amplifier, and additive-noise
channels, as well as some multi-mode bosonic Gaussian channels. As it turns
out, we can indeed directly employ the diamond norm when analyzing the
entanglement cost of these channels. In fact, one of the main contributions of
\cite{W18} was to consider uniform convergence issues in the teleportation
simulation of bosonic Gaussian channels, and due to the fact that the
operational framework of entanglement cost is directly related to the
approximate teleportation simulation of a channel, one should expect that the
findings of \cite{W18} would be related to the issues involved in the
entanglement cost of bosonic Gaussian channels.

With this in mind, let us define the entanglement cost for an
infinite-dimensional channel almost exactly as it has been defined in
Section~\ref{sec:ent-cost-new-def}, with the exception that we allow for LOCC
channels that have a continuous classical index (e.g., as considered in
\cite[Section~4]{S10}), thus going beyond the LOCC\ channels considered in
\eqref{eq:LOCC-channel-decomp}. Specifically, let us define an
$(n,M,\varepsilon)$ sequential channel simulation code as it has been defined
in Section~\ref{sec:ent-cost-new-def}, noting that the $\varepsilon$-error
criterion is given by \eqref{eq:good-sim}, representing the direct
generalization of the strategy norm of \cite{CDP08a,GDP09,G12}\ to
infinite-dimensional systems. Achievable rates and the entanglement cost are
then defined in the same way.

\subsection{Preliminary observations about the entanglement cost of
single-mode bosonic Gaussian channels}

The starting point for our analysis of single-mode bosonic Gaussian channels
is the Holevo classification from \cite{Holevo2007}, in which canonical forms
for all single-mode bosonic Gaussian channels have been given, classifying
them up to local Gaussian unitaries acting on the input and output of the
channel. It then suffices for us to focus our attention on the canonical
forms, as it is self-evident from definitions that local unitaries do not
alter the entanglement cost of a quantum channel. The thermal and amplifier
channels form the class C discussed in \cite{Holevo2007}, and the
additive-noise channels form the class B$_{2}$ discussed in the same work. The
classes that remain are labeled A, B$_{1}$, and D in \cite{Holevo2007}. The
channels in A and D\ are entanglement-breaking \cite{Holevo2008}, and as a
consequence of the \textquotedblleft if-part\textquotedblright\ of
Remark~\ref{rem:ent-break}, they have zero entanglement cost. Channels in the
class B$_{1}$ are perhaps not interesting for practical applications, and as
it turns out, they have infinite quantum capacity \cite{Holevo2007}. Thus,
their entanglement cost is also infinite, because a channel's quantum capacity
is a lower bound on its distillable entanglement, which is in turn a lower
bound on its entanglement cost---these relationships are a direct consequence
of the definitions of the underlying quantities. For the same reason, the
entanglement cost of the bosonic identity channel is also infinite.

\subsection{Thermal, amplifier, and additive-noise channels}

In light of the previous discussion, for the remainder of the paper, let us focus
our attention on the thermal, amplifier, and additive-noise channels. Each of
these are defined respectively by the following Heisenberg input-output
relations:%
\begin{align}
\hat{b}  &  =\sqrt{\eta}\hat{a}+\sqrt{1-\eta}\hat{e}%
,\label{eq:thermal-channel}\\
\hat{b}  &  =\sqrt{G}\hat{a}+\sqrt{G-1}\hat{e}^{\dag}%
,\label{eq:amplifier-channel}\\
\hat{b}  &  =\hat{a}+\left(  x+ip\right)  /\sqrt{2},
\label{eq:additive-noise-channel}%
\end{align}
where $\hat{a}$, $\hat{b}$, and $\hat{e}$ are the field-mode annihilation
operators for the sender's input, the receiver's output, and the environment's
input of these channels, respectively.

The channel in \eqref{eq:thermal-channel} is a thermalizing channel, in which
the environmental mode is prepared in a thermal state $\theta(N_{B})$\ of mean
photon number $N_{B}\geq0$, defined as%
\begin{equation}
\theta(N_{B})\equiv\frac{1}{N_{B}+1}\sum_{n=0}^{\infty}\left(  \frac{N_{B}%
}{N_{B}+1}\right)  ^{n}|n\rangle\langle n|,
\end{equation}
where $\left\{  |n\rangle\right\}  _{n=0}^{\infty}$ is the orthonormal,
photonic number-state basis. When $N_{B}=0$, $\theta(N_{B})$ reduces to the
vacuum state, in which case the resulting channel in
\eqref{eq:thermal-channel} is called the pure-loss channel---it is said to be
quantum-limited in this case because the environment is injecting the minimum
amount of noise allowed by quantum mechanics. The parameter $\eta\in(0,1)$ is
the transmissivity of the channel, representing the average fraction of
photons making it from the input to the output of the channel. Let
$\mathcal{L}_{\eta,N_{B}}$ denote this channel, and we make the further
abbreviation $\mathcal{L}_{\eta}\equiv\mathcal{L}_{\eta,N_{B}=0}$ when it is
the pure-loss channel. The channel in \eqref{eq:thermal-channel} is
entanglement-breaking when $\left(  1-\eta\right)  N_{B}\geq\eta$
\cite{Holevo2008}, and by Remark~\ref{rem:ent-break}, the entanglement cost is
equal to zero for these values.

The channel in \eqref{eq:amplifier-channel} is an amplifier channel, and the
parameter $G>1$ is its gain. For this channel, the environment is prepared in
the thermal state $\theta(N_{B})$. If $N_{B}=0$, the amplifier channel is
called the pure-amplifier channel---it is said to be quantum-limited for a
similar reason as stated above. Let $\mathcal{A}_{G,N_{B}}$ denote this
channel, and we make the further abbreviation $\mathcal{A}_{G}\equiv
\mathcal{A}_{G,N_{B}=0}$ when it is the quantum-limited amplifier channel. The
channel in \eqref{eq:amplifier-channel} is entanglement-breaking when $\left(
G-1\right)  N_{B}\geq1$ \cite{Holevo2008}, and by Remark~\ref{rem:ent-break},
the entanglement cost is equal to zero for these values.

Finally, the channel in \eqref{eq:additive-noise-channel} is an additive-noise
channel, representing a quantum generalization of the classical additive white
Gaussian noise channel. In \eqref{eq:additive-noise-channel}, $x$ and $p$ are
zero-mean, independent Gaussian random variables each having variance $\xi\geq0$. Let
$\mathcal{T}_{\xi}$ denote this channel. The channel in
\eqref{eq:additive-noise-channel} is entanglement-breaking when $\xi\geq1$
\cite{Holevo2008}, and by Remark~\ref{rem:ent-break}, the entanglement cost is
equal to zero for these values.

Kraus representations for the channels in
\eqref{eq:thermal-channel}--\eqref{eq:additive-noise-channel}\ are available
in \cite{ISS11}, which can be helpful for further understanding their action
on input quantum states.

Due to the entanglement-breaking regions discussed above, we are left with a
limited range of single-mode bosonic Gaussian channels to consider, which is
delineated by the white strip in Figure~1 of \cite{GPCH13}.

\subsection{Upper bound on the entanglement cost of teleportation-simulable
channels with bosonic Gaussian resource states}

In this section, I determine an upper bound on the entanglement cost of any
channel $\mathcal{N}_{A\rightarrow B}$\ that is teleportation simulable with
associated resource state given by a bosonic Gaussian state. Related bosonic
teleportation channels have been considered previously
\cite{prl1998braunstein,BST02,TBS02,GI02,WPG07,NFC09}, in the case that the
LOCC\ channel associated to $\mathcal{N}_{A\rightarrow B}$ is a Gaussian
LOCC\ channel. Proposition~\ref{prop:upper-bnd-gauss} below states that the
entanglement cost of these channels is bounded from above by the Gaussian
entanglement of formation \cite{WGKWC04}\ of the underlying bosonic Gaussian
resource state, and as such, this proposition represents a counterpart to
Proposition~\ref{prop:ach-TP-sim}. Before stating it, let us note that the
Gaussian entanglement of formation $E_{F}^{g}(A;B)_{\rho}$ of a bipartite
state $\rho_{AB}$ \cite{WGKWC04}\ is given by the same formula as in
\eqref{eq:EoF}, with the exception that the pure states $\psi_{AB}^{x}$ in the
ensemble decomposition are required to be Gaussian. Note that continuous probability
measures are allowed for the decomposition (for an explicit definition, see
\cite[Section~III]{WGKWC04}). Let us note here that the first part of the
proof outlines a procedure for the formation of $n$ approximate copies of a
bipartite state, and even though this kind of protocol has been implicit in
prior literature, I have included explicit steps for clarity. After proving
Proposition~\ref{prop:ach-TP-sim}, I discuss its application to thermal,
amplifier, and additive-noise bosonic Gaussian channels.

\begin{proposition}
\label{prop:upper-bnd-gauss}Let $\mathcal{N}_{A\rightarrow B}$ be a channel
that is teleportation simulable as defined in \eqref{eq:tp-sim}, where the
resource state $\omega_{A^{\prime}B^{\prime}}$ is a bosonic Gaussian state
composed of $k$~modes for system $A^{\prime}$ and $\ell$ modes for system
$B^{\prime}$, with $k,\ell\geq1$. Then the entanglement cost of $\mathcal{N}%
_{A\rightarrow B}$ is never larger than the Gaussian entanglement of formation
of the bosonic Gaussian resource state $\omega_{A^{\prime}B^{\prime}}$:%
\begin{equation}
E_{C}(\mathcal{N})\leq E_{F}^{g}(A^{\prime};B^{\prime})_{\omega}.
\end{equation}

\end{proposition}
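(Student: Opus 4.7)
The plan is to mirror the proof of Proposition~\ref{prop:ach-TP-sim}, with the one-shot smooth-EoF formation step there replaced by a direct formation procedure tailored to bosonic Gaussian states. First, I would fix $\varepsilon, \delta > 0$ and appeal to the definition of the Gaussian entanglement of formation to pick a pure-state Gaussian decomposition $\{\mu, \psi^{x}_{A'B'}\}_x$ of $\omega_{A'B'}$ such that $\int H(A')_{\psi^x}\, d\mu(x) \leq E_F^g(A';B')_\omega + \delta/2$, where $\mu$ is in general a continuous probability measure on the Gaussian parameter space. The goal is then to produce, using only a maximally entangled state of Schmidt rank $M = 2^{n[E_F^g(A';B')_\omega + \delta]}$ and LOCC, a state $\widetilde{\omega}^{(n)}_{A'^n B'^n}$ that is within trace distance $\sqrt{\varepsilon}$ of $\omega_{A'B'}^{\otimes n}$ for all sufficiently large $n$.

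The explicit formation procedure I would describe is as follows. Alice locally samples an $n$-tuple $x^n = (x_1,\ldots,x_n)$ from the product measure $\mu^{\otimes n}$ and transmits $x^n$ to Bob via free classical communication. Conditioned on $x^n$, they must jointly prepare the Gaussian pure product state $\psi^{x_1}_{A'B'} \otimes \cdots \otimes \psi^{x_n}_{A'B'}$ from the shared maximally entangled state. For typical $x^n$, the marginal $\psi^{x^n}_{A'^n}$ has von Neumann entropy $\sum_i H(A')_{\psi^{x_i}}$, which by the weak law of large numbers concentrates around $n\int H(A')_{\psi^x}\, d\mu(x)$, so a quantum-AEP-type entanglement dilution protocol applied to a truncation of $\psi^{x^n}$ to its principal Schmidt components consumes asymptotically $\int H(A')_{\psi^x}\, d\mu(x)$ ebits per copy, which is at most $E_F^g(A';B')_\omega + \delta/2 < \log_2 M / n$. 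Averaging over $x^n$ and combining the typicality, dilution, and truncation errors yields the desired trace-distance bound on $\widetilde{\omega}^{(n)}$.

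Given $\widetilde{\omega}^{(n)}$, Alice and Bob then run the teleportation-simulation LOCC channel $\mathcal{L}_{AA'B'\to B}$ from \eqref{eq:tp-sim} sequentially, applying it at round $i$ to the channel input $A_i$ together with the $i$-th pair of modes of $\widetilde{\omega}^{(n)}$. Exactly as in \eqref{eq:limitation-on-dist-seq-sim}, the data processing inequality for trace distance together with the teleportation-simulable structure of $\mathcal{N}$ gives
\[
\tfrac{1}{2}\|(\mathcal{N})^n - (\mathcal{L})^n\|_{\Diamond,n} \leq \tfrac{1}{2}\|\omega^{\otimes n}_{A'B'} - \widetilde{\omega}^{(n)}_{A'^n B'^n}\|_1 \leq \sqrt{\varepsilon},
\]
uniformly in the channel input. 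Taking $n \to \infty$ at fixed rate $R = E_F^g(A';B')_\omega + \delta$, then $\varepsilon \to 0$, and finally $\delta \to 0$ and the infimum over Gaussian decompositions concludes the proof.

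The main obstacle is the formation step, because pure Gaussian states generically have infinite Schmidt rank, so the standard finite-dimensional entanglement dilution of \cite{BDSW96} does not immediately apply, and because $\mu$ may be continuous, so the classical label $x$ lives in a continuum. The natural remedies are to truncate each $\psi^{x_i}$ to a high-probability principal-Schmidt subspace (a Gaussian counterpart of one-shot smoothing) before dilution, with the truncation threshold chosen so that the aggregated error vanishes with $n$, and to quantize the parameter space to make the classical message $x^n$ finite; the fact that the infinite-dimensional LOCC class recalled in Section~\ref{sec:BGCs} permits continuous classical indices ensures that the resulting sample-transmit-truncate-and-dilute map is a legitimate LOCC channel in the sense required by the sequential simulation framework.
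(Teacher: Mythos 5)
Your overall architecture matches the paper's: form $n$ approximate copies of $\omega_{A'B'}$ at entanglement rate $E_F^g(A';B')_\omega$, then feed them one pair at a time into the LOCC channel $\mathcal{L}_{AA'B'\to B}$ of \eqref{eq:tp-sim}, with the error controlled exactly as in \eqref{eq:limitation-on-dist-seq-sim}. That second half is fine. The gap is in your formation step. You assert that, after sampling $x^n\sim\mu^{\otimes n}$ and truncating, ``a quantum-AEP-type entanglement dilution protocol applied to a truncation of $\psi^{x^n}$ to its principal Schmidt components consumes asymptotically $\int H(A')_{\psi^x}\,d\mu(x)$ ebits per copy.'' Concentration of the \emph{sum of entropies} $\sum_i H(A')_{\psi^{x_i}}$ (which itself only needs finite mean) is not what the dilution protocol requires: to dilute the non-identical product state $\bigotimes_i\psi^{x_i}$ you need the typical subspace of $\bigotimes_i\psi^{x_i}_{A'}$ to have dimension close to $2^{\sum_i H(A')_{\psi^{x_i}}}$, i.e., a law of large numbers for the eigenvalue surprisal $-\log_2\lambda$, which demands control of the entropy \emph{variances} $V(A')_{\psi^{x_i}}$ uniformly (or at least on average) over the support of the continuous measure $\mu$. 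For a generic near-optimal Gaussian decomposition nothing prevents $H(A')_{\psi^x}$ and $V(A')_{\psi^x}$ from being unbounded over the support of $\mu$, so the truncation threshold you invoke cannot obviously be chosen to make the aggregated error vanish; the quantization of the label space has the same unaddressed uniformity problem. As written, the central rate claim of the formation step is asserted rather than proved.

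The paper avoids all of this by invoking the structural result of \cite[Proposition~1]{WGKWC04}: the optimal Gaussian decomposition of $\omega_{A'B'}$ can be taken to be a Gaussian mixture of \emph{local displacements} applied to a single \emph{fixed} Gaussian pure state $\psi^{\omega}_{A'B'}$, so that $\omega_{A'B'}=\mathcal{G}_{A'B'}(\psi^{\omega}_{A'B'})$ for an LOCC channel $\mathcal{G}$, and $E_F^g(A';B')_\omega=\inf H(B')_{\psi^\omega}$ over such decompositions. One then only has to dilute the genuinely i.i.d.\ state $(\psi^{\omega})^{\otimes n}$ --- whose marginal has finite entropy and finite entropy variance by the Williamson decomposition --- using the entropy-typical projector, Chebyshev, and the gentle measurement lemma; the mixing is absorbed into the free postprocessing $\mathcal{G}^{\otimes n}$, so no classical label is ever sampled or communicated and no quantization is needed. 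If you replace your generic decomposition with this one, every conditional state in the ensemble is local-unitarily equivalent to $\psi^{\omega}$, all your concentration worries evaporate, and the rest of your argument goes through.
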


\begin{proof}
The main idea of the proof is to first form $n$ approximate copies of the
bosonic Gaussian resource state $\omega_{A^{\prime}B^{\prime}}$, by using
entanglement and LOCC as related to the approach from \cite{BBPS96}, and then
after that, simulate $n$ uses of the channel $\mathcal{N}_{A\rightarrow B}$ by
employing the structure of the channel $\mathcal{N}_{A\rightarrow B}$ from
\eqref{eq:tp-sim}. Indispensable to the proof is the analysis in
\cite[Sections~II and III]{WGKWC04}, where it is shown that every bosonic
Gaussian state can be decomposed as a Gaussian mixture of local displacements
acting on a fixed Gaussian pure state and that such a decomposition is optimal
for the Gaussian entanglement of formation \cite[Proposition~1]{WGKWC04}. The
Gaussian mixture of local displacements can be understood as an LOCC channel
$\mathcal{G}_{A^{\prime}B^{\prime}}$, and let $\psi_{A^{\prime}B^{\prime}%
}^{\omega}$\ denote the aforementioned fixed Gaussian pure state such that
$\mathcal{G}_{A^{\prime}B^{\prime}}(\psi_{A^{\prime}B^{\prime}}^{\omega
})=\omega_{A^{\prime}B^{\prime}}$.

Since $\psi_{A^{\prime}B^{\prime}}^{\omega}$ is Gaussian, the marginal state
$\psi_{B^{\prime}}^{\omega}$ is Gaussian, and thus it has finite entropy
$H(B^{\prime})_{\psi^{\omega}}$, as well as finite entropy variance, i.e.,%
\begin{equation}
V(B^{\prime})_{\psi^{\omega}}\equiv\operatorname{Tr}\{\psi_{B^{\prime}%
}^{\omega}[-\log_{2}\psi_{B^{\prime}}^{\omega}-H(B^{\prime})_{\psi^{\omega}%
}]^{2}\}<\infty,
\end{equation}
the latter statement following from the Williamson decomposition
\cite{W36}\ for Gaussian states as well as the formula for the entropy
variance of a bosonic thermal state \cite{WRG16}. For $\delta>0$, recall that
the entropy-typical projector $\Pi_{B^{\prime n}}^{\delta}$
\cite{OP93,PhysRevA.51.2738}\ of the state $\psi_{B^{\prime}}^{\omega}$ is
defined as the projection onto%
\begin{equation}
\operatorname{span}\{|\xi_{z^{n}}\rangle:\left\vert -n^{-1}\log_{2}(p_{Z^{n}%
}(z^{n}))-H(B^{\prime})_{\psi^{\omega}}\right\vert \leq\delta\},
\end{equation}
where a countable spectral decomposition of $\psi_{B^{\prime}}^{\omega}$ is
given by%
\begin{equation}
\psi_{B^{\prime}}^{\omega}=\sum_{z}p_{Z}(z)|\xi_{z}\rangle\langle\xi_{z}|,
\end{equation}
and%
\begin{align}
|\xi_{z^{n}}\rangle &  \equiv|\xi_{z_{1}}\rangle\otimes\cdots\otimes
|\xi_{z_{n}}\rangle,\\
p_{Z^{n}}(z^{n})  &  \equiv p_{Z}(z_{1})\cdots p_{Z}(z_{n}).
\end{align}
The entropy-typical projector $\Pi_{B^{\prime n}}^{\delta}$ projects onto a
finite-dimensional subspace of $[\psi_{B^{\prime}}^{\omega}]^{\otimes n}$, and
satisfies the conditions $\left[  \Pi_{B^{\prime n}}^{\delta},[\psi
_{B^{\prime}}^{\omega}]^{\otimes n}\right]  =0$ and
\begin{multline}
2^{-n\left[  H(B^{\prime})_{\psi^{\omega}}+\delta\right]  }\Pi_{B^{\prime n}%
}^{\delta}\leq\Pi_{B^{\prime n}}^{\delta}[\psi_{B^{\prime}}^{\omega}]^{\otimes
n}\Pi_{B^{\prime n}}^{\delta}\\
\leq2^{-n\left[  H(B^{\prime})_{\psi^{\omega}}-\delta\right]  }\Pi_{B^{\prime
n}}^{\delta}.
\end{multline}
It then follows that $\operatorname{Tr}\{\Pi_{B^{\prime n}}^{\delta}%
\}\leq2^{n\left[  H(B^{\prime})_{\psi^{\omega}}+\delta\right]  }$.
Furthermore, consider that the entropy-typical projector $\Pi_{B^{\prime n}%
}^{\delta}$\ for the state $[\psi_{B^{\prime}}^{\omega}]^{\otimes n}$
satisfies%
\begin{align}
\operatorname{Tr}\{\left(  I_{A^{\prime n}}\otimes\Pi_{B^{\prime n}}^{\delta
}\right)  [\psi_{A^{\prime}B^{\prime}}^{\omega}]^{\otimes n}\}  &
=\operatorname{Tr}\{\Pi_{B^{\prime n}}^{\delta}[\psi_{B^{\prime}}^{\omega
}]^{\otimes n}\}\notag \\
&  \geq1-\frac{V(B^{\prime})_{\psi^{\omega}}}{\delta^{2}n},
\end{align}
with the inequality following from the definition of the entropy-typical
projector and an application of the Chebshev inequality. By the gentle
measurement lemma \cite{itit1999winter,ON07} (see \cite[Lemma~9.4.1]{W17}\ for
the version employed here), we conclude that%
\begin{equation}
\frac{1}{2}\left\Vert [\psi_{A^{\prime}B^{\prime}}^{\omega}]^{\otimes
n}-\widetilde{\psi}_{A^{\prime n}B^{\prime n}}^{\omega}\right\Vert _{1}%
\leq\sqrt{\frac{V(B^{\prime})_{\psi^{\omega}}}{\delta^{2}n}},
\label{eq:gent-meas-approx}%
\end{equation}
where%
\begin{equation}
\widetilde{\psi}_{A^{\prime n}B^{\prime n}}^{\omega}\equiv\frac{\left(
I_{A^{\prime n}}\otimes\Pi_{B^{\prime n}}^{\delta}\right)  [\psi_{A^{\prime
}B^{\prime}}^{\omega}]^{\otimes n}\left(  I_{A^{\prime n}}\otimes
\Pi_{B^{\prime n}}^{\delta}\right)  }{\operatorname{Tr}\{\left(  I_{A^{\prime
n}}\otimes\Pi_{B^{\prime n}}^{\delta}\right)  [\psi_{A^{\prime}B^{\prime}%
}^{\omega}]^{\otimes n}\}}.
\end{equation}
Observe that the system $B^{\prime n}$ of $\widetilde{\psi}_{A^{\prime
n}B^{\prime n}}^{\omega}$ is supported on a finite-dimensional subspace of
$B^{\prime n}$.

Now, the idea of forming $n$ approximate copies $\psi_{A^{\prime}B^{\prime}%
}^{\omega}$ is then the same as it is in \cite{BBPS96}: Alice prepares the
state $\widetilde{\psi}_{A^{\prime n}B^{\prime n}}^{\omega}$ locally, Alice
and Bob require beforehand a maximally entangled state of Schmidt rank no
larger than $2^{n\left[  H(B^{\prime})_{\psi^{\omega}}+\delta\right]  }$, and
then they perform quantum teleportation \cite{BBC+93}\ to teleport the
$B^{\prime n}$ system to Bob. At this point, they share exactly the state
$\widetilde{\psi}_{A^{\prime n}B^{\prime n}}^{\omega}$, which becomes less and
less distinguishable from $[\psi_{A^{\prime}B^{\prime}}^{\omega}]^{\otimes n}$
as $n$ grows large, due to \eqref{eq:gent-meas-approx}. Now applying the
Gaussian LOCC\ channel $(\mathcal{G}_{A^{\prime}B^{\prime}})^{\otimes n}$, the
data processing inequality to \eqref{eq:gent-meas-approx}, and the fact that
$\mathcal{G}_{A^{\prime}B^{\prime}}(\psi_{A^{\prime}B^{\prime}}^{\omega
})=\omega_{A^{\prime}B^{\prime}}$, we conclude that%
\begin{equation}
\frac{1}{2}\left\Vert \omega_{A^{\prime}B^{\prime}}^{\otimes n}-(\mathcal{G}%
_{A^{\prime}B^{\prime}})^{\otimes n}(\widetilde{\psi}_{A^{\prime n}B^{\prime
n}}^{\omega})\right\Vert _{1}\leq\sqrt{\frac{V(B^{\prime})_{\psi^{\omega}}%
}{\delta^{2}n}}. \label{eq:mixed-state-approx}%
\end{equation}

Thus, to see that $H(B^{\prime})_{\psi^{\omega}}$ is an achievable rate for
forming $\omega_{A^{\prime}B^{\prime}}^{\otimes n}$, fix $\varepsilon\in(0,1]$
and $\delta>0$. Then choose $n$ large enough so that $\sqrt{\frac{V(B^{\prime
})_{\psi^{\omega}}}{\delta^{2}n}}\leq\varepsilon$. Apply the above procedure,
using LOCC\ and a maximally entangled state of Schmidt rank no larger than
$2^{n\left[  H(B^{\prime})_{\psi^{\omega}}+\delta\right]  }$. Then the rate of
entanglement consumption to produce $n$ approximate copies of $\omega
_{A^{\prime}B^{\prime}}$
satisfying \eqref{eq:mixed-state-approx} is $H(B^{\prime})_{\psi^{\omega}}+\delta$. Since
this is possible for $\varepsilon\in(0,1]$, $\delta>0$, and sufficiently large
$n$, we conclude that $H(B^{\prime})_{\psi^{\omega}}$ is an achievable rate
for the formation of $\omega_{A^{\prime}B^{\prime}}$. Now, since achieving
this rate is possible for any pure state $\psi_{A^{\prime}B^{\prime}}^{\omega
}$ such that $\omega_{A^{\prime}B^{\prime}}=\mathcal{G}_{A^{\prime}B^{\prime}%
}(\psi_{A^{\prime}B^{\prime}}^{\omega})$, we conclude that the infimum of
$H(B^{\prime})_{\psi^{\omega}}$ with respect to all such pure states is an
achievable rate. But this latter quantity is exactly the Gaussian entanglement
of formation according to \cite[Proposition~1]{WGKWC04}.

The idea for simulating $n$ uses of the channel $\mathcal{N}_{A\rightarrow B}$
is then the same as the idea used in the proof of
Proposition~\ref{prop:ach-TP-sim}. First form $n$ approximate copies of
$\omega_{A^{\prime}B^{\prime}}$ according to the procedure described above.
Then, when the $i$th call to the channel $\mathcal{N}_{A\rightarrow B}$ is
made, use the LOCC\ channel $\mathcal{L}_{AA^{\prime}B^{\prime}\rightarrow B}$
from the definition in \eqref{eq:tp-sim} along with the $i$th $A^{\prime}$ and
$B^{\prime}$ systems of the state approximating $\omega_{A^{\prime}B^{\prime}%
}^{\otimes n}$ to simulate it. By the same reasoning that led to
\eqref{eq:limitation-on-dist-seq-sim}, the distinguishability of the final
states of any sequential test is limited by the distinguishability of the
state $\omega_{A^{\prime}B^{\prime}}^{\otimes n}$ from its approximation,
which I argued in \eqref{eq:mixed-state-approx}\ can be made arbitrarily small
with increasing $n$. Thus, the Gaussian entanglement of formation
$\omega_{A^{\prime}B^{\prime}}$ is an achievable rate for sequential channel
simulation of $\mathcal{N}_{A\rightarrow B}$.
\end{proof}

\subsubsection{Upper bound for the entanglement cost of thermal, amplifier,
and additive-noise bosonic Gaussian channels}

\label{sec:app-to-gaussian}I now discuss how to apply
Proposition~\ref{prop:ach-TP-sim} to single-mode thermal, amplifier, and
additive-noise channels. Some recent papers \cite{LMGA17,KW17,TDR18}\ have
shown how to simulate each of these channels by using a bosonic Gaussian
resource state along with variations of the continuous-variable quantum
teleportation protocol \cite{prl1998braunstein}. Of these works, the one most
relevant for us is the latest one \cite{TDR18}, because these authors proved
that the entanglement of formation of the underlying resource state is equal
to the entanglement of formation that results from transmitting through the channel one share of a
two-mode squeezed vacuum state with arbitrarily large squeezing strength. That
is, let $\mathcal{N}_{A\rightarrow B}$ denote a single-mode thermal,
amplifier, or additive-noise channel. Then one of the main results of
\cite{TDR18}\ is that, associated to this channel, there is a bosonic Gaussian
resource state $\omega_{A^{\prime}B^{\prime}}$ and a Gaussian LOCC\ channel
$\mathcal{G}_{AA^{\prime}B^{\prime}\rightarrow B}$ such that%
\begin{align}
E_{F}(A^{\prime};B^{\prime})_{\omega}  &  =\sup_{N_{S}\geq0}E_{F}%
(R;B)_{\sigma(N_{S})}\label{eq:ralph-1}\\
&  =\lim_{N_{S}\rightarrow\infty}E_{F}(R;B)_{\sigma(N_{S})},
\label{eq:ralph-2}%
\end{align}
where%
\begin{align}
\sigma(N_{S})  &  \equiv\mathcal{N}_{A\rightarrow B}(\phi_{RA}^{N_{S}}),\\
\phi_{RA}^{N_{S}}  &  \equiv|\phi^{N_{S}}\rangle\langle\phi^{N_{S}}|_{RA},\\
|\phi^{N_{S}}\rangle_{RA}  &  \equiv\frac{1}{\sqrt{N_{S}+1}}\sum_{n=0}%
^{\infty}\sqrt{\left(  \frac{N_{S}}{N_{S}+1}\right)  ^{n}}|n\rangle
_{R}|n\rangle_{A}, \label{eq:TMSV}%
\end{align}
and for all input states$~\rho_{A}$,%
\begin{equation}
\mathcal{N}_{A\rightarrow B}(\rho_{A})=\mathcal{G}_{AA^{\prime}B^{\prime
}\rightarrow B}(\rho_{A}\otimes\omega_{A^{\prime}B^{\prime}}).
\end{equation}
In the above, $\phi_{RA}^{N_{S}}$ is the two-mode squeezed vacuum state
\cite{S17}. Note that the equality in \eqref{eq:ralph-2}\ holds because one
can always produce $\phi_{RA}^{N_{S}}$ from $\phi_{RA}^{N_{S}^{\prime}}$ such
that $N_{S}^{\prime}\geq N_{S}$, by using Gaussian LOCC\ and the local
displacements involved in the Gaussian LOCC commute with the channel
$\mathcal{N}_{A\rightarrow B}$ \cite{GECP03} (whether it be thermal,
amplifier, or additive-noise). Furthermore, the entanglement of formation does
not increase under the action of an LOCC\ channel.

Thus, applying the above observations and
Proposition~\ref{prop:upper-bnd-gauss}, it follows that there exist bosonic
Gaussian resource states $\omega_{A^{\prime}B^{\prime}}^{\eta,N_{B}}$,
$\omega_{A^{\prime}B^{\prime}}^{G,N_{B}}$, and $\omega_{A^{\prime}B^{\prime}%
}^{\xi}$ associated to the respective thermal, amplifier, and additive-noise
channels in \eqref{eq:thermal-channel}--\eqref{eq:additive-noise-channel},
such that the following inequalities hold%
\begin{align}
E_{C}(\mathcal{L}_{\eta,N_{B}})  &  \leq E_{F}(A^{\prime};B^{\prime}%
)_{\omega^{\eta,N_{B}}},\label{eq:e-cost-upper-thermal}\\
E_{C}(\mathcal{A}_{G,N_{B}})  &  \leq E_{F}(A^{\prime};B^{\prime}%
)_{\omega^{G,N_{B}}},\\
E_{C}(\mathcal{T}_{\xi})  &  \leq E_{F}(A^{\prime};B^{\prime})_{\omega^{\xi}}.
\label{eq:e-cost-upper-additive-noise}%
\end{align}
Analytical formulas for the upper bounds on the right can be found in
\cite[Eqs.~(4)--(6)]{TDR18}.

\subsection{Lower bound on the entanglement cost of bosonic Gaussian channels}

In\ this section, I establish a lower bound on the non-asymptotic entanglement
cost of thermal, amplifier, or additive-noise bosonic Gaussian channels. After
that, I show how this bound implies a lower bound on the entanglement cost.
Finally, by proving that the state resulting from sending one share of a
two-mode squeezed vacuum through a pure-loss or pure-amplifier channel has
entanglement cost equal to entanglement of formation, I establish the exact
entanglement cost of these channels by combining with the results from the
previous section.

\begin{proposition}
\label{prop:non-asym-lower-bnd-e-cost-bosonic}Let $\mathcal{N}_{A\rightarrow
B}$ be a thermal, amplifier, or additive-noise channel, as defined in
\eqref{eq:thermal-channel}--\eqref{eq:additive-noise-channel}. Let
$n,M\in\mathbb{N}$, $\varepsilon\in\lbrack0,1/2)$, $\varepsilon^{\prime}%
\in(\sqrt{2\varepsilon},1]$, $\delta=\left[  \varepsilon^{\prime}%
-\sqrt{2\varepsilon}\right]  /\left[  1+\varepsilon^{\prime}\right]  $, and
$N_{S}\in\lbrack0,\infty)$. Then the following bound holds for any
$(n,M,\varepsilon)$ sequential or parallel channel simulation code for $\mathcal{N}%
_{A\rightarrow B}$:%
\begin{multline}
\frac{1}{n}\log_{2}M\geq\frac{1}{n}E_{F}(R^{n};B^{n})_{\omega^{\otimes n}%
}-\left(  \varepsilon^{\prime}+2\delta\right)  H(\phi_{R}^{N_{S}/\delta})\\
-\frac{1}{n}\left[  2\left(  1+\varepsilon^{\prime}\right)  g_{2}%
(\varepsilon^{\prime})+2h_{2}(\delta)\right]  ,
\end{multline}
where $\omega_{RB}\equiv\mathcal{N}_{A\rightarrow B}(\phi_{RA}^{N_{S}})$ and
$\frac{1}{n}\log_{2}M$ is understood as the non-asymptotic entanglement cost
of the protocol.
\end{proposition}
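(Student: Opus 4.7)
The plan is to adapt the proof of Proposition~\ref{prop:non-asym-lower-bnd-e-cost} to the infinite-dimensional bosonic setting, replacing Winter's dimension-dependent continuity bound for entanglement of formation by an energy-constrained continuity bound (due to Shirokov) that remains meaningful in infinite dimensions. The correction terms appearing in the statement, namely $(\varepsilon'+2\delta)H(\phi_R^{N_S/\delta})$ together with $2(1+\varepsilon')g_2(\varepsilon') + 2h_2(\delta)$, are precisely those of an energy-constrained EoF continuity bound evaluated at energy level $N_S$ per mode.

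First, using the reduction in Section~II, any $(n,M,\varepsilon)$ sequential simulation code is also an $(n,M,\varepsilon)$ parallel simulation code, so it suffices to prove the bound assuming the parallel form \eqref{eq:berta-parallel-prot} with error \eqref{eq:berta-sim}. Next, I would feed the $n$-fold tensor-power two-mode squeezed vacuum $\psi_{R^nA^n} \equiv (\phi_{RA}^{N_S})^{\otimes n}$ into the parallel test. Running the actual channels yields $\omega^{\otimes n}_{R^nB^n}$; running the simulation yields a state $\sigma_{R^nB^n}$ produced by an LOCC channel acting on $\psi_{R^nA^n}\otimes \Phi_{\overline{A}_0\overline{B}_0}$, with $\frac{1}{2}\lVert \omega^{\otimes n} - \sigma\rVert_1 \leq \varepsilon$ by hypothesis. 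The critical observation is that the $R^n$ marginal is identically $\theta(N_S)^{\otimes n}$ for both states, since $R^n$ is a purifying reference that is never touched by the channel or by the simulator; this provides an automatic, exact energy constraint $\mathrm{Tr}\{\hat{N}_{R_i}\omega_{R_i}\} = N_S$ per mode.

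With this energy constraint in place, I would invoke Shirokov's energy-constrained continuity bound for the entanglement of formation to obtain
\begin{equation*}
E_F(R^n;B^n)_{\omega^{\otimes n}} \leq E_F(R^n;B^n)_{\sigma} + n(\varepsilon'+2\delta)\, g_2(N_S/\delta) + 2(1+\varepsilon')g_2(\varepsilon') + 2h_2(\delta),
\end{equation*}
using $H(\phi_R^{N_S/\delta}) = g_2(N_S/\delta)$ for the entropy of the thermal marginal of a TMSV. I would then reproduce the LOCC monotonicity chain from Proposition~\ref{prop:non-asym-lower-bnd-e-cost}: $E_F(R^n;B^n)_\sigma \leq E_F(R^nA^n\overline{A}_0;\overline{B}_0)_{\psi\otimes\Phi} = E_F(\overline{A}_0;\overline{B}_0)_\Phi = \log_2 M$, where the first inequality uses LOCC monotonicity (noting that EoF of the relevant states is finite thanks to the finite energy of the TMSV and thermal/amplifier/additive-noise action), the equality uses that the appended pure product state $\psi_{R^nA^n}$ contributes no bipartite entanglement across the chosen cut, and the last equality is the EoF of a maximally entangled state. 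Combining, dividing by $n$, and rearranging gives the stated bound.

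The main obstacle will be invoking the energy-constrained EoF continuity bound with constants matching the stated expression; this requires care because standard (finite-dimensional) Winter-type bounds fail in infinite dimensions, and one must verify the exact form of Shirokov's bound applies here with the energy constraint evaluated on the $R^n$ side. A secondary, but routine, point is checking that LOCC monotonicity of EoF extends to the relevant infinite-dimensional states, which is known for states with finite marginal entropies and finite EoF, both of which are guaranteed here by the Gaussian nature of the input and output and the finiteness of $N_S$.
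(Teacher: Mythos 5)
Your proposal is correct and follows essentially the same route as the paper: reduce the sequential code to a parallel one, test with $(\phi_{RA}^{N_S})^{\otimes n}$, apply Shirokov's energy-constrained continuity bound for the entanglement of formation (with the energy constraint supplied by the thermal marginal on $R^n$ having total photon number $nN_S$), and then run the same LOCC-monotonicity chain down to $\log_2 M$. The correction term you identify, with $H(\phi_R^{N_S/\delta})=g_2(N_S/\delta)$, matches the paper's $f(n,\varepsilon,\varepsilon',N_S)$ exactly.
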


\begin{proof}
The reasoning here is very similar to that given in the proof of
Proposition~\ref{prop:non-asym-lower-bnd-e-cost}, but we can instead make use of
the continuity bound for the entanglement of formation of energy-constrained
states \cite[Proposition~5]{S16}. To begin, suppose that there exists an
$(n,M,\varepsilon)$ protocol for sequential channel simulation. Then by
previous reasoning (also see Figure~\ref{fig:sim-parallel}), it can be thought
of as a parallel channel simulation protocol, such that the criterion in
\eqref{eq:berta-sim}\ holds. Let us take $(\phi_{RA}^{N_{S}})^{\otimes n}$ to
be a test input state, leading to $\omega_{RB}^{\otimes n}=[\mathcal{N}%
_{A\rightarrow B}(\phi_{RA}^{N_{S}})]^{\otimes n}$ when the actual channels
are applied and $\sigma_{R_{1}\cdots R_{n}B_{1}\cdots B_{n}}$ when the
simulation is applied. Set%
\begin{multline}
f(n,\varepsilon,\varepsilon^{\prime},N_{S})\equiv n\left(  \varepsilon
^{\prime}+2\delta\right)  H(\phi_{R}^{N_{S}/\delta})\\
+2\left(  1+\varepsilon^{\prime}\right)  g_{2}(\varepsilon^{\prime}%
)+2h_{2}(\delta).
\end{multline}
Then we have that%
\begin{align}
&  E_{F}(R^{n};B^{n})_{\omega^{\otimes n}}\nonumber\\
&  \leq E_{F}(R^{n};B^{n})_{\sigma}+f(n,\varepsilon,\varepsilon^{\prime}%
,N_{S})\nonumber\\
&  \leq E_{F}(R^{n}A^{n}\overline{A}_{0};\overline{B}_{0})_{\psi\otimes\Phi
}+f(n,\varepsilon,\varepsilon^{\prime},N_{S})\nonumber\\
&  =E_{F}(\overline{A}_{0};\overline{B}_{0})_{\Phi}+f(n,\varepsilon
,\varepsilon^{\prime},N_{S})\nonumber\\
&  =\log_{2}M+f(n,\varepsilon,\varepsilon^{\prime},N_{S}).
\end{align}
The first inequality follows from the condition in \eqref{eq:good-sim}, as
well as from the continuity bound for entanglement of formation from
\cite[Proposition~5]{S16}, noting that the total photon number of the reduced
(thermal) state on systems $R^{n}$ is equal to $nN_{S}$. The second inequality
follows from the LOCC\ monotonicity of the entanglement of formation, here
thinking of the person who possesses systems $RA^{n}$ to be in the same
laboratory as the one possessing the systems $\overline{A}_{i}$, while the
person who possesses the $\overline{B}_{i}$ systems is in a different
laboratory. The first equality follows from the fact that $(\phi_{RA}^{N_{S}%
})^{\otimes n}$ is in tensor product with $\Phi_{\overline{A}_{0}\overline
{B}_{0}}$, so that by a local channel, one may remove $(\phi_{RA}^{N_{S}%
})^{\otimes n}$ or append it for free. The final equality follows because the
entanglement of formation of the maximally entangled state is equal to the
logarithm of its Schmidt rank.
\end{proof}

A direct consequence of
Proposition~\ref{prop:non-asym-lower-bnd-e-cost-bosonic} is the following
lower bound on the entanglement cost of the thermal, amplifier, and
additive-noise channels:

\begin{proposition}
\label{prop:e-cost-lower-bnd-bosonic}Let $\mathcal{N}_{A\rightarrow B}$ be a
thermal, amplifier, or additive-noise channel, as defined in
\eqref{eq:thermal-channel}--\eqref{eq:additive-noise-channel}. Then the
entanglement costs $E_{C}(\mathcal{N})$
and $E_{C}^{(p)}(\mathcal{N})$
are bounded from below by the entanglement
cost of the state $\mathcal{N}_{A\rightarrow B}(\phi_{RA}^{N_{S}})$, where the two-mode
squeezed vacuum state
$\phi_{RA}^{N_{S}}$ has arbitrarily large squeezing strength:%
\begin{align}
E_{C}(\mathcal{N})  & \geq E_{C}^{(p)}(\mathcal{N})
\\
& \geq\sup_{N_{S}\geq0}E_{C}(\mathcal{N}_{A\rightarrow
B}(\phi_{RA}^{N_{S}}))\\
&  = \lim_{N_{S}\to\infty}E_{C}(\mathcal{N}_{A\rightarrow B}(\phi_{RA}^{N_{S}%
})). \label{eq:limit-e-cost-inf-sq}%
\end{align}

\end{proposition}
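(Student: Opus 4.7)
The first inequality $E_C(\mathcal{N})\geq E_C^{(p)}(\mathcal{N})$ is already established in \eqref{eq:sequential-greater-than-para}, so the real task is to show $E_C^{(p)}(\mathcal{N})\geq E_C(\mathcal{N}_{A\to B}(\phi_{RA}^{N_S}))$ for every fixed $N_S\geq 0$. The plan is to feed the non-asymptotic bound of Proposition~\ref{prop:non-asym-lower-bnd-e-cost-bosonic} into the definition of an achievable rate: pick an achievable parallel rate $R$, so that for any $\delta_0>0$ and any $\varepsilon\in(0,1/2)$, there exists an $(n,2^{n[R+\delta_0]},\varepsilon)$ parallel simulation code for all sufficiently large $n$. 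Apply the bound with $\omega_{RB}=\mathcal{N}_{A\to B}(\phi_{RA}^{N_S})$ to obtain
\begin{equation}
R+\delta_0\;\geq\;\tfrac{1}{n}E_F(R^n;B^n)_{\omega^{\otimes n}}-(\varepsilon'+2\delta)\,g_2(N_S/\delta)-\tfrac{1}{n}\bigl[2(1+\varepsilon')g_2(\varepsilon')+2h_2(\delta)\bigr],
\end{equation}
using the fact that $H(\phi_R^{N_S/\delta})=g_2(N_S/\delta)$ since the marginal of a two-mode squeezed vacuum is thermal.

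Next I would take $n\to\infty$. The regularized entanglement of formation converges to the entanglement cost of the state, i.e.\ $\tfrac{1}{n}E_F(R^n;B^n)_{\omega^{\otimes n}}\to E_C(\omega_{RB})=E_C(\mathcal{N}_{A\to B}(\phi_{RA}^{N_S}))$ by \cite{HHT01}, and the $\tfrac{1}{n}[\cdots]$ remainder vanishes. This leaves
\begin{equation}
R+\delta_0\;\geq\;E_C(\mathcal{N}_{A\to B}(\phi_{RA}^{N_S}))-(\varepsilon'+2\delta)\,g_2(N_S/\delta).
\end{equation}
Now I would send $\varepsilon\to 0$, choosing the coupled parameters so that $\varepsilon'\to 0$ and $\delta\to 0$ simultaneously—for instance $\varepsilon'=2\sqrt{2\varepsilon}$, giving $\delta\sim\sqrt{2\varepsilon}$. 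The key estimate is that $g_2(x)\sim\log_2 x$ for large $x$, so both $\varepsilon'\,g_2(N_S/\delta)$ and $\delta\,g_2(N_S/\delta)$ decay like $\sqrt{\varepsilon}\,\log(1/\sqrt{\varepsilon})\to 0$ for any fixed $N_S$. Hence $R+\delta_0\geq E_C(\mathcal{N}_{A\to B}(\phi_{RA}^{N_S}))$, and since $\delta_0>0$ was arbitrary and $R$ was an arbitrary achievable rate, $E_C^{(p)}(\mathcal{N})\geq E_C(\mathcal{N}_{A\to B}(\phi_{RA}^{N_S}))$. Taking the supremum over $N_S\geq 0$ yields the desired lower bound. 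The main obstacle, and the point I would be most careful about, is precisely this last limit: because $g_2(N_S/\delta)$ diverges as $\delta\to 0$, the proof works only because the prefactor $\varepsilon'+2\delta$ vanishes strictly faster than $1/g_2(N_S/\delta)$ grows, which in turn requires $N_S$ to be fixed before $\varepsilon$ is sent to zero.

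Finally, for the equality $\sup_{N_S\geq 0}E_C(\mathcal{N}_{A\to B}(\phi_{RA}^{N_S}))=\lim_{N_S\to\infty}E_C(\mathcal{N}_{A\to B}(\phi_{RA}^{N_S}))$, I would invoke monotonicity in $N_S$: as noted in the discussion preceding \eqref{eq:ralph-2}, for $N_S'\geq N_S$ the state $\phi_{RA}^{N_S}$ can be obtained from $\phi_{RA}^{N_S'}$ by a Gaussian LOCC channel whose $A$-side component consists solely of displacements, and these commute with each of the thermal, amplifier, and additive-noise channels. Hence $\mathcal{N}_{A\to B}(\phi_{RA}^{N_S})$ is an LOCC image of $\mathcal{N}_{A\to B}(\phi_{RA}^{N_S'})$, and LOCC monotonicity of the entanglement cost of states gives $E_C(\mathcal{N}_{A\to B}(\phi_{RA}^{N_S}))\leq E_C(\mathcal{N}_{A\to B}(\phi_{RA}^{N_S'}))$. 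The function $N_S\mapsto E_C(\mathcal{N}_{A\to B}(\phi_{RA}^{N_S}))$ is therefore non-decreasing, so its supremum coincides with its limit as $N_S\to\infty$.
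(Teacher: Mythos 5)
Your proposal is correct and follows essentially the same route as the paper: invoke \eqref{eq:sequential-greater-than-para}, apply Proposition~\ref{prop:non-asym-lower-bnd-e-cost-bosonic} with $\omega_{RB}=\mathcal{N}_{A\to B}(\phi_{RA}^{N_S})$, take $n\to\infty$ and then $\varepsilon\to 0$ with a coupling making $(\varepsilon'+2\delta)H(\phi_R^{N_S/\delta})\to 0$, identify the regularized entanglement of formation with $E_C$ of the state, and obtain the $\sup=\lim$ equality from LOCC monotonicity in $N_S$. The only cosmetic difference is your choice $\varepsilon'=2\sqrt{2\varepsilon}$ versus the paper's $\varepsilon'=\sqrt[4]{2\varepsilon}$; both make the correction term vanish.
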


\begin{proof}
The first inequality follows from definitions, as argued previously in \eqref{eq:sequential-greater-than-para}.
To arrive at the second inequality, in Proposition~\ref{prop:non-asym-lower-bnd-e-cost-bosonic}, set
$\varepsilon^{\prime}=\sqrt[4]{2\varepsilon}$, and take the limit as
$n\rightarrow\infty$ and then as $\varepsilon\rightarrow0$. Employing the fact
that $\lim_{\xi\rightarrow0}\xi H(H(\phi_{R}^{N_{S}/\xi}))=0$
\cite[Proposition~1]{S06} and applying definitions, we find for all $N_{S}%
\geq0$ that%
\begin{align}
E_{C}(\mathcal{N}) &\geq E_{C}^{(p)}(\mathcal{N})\\
 &  \geq\lim_{n\rightarrow\infty}\frac{1}{n}E_{F}%
([\mathcal{N}_{A\rightarrow B}(\phi_{RA}^{N_{S}})]^{\otimes n})\\
&  =E_{C}(\mathcal{N}_{A\rightarrow B}(\phi_{RA}^{N_{S}})).
\end{align}
Since the above bound holds for all $N_{S}\geq0$, we conclude the bound in the
statement of the proposition. The equality in \eqref{eq:limit-e-cost-inf-sq}
follows for the same reason as given for the equality in \eqref{eq:ralph-2},
and due to the fact that entanglement cost is non-increasing with respect to
an LOCC channel by definition.
\end{proof}

\subsection{Additivity of entanglement of formation for pure-loss and
pure-amplifier channels}

The bound in Proposition~\ref{prop:e-cost-lower-bnd-bosonic} is really only a
formal statement, as it is not clear how to evaluate the lower bound
explicitly. If it would however be possible to prove that%
\begin{equation}
\frac{1}{n}E_{F}([\mathcal{N}_{A\rightarrow B}(\phi_{RA}^{N_{S}})]^{\otimes
n})\overset{?}{=}E_{F}(\mathcal{N}_{A\rightarrow B}(\phi_{RA}^{N_{S}}))
\label{eq:additivity-bosonic}%
\end{equation}
for all integer $n\geq1$ and all $N_{S}\geq0$, then we could conclude the
following%
\begin{equation}
E_{C}(\mathcal{N})\overset{?}{\geq}\lim_{N_{S}\to\infty}E_{F}(\mathcal{N}%
_{A\rightarrow B}(\phi_{RA}^{N_{S}})),
\end{equation}
implying that this lower bound coincides with the upper bound from
\eqref{eq:e-cost-upper-thermal}--\eqref{eq:e-cost-upper-additive-noise}, due
to the recent result of \cite{TDR18}\ recalled in \eqref{eq:ralph-1}--\eqref{eq:ralph-2}.

In Proposition~\ref{prop:additivity-EoF-gaussian} below, I prove that the
additivity relation in \eqref{eq:additivity-bosonic} indeed holds whenever the
channel $\mathcal{N}_{A\rightarrow B}$ is a pure-loss channel $\mathcal{L}%
_{\eta}$ or pure-amplifier channel $\mathcal{A}_{G}$. The linchpin of the
proof is the multi-mode bosonic minimum output entropy theorem from
\cite{GHG15} and \cite[Theorem~1]{Giovannetti2015}.

\begin{proposition}
\label{prop:additivity-EoF-gaussian}For $\mathcal{N}_{A\rightarrow B}$ a
pure-loss channel $\mathcal{L}_{\eta}$ with transmissivity $\eta\in(0,1)$ or a
pure-amplifier channel $\mathcal{A}_{G}$ with gain $G>1$, the following
additivity relation holds for all integer $n\geq1$ and $N_{S}\in
\lbrack0,\infty)$:%
\begin{align}
\frac{1}{n}E_{F}([\mathcal{N}_{A\rightarrow B}(\phi_{RA}^{N_{S}})]^{\otimes
n})  &  =E_{F}(\mathcal{N}_{A\rightarrow B}(\phi_{RA}^{N_{S}}%
))\label{eq:additivity-1-eof}\\
&  =E_{F}^{g}(\mathcal{N}_{A\rightarrow B}(\phi_{RA}^{N_{S}})),
\label{eq:additivity-2-eof}%
\end{align}
where $\phi_{RA}^{N_{S}}$ is the two-mode squeezed vacuum state from
\eqref{eq:TMSV} and $E_{F}^{g}$ denotes the Gaussian entanglement of
formation. Thus, the entanglement cost of $\mathcal{N}_{A\rightarrow B}%
(\phi_{RA}^{N_{S}})$ is equal to its entanglement of formation:%
\begin{equation}
E_{C}(\mathcal{N}_{A\rightarrow B}(\phi_{RA}^{N_{S}}))=E_{F}(\mathcal{N}%
_{A\rightarrow B}(\phi_{RA}^{N_{S}})). \label{eq:e-cost-single-letterize}%
\end{equation}

\end{proposition}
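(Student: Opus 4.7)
The plan is to establish the chain $n\,E_F^g(\omega_{RB})\leq E_F(\omega_{RB}^{\otimes n})\leq n\,E_F(\omega_{RB})\leq n\,E_F^g(\omega_{RB})$, which collapses \eqref{eq:additivity-1-eof} and \eqref{eq:additivity-2-eof} into equalities; the final relation \eqref{eq:e-cost-single-letterize} then follows immediately from $E_C(\omega_{RB})=\lim_{n\to\infty}\frac{1}{n}E_F(\omega_{RB}^{\otimes n})$ of \cite{HHT01}. The middle inequality is subadditivity of $E_F$, and the rightmost is the trivial inclusion of Gaussian pure-state decompositions in the definition of $E_F^g$; the real work is the lower bound $E_F(\omega_{RB}^{\otimes n})\geq n\,E_F^g(\omega_{RB})$, which I shall obtain via the Koashi--Winter identity combined with the multi-mode bosonic MOE theorem of \cite{GHG15,Giovannetti2015}.

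Let $V_{A\to BE}$ be the quantum-limited Stinespring isometry of $\mathcal{N}\in\{\mathcal{L}_\eta,\mathcal{A}_G\}$, and let $|\omega\rangle_{RBE}\equiv V_{A\to BE}|\phi^{N_S}\rangle_{RA}$, a pure three-mode Gaussian state. Every pure-state decomposition of $\omega_{R^nB^n}^{\otimes n}$ corresponds to a rank-one POVM $\{|\mu^x\rangle\langle\mu^x|_{E^n}\}$ on $E^n$, yielding pure states $|\psi^x\rangle_{R^nB^n}\propto \langle\mu^x|_{E^n}(|\omega\rangle_{RBE})^{\otimes n}$ such that $H(R^n)_{\psi^x}=H(B^n)_{\psi^x}$ and $\psi^x_{B^n}=M^x\,\theta(N_S)^{\otimes n}\,(M^x)^\dagger/p_x$ for a Kraus operator $M^x$ of $\mathcal{N}^{\otimes n}$. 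The Koashi--Winter identity then rewrites $E_F(R^n;B^n)_{\omega^{\otimes n}}=nH(R)_\omega-C_\leftarrow(R^n;E^n)_{\omega^{\otimes n}}$, where $C_\leftarrow$ is the Henderson--Vedral classical correlation obtained by measurements on $E^n$; restricting to Gaussian POVMs yields the analogue $E_F^g(R;B)_\omega=H(R)_\omega-C_\leftarrow^g(R;E)_\omega$. The target bound thus reduces to the classical-correlation inequality $C_\leftarrow(\omega_{RE}^{\otimes n})\leq n\,C_\leftarrow^g(\omega_{RE})$.

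To prove this inequality, note that maximizing $C_\leftarrow$ is equivalent to minimizing $\sum_x p_x H(B^n)_{\psi^x}$, namely the average output entropy of a Kraus-type fine-graining of $\mathcal{N}^{\otimes n}$ acting on the thermal input $\theta(N_S)^{\otimes n}$. Via the Glauber--Sudarshan P-representation $\theta(N_S)^{\otimes n}=\int d^{2n}\alpha\,P_{N_S}(\alpha)|\alpha\rangle\langle\alpha|$ together with the displacement covariance of the pure-loss and pure-amplifier channels, the minimization reduces to bounding the output entropy of $\mathcal{N}^{\otimes n}$ on pure multi-mode inputs; the multi-mode MOE theorem of \cite{GHG15,Giovannetti2015} then guarantees that this minimum is additive and attained on tensor products of coherent states, which correspond to a Gaussian heterodyne-type POVM on $E^n$, and hence to a Gaussian decomposition of $\omega_{RB}^{\otimes n}$. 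The main obstacle is precisely this reduction: the MOE theorem constrains output entropies of \emph{pure} inputs, whereas $\sum_x p_x H(B^n)_{\psi^x}$ is an average of output entropies of \emph{different Kraus fine-grainings} acting on a mixed thermal input, and bridging the two rigorously requires careful use of displacement covariance together with the P-representation. Once that bridge is in place, $C_\leftarrow(\omega_{RE}^{\otimes n})=n\,C_\leftarrow^g(\omega_{RE})$ follows, giving \eqref{eq:additivity-1-eof} and \eqref{eq:additivity-2-eof} simultaneously, and then \eqref{eq:e-cost-single-letterize} is immediate from \cite{HHT01}.
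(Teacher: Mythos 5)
Your architecture is the same as the paper's: Koashi--Winter to convert $E_F$ into a measurement optimization on the purifying system $E$, the multi-mode minimum-output-entropy theorem of \cite{GHG15,Giovannetti2015} as the linchpin, and Gaussian (heterodyne) optimality to get $E_F=E_F^g$; the upper bounds $E_F(\omega^{\otimes n})\leq nE_F(\omega)\leq nE_F^g(\omega)$ and the final appeal to \cite{HHT01} are all fine. However, the step you yourself flag as ``the main obstacle'' is not a technicality to be smoothed over --- it is the entire content of the proof, and the bridge you sketch for it does not work. You propose to lower-bound $\sum_x p_x H(B^n)_{\psi^x}$ with $\psi^x_{B^n}=M^x\theta(N_S)^{\otimes n}(M^x)^\dagger/p_x$ by expanding the thermal input in the P-representation and invoking the MOE theorem on pure coherent-state inputs. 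But for a rank-one POVM element, $M^x|\alpha\rangle\langle\alpha|(M^x)^\dagger$ is a (subnormalized) \emph{pure} state with zero entropy, so concavity of entropy applied to the P-function decomposition yields only the trivial bound $\sum_x p_xH(B^n)_{\psi^x}\geq 0$; and summing over $x$ to recover the full channel output $\mathcal{N}^{\otimes n}(|\alpha\rangle\langle\alpha|)$ destroys the per-outcome decomposition you need. The MOE theorem simply does not apply to individual Kraus fine-grainings of the channel acting on the thermal input.

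The paper closes this gap by a different reduction, taken from \cite{PSBCL14}: trace out $B$ (not $R$) to obtain the two-mode squeezed \emph{thermal} state $\psi_{RE}$, whose covariance matrix is in the standard form $\mathrm{diag}$-block structure required by \cite{PSBCL14}; that reference shows that the conditional state of $R$ given \emph{any} measurement outcome on $E$ is the output of a single \emph{fixed} phase-insensitive Gaussian channel applied to an outcome-dependent input. One then bounds each conditional entropy $H(R^n)_{\psi^x}$ below by the minimum output entropy of the $n$-fold tensor power of that fixed channel, which is exactly what \eqref{eq:multi-mode-min-out-ent} controls (this is the paper's ``replace Eq.~(12) of \cite{PSBCL14} with the multi-mode statement''), with product heterodyne detection saturating the bound and inducing a Gaussian decomposition. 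Until you supply this (or an equivalent) reduction from arbitrary POVMs on $E^n$ to a bona fide minimum-output-entropy problem for a fixed channel, your argument establishes only the easy inequalities, so the proof is incomplete.
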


\begin{proof}
The proof of this proposition relies on three key prior results:

\begin{enumerate}
\item The main result of \cite{KW04} is that the entanglement of formation
$E_{F}(A;B)_{\psi}$ is equal to the classically-conditioned entropy
$H(A|\overline{E})_{\psi}$ for a tripartite pure state $\psi_{ABE}$:%
\begin{equation}
E_{F}(A;B)_{\psi}=H(A|\overline{E})_{\psi}, \label{eq:kw}%
\end{equation}
where%
\begin{equation}
H(A|\overline{E})_{\psi}=\inf_{\{\Lambda_{E}^{x}\}_{x}}\sum_{x}p_{X}%
(x)H(A)_{\sigma^{x}}, \label{eq:classical-cond-ent}%
\end{equation}
with the optimization taken with respect to a positive operator-valued measure
$\{\Lambda_{E}^{x}\}_{x}$ and%
\begin{align}
p_{X}(x)  &  \equiv\operatorname{Tr}\{\Lambda_{E}^{x}\psi_{E}\},\\
\sigma_{A}^{x}  &  \equiv\frac{1}{p_{X}(x)}\operatorname{Tr}_{E}%
\{(I_{A}\otimes\Lambda_{E}^{x})\psi_{AE}\}.
\end{align}
The sum in \eqref{eq:classical-cond-ent}\ can be replaced with an integral for
continuous-outcome measurements. The equality in \eqref{eq:kw}\ can be
understood as being a consequence of the quantum steering effect
\cite{Schroedinger1935}.

\item The determination of and method of proof for the classically-conditioned
entropy $H(A|\overline{E})_{\rho}$ of an arbitrary two-mode Gaussian state
$\rho_{AE}$ with covariance matrix in certain standard forms \cite{PSBCL14}.
(As remarked below, there is in fact a significant strengthening of the main
result of \cite{PSBCL14}, which relies on item~3 below.)

\item The multi-mode bosonic minimum output entropy theorem from \cite{GHG15}
and \cite[Theorem~1]{Giovannetti2015} (see the related work in
\cite{MGH13,GPCH13} also), which implies that the following identity holds for
a phase-insensitive, single-mode bosonic Gaussian channel $\mathcal{G}$ and
for all integer $n\geq1$:%
\begin{align}
\inf_{\rho^{(n)}}H(\mathcal{G}^{\otimes n}(\rho^{(n)})) & =
H(\mathcal{G}^{\otimes n}%
([|0\rangle\langle0|]^{\otimes n}))
\notag \\
& = nH(\mathcal{G}%
(|0\rangle\langle0|)), \label{eq:multi-mode-min-out-ent}%
\end{align}
where the optimization is with respect to an arbitrary $n$-mode input state
$\rho^{(n)}$ and $|0\rangle\langle0|$ denotes the bosonic vacuum state.
\end{enumerate}

Indeed, these three key ingredients, with the third being the linchpin, lead
to the statement of the proposition after making a few observations. Consider
that a purification of the state $\rho_{AB}=(\operatorname{id}_{R\rightarrow
A}\otimes\mathcal{L}_{\eta})(\phi_{RA}^{N_{S}})$ is given by%
\begin{equation}
\psi_{ABE}=(\operatorname{id}_{R\rightarrow A}\otimes\mathcal{B}%
_{AE\rightarrow BE}^{\eta})(\phi_{RA}^{N_{S}}\otimes|0\rangle\langle0|_{E}),
\end{equation}
where $\mathcal{B}_{AE\rightarrow BE}^{\eta}$ represents the unitary for a
beamsplitter interaction \cite{S17}\ and $|0\rangle\langle0|_{E}$ again
denotes the vacuum state.\ Tracing over the system $B$ gives the state
$\psi_{AE}=(\operatorname{id}_{R\rightarrow A}\otimes\mathcal{L}_{1-\eta
})(\phi_{RA}^{N_{S}})$, where $\mathcal{L}_{1-\eta}$ is a pure-loss channel of
transmissivity $1-\eta$. The state $\psi_{AE}$ is well known to have its
covariance matrix in standard form \cite{S17} (see discussion surrounding
\cite[Eq.~(5)]{PSBCL14}) as%
\begin{equation}%
\begin{bmatrix}
a & 0 & c & 0\\
0 & a & 0 & -c\\
c & 0 & b & 0\\
0 & -c & 0 & b
\end{bmatrix}
\end{equation}
and is also known as a two-mode squeezed thermal state \cite{S17}. As such,
the main result of \cite{PSBCL14} applies, and we can conclude that heterodyne
detection is the optimal measurement in \eqref{eq:classical-cond-ent}, which
in turn implies from \eqref{eq:kw}\ that the entanglement of formation of
$\rho_{AB}$ is equal to the Gaussian entanglement of formation.

However, what we require is that the same results hold for the multi-copy
state $\psi_{AE}^{\otimes n}$. Inspecting Eqs.~(9)--(14) of \cite{PSBCL14}, it
is clear that the same steps hold, except that we replace Eq.~(12) therein
with \eqref{eq:multi-mode-min-out-ent}. Thus, it follows that $n$ individual
heterodyne detections on each $E$ mode of $\psi_{AE}^{\otimes n}$ is the
optimal measurement, so that%
\begin{equation}
\frac{1}{n}H(A^{n}|\overline{E^{n}})_{\psi^{\otimes n}}=H(A|\overline
{E})_{\psi}. \label{eq:single-letterize-cond-ent}%
\end{equation}
By applying \eqref{eq:kw} (as applied to the states $\rho
_{AB}^{\otimes n}$ and $\psi_{AE}^{\otimes n}$), we conclude that%
\begin{equation}
\frac{1}{n}E_{F}(A^{n};B^{n})_{\rho^{\otimes n}}=E_{F}(A;B)_{\rho}.
\end{equation}
Furthermore, since the optimal measurement is given by heterodyne detection,
performing it on mode $E$ of $\psi_{ABE}$ induces a Gaussian ensemble of pure
states $\{p_{X}(x),\psi_{AB}^{x}\}$, which is the optimal decomposition of
$\psi_{AB}=\rho_{AB}$, and thus we conclude that $E_{F}(A;B)_{\rho}=E_{F}%
^{g}(A;B)_{\rho}$.

A similar analysis applies for the quantum-limited amplifier channel. I give
the argument for completeness. Consider that a purification of the state
$\sigma_{AB}=(\operatorname{id}_{R\rightarrow A}\otimes\mathcal{A}_{G}%
)(\phi_{RA}^{N_{S}})$ is given by%
\begin{equation}
\varphi_{ABE}=(\operatorname{id}_{R\rightarrow A}\otimes\mathcal{S}%
_{AE\rightarrow BE}^{G})(\phi_{RA}^{N_{S}}\otimes|0\rangle\langle0|_{E}),
\end{equation}
where $\mathcal{S}_{AE\rightarrow BE}^{G}$ represents the unitary for a
two-mode squeezer \cite{S17}\ and $|0\rangle\langle0|_{E}$ again denotes the
vacuum state.\ Tracing over the system $B$ gives the state $\varphi
_{AE}=(\operatorname{id}_{R\rightarrow A}\otimes\widetilde{\mathcal{A}}%
_{G})(\phi_{RA}^{N_{S}})$, where $\widetilde{\mathcal{A}}_{G}$ denotes the
channel conjugate to the quantum-limited amplifier. The state $\varphi_{AE}$
has its covariance matrix in the form (see Mathematica files included with the
arXiv posting or alternatively \cite[Appendix~D.4]{S17})%
\begin{equation}%
\begin{bmatrix}
a & 0 & c & 0\\
0 & a & 0 & c\\
c & 0 & b & 0\\
0 & c & 0 & b
\end{bmatrix}
,
\end{equation}
and so the same proof approach to get \eqref{eq:single-letterize-cond-ent} can
be used to conclude that%
\begin{equation}
\frac{1}{n}H(A^{n}|\overline{E^{n}})_{\varphi^{\otimes n}}=H(A|\overline
{E})_{\varphi}.
\end{equation}
Indeed, this additionally follows from the discussion after
\cite[Eqs.~(17)--(19)]{PSBCL14}. As such, we conclude in the same way that%
\begin{equation}
\frac{1}{n}E_{F}(A^{n};B^{n})_{\sigma^{\otimes n}}=E_{F}(A;B)_{\sigma}%
=E_{F}^{g}(A;B)_{\sigma}.
\end{equation}

The final statement about entanglement cost in
\eqref{eq:e-cost-single-letterize} follows from the fact that it is equal to the regularized entanglement of formation.
\end{proof}

\begin{remark}
As can be seen from the proof above, the multi-mode minimum output entropy
theorem recalled in \eqref{eq:multi-mode-min-out-ent} provides a significant
strengthening of the results from \cite{PSBCL14}. Indeed, for $\rho_{AE}$ any
two-mode Gaussian state considered in \cite{PSBCL14}, the following equality
holds%
\begin{equation}
\frac{1}{n}H(A^{n}|\overline{E^{n}})_{\rho^{\otimes n}}=H(A|\overline
{E})_{\rho},
\end{equation}
implying that the measurement $\{\Lambda_{E}^{x}\}_{x}$ optimal for the
right-hand side leads to a measurement $\{\Lambda_{E_{1}}^{x_{1}}\otimes
\cdots\otimes\Lambda_{E_{n}}^{x_{n}}\}_{x_{1},\ldots,x_{n}}$ that is optimal
for the left-hand side. Furthermore, by the relation in \eqref{eq:kw}, for any
purification $\psi_{ABE}$ of the state $\rho_{AE}$ mentioned above, we
conclude that%
\begin{equation}
\frac{1}{n}E_{F}(A^{n};B^{n})_{\psi^{\otimes n}}=E_{F}(A;B)_{\psi},
\label{eq:additivity-2-mode-g}%
\end{equation}
for all integer $n\geq1$, thus giving a whole host of two-mode Gaussian states
for which their entanglement cost is equal to their entanglement of formation:
$E_{F}(A;B)_{\rho}=E_{C}(\rho_{AB})=E_{F}^{g}(A;B)_{\rho}$. As far as I am
aware, these are the first examples of two-mode Gaussian states for which the
additivity relation in \eqref{eq:additivity-2-mode-g}\ has been explicitly shown.
\end{remark}

\begin{remark}
One might wonder whether the same method of proof as given in
Proposition~\ref{prop:additivity-EoF-gaussian} could be used to establish the
equalities in \eqref{eq:additivity-1-eof}--\eqref{eq:additivity-2-eof} for
general thermal, amplifier, and additive-noise channels. At the moment, it is
not clear how to do so. The issue is that the state $(\operatorname{id}%
_{R}\otimes\mathcal{L}_{\eta,N_{B}})(\phi_{RA}^{N_{S}})$ for $N_{B}>0$ is a
faithful state, meaning that it is positive definite and thus has two
symplectic eigenvalues $>1$. This means that any purification of it requires
at least four modes \cite[Section~III-D]{HW01}. Then tracing over the $B$
system leaves a three-mode state, of which we should be measuring two of them,
and so it is not clear how to apply the methods of \cite{PSBCL14} to such a
state. The same issues apply to the states $(\operatorname{id}_{R}%
\otimes\mathcal{A}_{G,N_{B}})(\phi_{RA}^{N_{S}})$ for $N_{B}>0$ and
$(\operatorname{id}_{R}\otimes\mathcal{T}_{\xi})(\phi_{RA}^{N_{S}})$ for
$\xi>0$, which are the states resulting from the amplifier and additive-noise
channels, respectively.
\end{remark}

\subsection{Entanglement cost of pure-loss and pure-amplifier channels}

\label{sec:pure-BGCs}

Based on the results in the previous sections, we conclude the following
theorem, which gives simple formulas for the entanglement cost of two
fundamental bosonic Gaussian channels:

\begin{theorem}
\label{thm:pure-bosonic-formulas} For a pure-loss channel $\mathcal{L}_{\eta}$
with transmissivity $\eta\in(0,1)$ or a pure-amplifier channel $\mathcal{A}%
_{G}$ with gain $G>1$, the following formulas characterize the entanglement
costs of these channels:%
\begin{align}
E_{C}(\mathcal{L}_{\eta})  &  = 
E_{C}^{(p)}(\mathcal{L}_{\eta})
=\frac{h_{2}(1-\eta)}{1-\eta}%
,\label{eq:loss-formula}\\
E_{C}(\mathcal{A}_{G})  &  =
E^{(p)}_{C}(\mathcal{A}_{G}) =
\frac{g_{2}(G-1)}{G-1}, \label{eq:amp-formula}%
\end{align}
where $h_{2}(\cdot)$ is the binary entropy defined in \eqref{eq:bin-ent} and
$g_{2}(\cdot)$ is the bosonic entropy function defined in \eqref{eq:bosonic-ent}.
\end{theorem}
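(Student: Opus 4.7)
The plan is to sandwich both $E_C(\mathcal{N})$ and $E_C^{(p)}(\mathcal{N})$ between matching upper and lower bounds that collapse to the limit $\lim_{N_S\to\infty} E_F(\mathcal{N}(\phi^{N_S}_{RA}))$ for $\mathcal{N}\in\{\mathcal{L}_\eta,\mathcal{A}_G\}$, and then to evaluate that limit explicitly. The upper bound is the one already developed in Section~\ref{sec:app-to-gaussian}: by Proposition~\ref{prop:upper-bnd-gauss} applied to the teleportation-simulation construction of \cite{TDR18}, together with \eqref{eq:ralph-1}--\eqref{eq:ralph-2}, we have
\begin{equation*}
E_C(\mathcal{N}) \leq E_F^g(A';B')_\omega = \lim_{N_S\to\infty} E_F(R;B)_{\mathcal{N}(\phi^{N_S})}.
\end{equation*}
The lower bound is Proposition~\ref{prop:e-cost-lower-bnd-bosonic}, which gives $E_C(\mathcal{N}) \geq E_C^{(p)}(\mathcal{N}) \geq \lim_{N_S\to\infty} E_C(\mathcal{N}(\phi^{N_S}_{RA}))$.

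The linchpin, which is what makes the sandwich close precisely for pure-loss and pure-amplifier channels, is Proposition~\ref{prop:additivity-EoF-gaussian}: its identity \eqref{eq:e-cost-single-letterize} converts $E_C(\mathcal{N}(\phi^{N_S}))$ into $E_F(\mathcal{N}(\phi^{N_S}))$. Substituting turns the lower bound into $\lim_{N_S\to\infty} E_F(\mathcal{N}(\phi^{N_S}))$, matching the upper bound. Hence
\begin{equation*}
E_C(\mathcal{N}) \;=\; E_C^{(p)}(\mathcal{N}) \;=\; \lim_{N_S\to\infty} E_F(\mathcal{N}(\phi^{N_S}_{RA})),
\end{equation*}
and the first equalities in \eqref{eq:loss-formula}--\eqref{eq:amp-formula} follow immediately.

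The remaining work is to evaluate this limit in closed form. From Proposition~\ref{prop:additivity-EoF-gaussian}, $E_F(\mathcal{N}(\phi^{N_S})) = E_F^g(\mathcal{N}(\phi^{N_S}))$, and its optimizer is the pure-state ensemble induced by heterodyne detection on the environment mode of the canonical purification, i.e., the beamsplitter unitary $\mathcal{B}^\eta_{AE\to BE}$ acting on the vacuum for $\mathcal{L}_\eta$, and the two-mode-squeezer $\mathcal{S}^G_{AE\to BE}$ on the vacuum for $\mathcal{A}_G$. Concretely, I would compute the covariance matrix of the post-heterodyne conditional Gaussian state on system $R$ via the standard Schur-complement update, evaluate its symplectic eigenvalue, and take the corresponding $g_2$ entropy; equivalently, one may simply take $N_S\to\infty$ in the explicit formulas in \cite[Eqs.~(4)--(6)]{TDR18}. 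In the quantum-limited case the environment is in the vacuum, so only one nontrivial symplectic eigenvalue appears, and the $N_S$-dependence cancels in the limit, leaving the stated $h_2(1-\eta)/(1-\eta)$ and $g_2(G-1)/(G-1)$.

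The main obstacle is the explicit asymptotic Gaussian computation: the covariance matrices of $\psi_{RE}$ and $\varphi_{RE}$ each diverge as $N_S\to\infty$, so the cancellation producing a finite expression must be handled carefully — e.g., by using $g_2(x) = \log_2(x) + \log_2(e) + O(1/x)$ at large $x$, or by verifying that the conditional covariance matrix on $R$ after heterodyne converges in the limit even though the marginal does not. Once this cancellation is carried out, the clean closed-form values follow, completing the proof.
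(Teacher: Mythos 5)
Your proposal is correct and follows essentially the same route as the paper: the upper bound via Proposition~\ref{prop:upper-bnd-gauss} combined with the teleportation-simulation result of \cite{TDR18} in \eqref{eq:ralph-1}--\eqref{eq:ralph-2}, the lower bound via Propositions~\ref{prop:e-cost-lower-bnd-bosonic} and \ref{prop:additivity-EoF-gaussian}, and the closed-form evaluation by taking $N_S\to\infty$ in \cite[Eqs.~(4)--(6)]{TDR18}. The only difference is that you sketch the asymptotic Gaussian cancellation explicitly, whereas the paper delegates that computation to the cited formulas and supplementary files.
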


\begin{proof}
Recalling the discussion in Section~\ref{sec:app-to-gaussian}, for a pure-loss
and pure-amplifier channel, there exist respective resource states
$\omega_{A^{\prime}B^{\prime}}^{\eta}$ and $\omega_{A^{\prime}B^{\prime}}^{G}$
such that%
\begin{align}
E_{C}(\mathcal{L}_{\eta})  &  \leq E_{F}(A^{\prime};B^{\prime})_{\omega^{\eta
}}\\
&  =\lim_{N_{S}\rightarrow\infty}E_{F}(R;B)_{\sigma^{\eta}(N_{S}%
)},\label{eq:ralph-result-1}\\
E_{C}(\mathcal{A}_{G})  &  \leq E_{F}(A^{\prime};B^{\prime})_{\omega^{G}}\\
&  =\lim_{N_{S}\rightarrow\infty}E_{F}(R;B)_{\sigma^{G}(N_{S})},
\label{eq:ralph-result-2}%
\end{align}
where%
\begin{align}
\sigma^{\eta}(N_{S})_{RB}  &  \equiv(\operatorname{id}_{R}\otimes
\mathcal{L}_{\eta})(\phi_{RA}^{N_{S}}),\\
\sigma^{G}(N_{S})_{RB}  &  \equiv(\operatorname{id}_{R}\otimes\mathcal{A}%
_{G})(\phi_{RA}^{N_{S}}),
\end{align}
with the equalities in \eqref{eq:ralph-result-1} and
\eqref{eq:ralph-result-2}\ being one of the main results of \cite{TDR18}.
Furthermore, explicit formulas for $E_{F}(A^{\prime};B^{\prime})_{\omega
^{\eta}}$ and $E_{F}(A^{\prime};B^{\prime})_{\omega^{G}}$ have been given in
\cite[Eqs.~(4)--(6)]{TDR18}, and evaluating these formulas leads to the
expressions in \eqref{eq:loss-formula}--\eqref{eq:amp-formula} (supplementary
Mathematica files that automate these calculations are available with the
arXiv posting of this paper).

On the other hand, Propositions~\ref{prop:e-cost-lower-bnd-bosonic} and
\ref{prop:additivity-EoF-gaussian}\ imply that%
\begin{align}
E_{C}(\mathcal{L}_{\eta})  & \geq 
E_{C}^{(p)}(\mathcal{L}_{\eta}) \\& \geq\lim_{N_{S}\rightarrow\infty}E_{C}%
(\sigma^{\eta}(N_{S})_{RB})\\
&  =\lim_{N_{S}\rightarrow\infty}E_{F}(R;B)_{\sigma^{\eta}(N_{S})},\\
E_{C}(\mathcal{A}_{G})  &  \geq
E^{(p)}_{C}(\mathcal{A}_{G}) \\
& \geq 
\lim_{N_{S}\rightarrow\infty}E_{C}(\sigma
^{G}(N_{S})_{RB})\\
&  =\lim_{N_{S}\rightarrow\infty}E_{F}(R;B)_{\sigma^{G}(N_{S})}.
\end{align}
Combining the inequalities above, we conclude the statement of the theorem.
\end{proof}

\bigskip
It is interesting to consider various limits of the formulas in \eqref{eq:loss-formula}--\eqref{eq:amp-formula}:%
\begin{align}
\lim_{\eta\rightarrow1}\frac{h_{2}(1-\eta)}{1-\eta}  &  =\lim_{G\rightarrow
1}\frac{g_{2}(G-1)}{G-1}=\infty,\\
\lim_{\eta\rightarrow0}\frac{h_{2}(1-\eta)}{1-\eta}  &  =\lim_{G\rightarrow
\infty}\frac{g_{2}(G-1)}{G-1}=0.
\end{align}
We expect these to hold because the channels approach the ideal channel in the
limits $\eta,G\rightarrow1$, which we previously argued has infinite
entanglement cost, while they both approach the completely depolarizing
(useless) channel in the no-transmission limit $\eta\rightarrow0$ and
infinite-amplification limit $G\rightarrow\infty$. Furthermore, these formulas
obey the symmetry%
\begin{equation}
\frac{h_{2}(1-\eta)}{1-\eta}=\frac{g_{2}(1/\eta-1)}{1/\eta-1},
\end{equation}
which is consistent with the idea that the transformation $\eta\rightarrow
1/\eta$ takes a channel of transmissivity $\eta\in\lbrack0,1]$ and produces a
channel of gain $1/\eta$. Finally, we have the Taylor expansions:%
\begin{align}
\frac{h_{2}(1-\eta)}{1-\eta}  &  =\frac{\eta}{\ln2}(1-\ln(\eta))+O(\eta
^{2}),\\
\frac{g_{2}(G-1)}{G-1}  &  =\frac{1+\ln(G)}{G\ln2}+O(1/G^{2}),
\end{align}
which are relevant in the low-transmissivity and high-gain regimes.

\begin{figure}[ptb]
\begin{center}
\includegraphics[width=3.4in]
{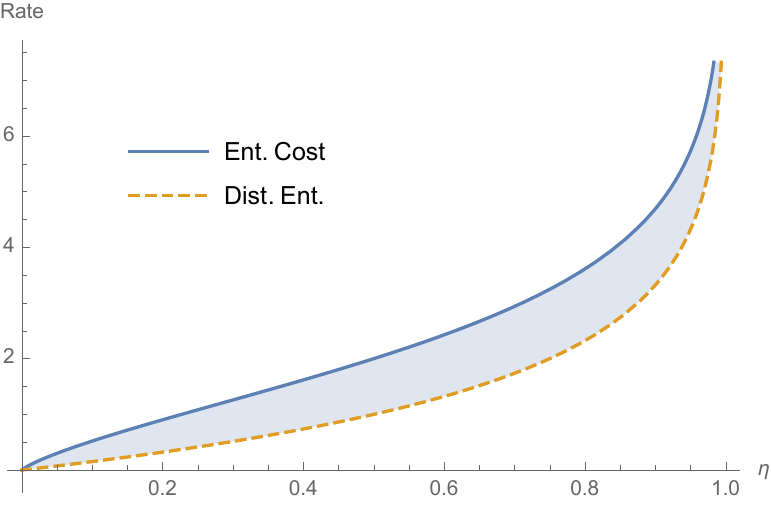}
\end{center}
\caption{Plot of the entanglement cost $E_{C}(\mathcal{L}_{\eta}) =\frac
{h_{2}(1-\eta)}{1-\eta}$ and the distillable entanglement $E_{D}%
(\mathcal{L}_{\eta}) =-\log_{2}(1-\eta)$ of the pure-loss channel
$\mathcal{L}_{\eta}$ as a function of the transmissivity $\eta\in[0,1]$, with
the shaded area demonstrating the gap between them. The units for rate on the vertical axis are ebits per channel use, and $\eta$ on the horizontal axis is dimensionless.}%
\label{fig:EoF-pure-loss}%
\end{figure}

In \cite{PLOB15}, simple formulas for the distillable entanglement of these
channels were determined and given by%
\begin{align}
E_{D}(\mathcal{L}_{\eta})  &  =-\log_{2}(1-\eta),\\
E_{D}(\mathcal{A}_{G})  &  =-\log_{2}(1-1/G).
\end{align}
Thus, the prior results and the formulas in
Theorem~\ref{thm:pure-bosonic-formulas} demonstrate that the resource theory
of entanglement for these channels is irreversible. That is, if one started
from a pure-loss channel of transmissivity $\eta$ and distilled ebits from it
at the ideal rate of $-\log_{2}(1-\eta)$, and then subsequently wanted to use
these ebits to simulate a pure-loss channel with the same transmissivity, this
is not possible, because the rate at which ebits are distilled is not
sufficient to simulate the channel again. The same statement applies to the
pure-amplifier channel. Figures~\ref{fig:EoF-pure-loss} and
\ref{fig:EoF-pure-amp} compare the formulas for entanglement cost and
distillable entanglement of these channels, demonstrating that there is a
noticeable gap between them. I note here that the differences are given by%
\begin{align}
E_{C}(\mathcal{L}_{\eta})-E_{D}(\mathcal{L}_{\eta})  &  =\frac{-\eta\log
_{2}\eta}{1-\eta},\\
E_{C}(\mathcal{A}_{G})-E_{D}(\mathcal{A}_{G})  &  =\frac{\log_{2}G}{G-1},
\end{align}
implying that these differences are strictly greater than zero for all the
relevant channel parameter values $\eta\in(0,1)$ and $G>1$.

\begin{figure}[ptb]
\begin{center}
\includegraphics[width=3.4in]
{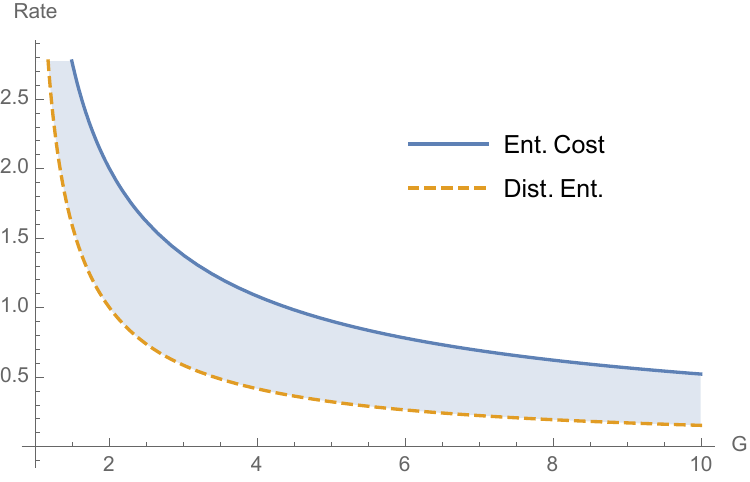}
\end{center}
\caption{Plot of the entanglement cost $E_{C}(\mathcal{A}_{G})=\frac
{g_{2}(G-1)}{G-1}$ and the distillable entanglement $E_{D}(\mathcal{A}%
_{G})=-\log_{2}(1-1/G)$ of the pure-amplifier channel $\mathcal{A}_{G}$ as a
function of the gain $G\geq1$, with the shaded area demonstrating the gap
between them. The units for rate on the vertical axis are ebits per channel use, and $G$ on the horizontal axis is dimensionless.}%
\label{fig:EoF-pure-amp}%
\end{figure}

\section{Extension to other resource theories}

\label{sec:resource-theory-gen}

Let us now consider how to extend many of the
concepts in this paper to other resource theories (see \cite{CG18} for a review of quantum resource theories). In fact, this can be
accomplished on a simple conceptual level by replacing \textquotedblleft
LOCC\ channel\textquotedblright\ with \textquotedblleft free
channel,\textquotedblright\ \textquotedblleft separable
state\textquotedblright\ with \textquotedblleft free state,\textquotedblright%
\ and (roughly) \textquotedblleft maximally entangled state\textquotedblright%
\ with resource state throughout the paper. To be precise, let $F$ denote the
set of free states for a given resource theory, and let $\mathcal{F}$ be a
free channel, which takes a free state to a free state. In \cite[Section~7]%
{KW17a}, a general notion of distillation of a resource from $n$ uses of a
channel was given (see Figure~4 therein). In particular, one interleaves $n$
uses of a given channel by free channels, and the goal is to distill resource
from the $n$ channels. As a generalization of a teleportation-simulable
channel with an associated resource state, the notion of a $\nu$-freely simulable
channel was introduced as a channel $\mathcal{N}$ that can be simulated as%
\begin{equation}
\mathcal{N}_{A\rightarrow B}(\rho_{A})=\mathcal{F}_{AE\rightarrow
B}^{\text{sim}}(\rho_{A}\otimes\nu_{E}),
\end{equation}
where $\mathcal{F}^{\text{sim}}$ is a free channel and $\nu$ is some resource
state. The implications of this for distillation protocols was discussed in
\cite[Section~7]{KW17a}, which is merely that the rate at which resource can
be distilled is limited by the resourcefulness of the underlying resource
state $\nu$.

Going forward, we can also consider a resource-seizable channel in a general resource theory to be a $\nu$-freely simulable
channel for
which, by pre- and post-processing, one can seize the underlying resource
state $\nu$ as%
\begin{equation}
\mathcal{F}_{RB\rightarrow E}^{\text{post}}(\mathcal{N}_{A\rightarrow
B}(\kappa_{RA}^{\text{pre}}))=\nu_{E},
\end{equation}
where $\kappa_{RA}^{\text{pre}}$ is a free state and $\mathcal{F}%
_{RB\rightarrow E}^{\text{post}}$ is a free channel, extending Definition~\ref{def:res-seiz-ch}.

The general notion of channel simulation, as presented in
Section~\ref{sec:ent-cost-new-def}, can be considered in any resource theory
also. Again, the main idea is really to replace \textquotedblleft LOCC
channel\textquotedblright\ with \textquotedblleft free
channel\textquotedblright\ and \textquotedblleft maximally entangled
state\textquotedblright\ with \textquotedblleft resourceful
state\textquotedblright\ in the protocol depicted in
Figure~\ref{fig:adaptive-prot}, and the goal is to determine the minimum rate
at which resourcefulness is needed in order to simulate $n$ uses of a given
channel. If the channels are resource-seizable as discussed above, then the
theory should significantly simplify, as has occurred in this paper for the
entanglement theory of channels (see Theorem~\ref{thm:tp-sim}). Furthermore,
along the lines of the discussion in Section~\ref{sec:LOCC-mon-cost} (and
related to \cite[Section~III-D-5]{CG18}), suppose that a channel
$\mathcal{N}_{A\rightarrow B}$ can be realized from another channel
$\mathcal{M}_{A^{\prime}\rightarrow B^{\prime}}$ via a preprocessing free
channel $\mathcal{F}_{A\rightarrow A^{\prime}M}^{\text{pre}}$ and a
postprocessing free channel $\mathcal{F}_{B^{\prime}M\rightarrow
B}^{\text{post}}$\ as%
\begin{equation}
\mathcal{N}_{A\rightarrow B}=\mathcal{F}_{B^{\prime}M\rightarrow
B}^{\text{post}}\circ\mathcal{M}_{A^{\prime}\rightarrow B^{\prime}}%
\circ\mathcal{F}_{A\rightarrow A^{\prime}M}^{\text{pre}}.
\end{equation}
Then for the same reasons given there, the simulation cost of $\mathcal{N}$
should never exceed the simulation cost of $\mathcal{M}$.

Finally, let us note that some discussions about channel simulation for the
resource theory of coherence have appeared in the last paragraph of
\cite{BGMW17}, as well as the last paragraphs of \cite{DFWRSCW18}. It is clear
from the findings of the present paper that identifying interesting
resource-seizable channels could be a useful first step for understanding
interconversion costs of simulating one channel from another in the resource
theory of coherence. It could also be helpful in further understanding channel
simulation in the resource theory of thermodynamics \cite{FBB18}.

\section{Conclusion}

\label{sec:concl}In summary, this paper has provided a new definition for the
entanglement cost of a channel, in terms of the most general strategy that a
discriminator could use to distinguish $n$ uses of the channel from its
simulation. I established an upper bound on the entanglement cost of a
teleportation-simulable channel in terms of the entanglement cost of the
underlying resource state, and I proved that the bound is saturated in the
case that the channel is resource-seizable (Definition~\ref{def:res-seiz-ch}).
I then established single-letter formulas for the entanglement cost of
erasure, dephasing,  three-dimensional Werner--Holevo channels, and epolarizing channels (complements of depolarizing channels), by
leveraging existing results about the entanglement cost of their Choi states.
I finally considered single-mode bosonic Gaussian channels, establishing
bounds on the entanglement cost of the thermal, amplifier, and additive-noise
channels, while giving simple formulas for the entanglement cost of pure-loss
and pure-amplifier channels. By relating to prior work on the distillable
entanglement of these channels, it became clear that the resource theory of
entanglement for quantum channels is irreversible.

Going forward from here, there are many directions to pursue. The
discrimination protocols considered in Section~\ref{sec:ent-cost-new-def}\ do
not impose any realistic energy constraint on the states that can be used in
discriminating the actual $n$ uses of the channel from the simulation. We
could certainly do so by imposing that the average energy of all the states
input to the actual channel or its simulation should be less than a threshold,
and the result is to demand only that the \textit{energy-constrained strategy
norm} (defined naturally as an extension of both the strategy norm
\cite{CDP08a,GDP09,G12}\ and the energy-constrained diamond norm
\cite{Sh17,Win17}) is less than $\varepsilon\in(0,1)$. To be specific, let
$H_{A}$ be a (positive semi-definite) Hamiltonian acting on the input of the
channel $\mathcal{N}_{A\to B}$ and let $E\in[0,\infty)$ be an energy
constraint. Then, demanding that the supremum in \eqref{eq:good-sim} is taken
over all strategies such that
\begin{equation}
\frac{1}{n}\sum_{i=1}^{n} \operatorname{Tr}\{H_{A} \rho_{A_{i}}\},\ \frac
{1}{n}\sum_{i=1}^{n} \operatorname{Tr}\{H_{A} \tau_{A_{i}}\} \leq E,
\end{equation}
the resulting quantity is an energy-constrained strategy norm. With an energy
constraint in place, one would expect that less entanglement is required to
simulate the channel than if there is no constraint at all, and the resulting
entanglement cost would depend on the given energy constraint. For example, Proposition~\ref{prop:non-asym-lower-bnd-e-cost-bosonic} leads to a lower bound on entanglement cost for an energy-constrained sequential simulation, but it remains open to determine if there is a matching upper bound.

Similar to how measures like squashed entanglement \cite{CW04} and relative
entropy of entanglement \cite{VP98} allow for obtaining converse bounds or
fundamental limitations on the distillation rates of quantum states or
channels, simply by making a clever choice of a squashing channel or separable
state, it would be useful to have a measure like this for bounding
entanglement cost from below. That is, it would be desirable for the measure
to involve a supremum over a given set of test states or channels rather than
an infimum as is the case for squashed entanglement and relative entropy of
entanglement. For example, it would be useful to be able to bound the
entanglement cost of thermal, amplifier, and additive-noise channels from
below, in order to determine how tight are the upper bounds in
\eqref{eq:e-cost-upper-thermal}--\eqref{eq:e-cost-upper-additive-noise}.
Progress on this front is available in \cite{WD17}, but more results in this
area would be beneficial.

One of the main tools used in the analysis of the (parallel) entanglement cost
of channels from \cite{BBCW13}\ is a de-Finetti-style approach, consisting of
the post-selection technique \cite{CKR09}. In particular, the problem of
asymptotic (parallel)\ channel simulation was reduced to simulating the
channel on a single state, called the universal de Finetti state. For the
asymptotic theory of (sequential)\ entanglement cost of channels, could there
be a single universal adaptive channel discrimination protocol to consider,
such that simulating the channel well for such a protocol would imply that it
has been simulated well for all protocols?

For the task of entanglement cost, one could modify the set of free channels
to be either those that completely preserve the positivity of partial
transpose \cite{Rai99,Rai01} or are $k$-extendible in the sense of
\cite{KDWW18}. Could we find simpler lower bounds on entanglement cost of
channels in this way? The semi-definite programming quantity from \cite{WD17}
could be helpful here also. After the above question was posed in the arXiv posting of the present paper, the exact entanglement cost has been solved in \cite{WW18} for the case of \textit{exact} channel simulation, with the set of free channels taken to be those that completely preserve the positivity of partial
transpose. 

Another way to think about quantum channel simulation is to allow the
entanglement to be free but count the cost of classical communication. This
was the approach taken for the reverse Shannon theorem \cite{BDHSW09,BCR09},
and these works also considered only parallel channel simulation. How are the
results there affected if the goal is sequential channel simulation instead?
Is the previous answer from \cite{BDHSW09,BCR09}, the mutual information of
the channel, robust under this change? How do prior results on simulation of
quantum measurements \cite{Winter01a,WHBH12,BRW14} hold up under this change?
A comprehensive summary of results on parallel simulation of quantum channels, including the quantum reverse Shannon theorem, measurement simulation, and entanglement cost,
is available in \cite{B13}.

Finally, is there an example of a channel for which its sequential
entanglement cost is strictly greater than its parallel entanglement cost? The
examples discussed here are those for which either there are equalities or no
conclusion could be drawn. Evidence from quantum channel discrimination
\cite{Harrow10} and related evidence from \cite{DW17} suggests the
possibility. One concrete example to examine in this context is the channel
presented in \cite[Appendix~A]{CM17}, given that it is not implementable from
its image, as discussed there.

\begin{acknowledgements}
I thank Gerardo Adesso, Eneet Kaur, Debbie Leung, Alexander M\"uller-Hermes, Kaushik Seshadreesan, Xin Wang, John Watrous, and Andreas Winter
for helpful discussions, especially Andreas Winter for inspiring discussions
from April 2017 regarding a general notion of channel simulation. I
acknowledge support from the National Science Foundation under grant no.~1350397.
\end{acknowledgements}

\appendix

\section{Relation between resource-seizable channels and those that are
implementable from their image}

\label{app:res-seize-impl-image}

Definition~\ref{def:res-seiz-ch}\ introduced the notion of a resource-seizable
channel, and Section~\ref{sec:resource-theory-gen}\ discussed how this notion
can play a role in an arbitrary resource theory. In \cite[Appendix~A]{CM17}, a
channel $\mathcal{N}_{A\rightarrow B}$\ was defined to be implementable from
its image if there exists a state $\sigma_{A'A}$ and an LOCC\ channel
$\mathcal{L}_{AA'B^{\prime}\rightarrow B}$\ such that the following equality
holds for all input states $\rho_{A}$:%
\begin{equation}
\mathcal{N}_{A\rightarrow B}(\rho_{A})=\mathcal{L}_{AA'B^{\prime}\rightarrow
B}(\rho_{A}\otimes\mathcal{N}_{A^{\prime\prime}\rightarrow B^{\prime}}%
(\sigma_{A^{\prime}A^{\prime\prime}})),\label{eq:imp-from-image}%
\end{equation}
where system  $A^{\prime\prime}$ is isomorphic to system $A$
and system $B^{\prime}$ is isomorphic to system $B$. An example of a channel
that is not implementable from its image was discussed at length in
\cite[Appendix~A]{CM17}.

Here, I prove that a channel is resource-seizable in the resource theory of entanglement if and only if it is
implementable from its image. To see this, suppose that a channel is
implementable from its image. Then, given the above structure in
\eqref{eq:imp-from-image}, it is clear that $\mathcal{N}_{A\rightarrow B}$ is
teleportation simulable with associated resource state given by $\omega
_{A^{\prime}B^{\prime}}=\mathcal{N}_{A^{\prime\prime}\rightarrow B^{\prime}%
}(\sigma_{A^{\prime}A^{\prime\prime}})$. Thus, one can trivially seize the
resource state $\omega_{A^{\prime}B^{\prime}}$ by sending in the input state
$\sigma_{A^{\prime}A^{\prime\prime}}$, which is clearly separable between
Alice and Bob, given that Bob's \textquotedblleft system\textquotedblright%
\ here is trivial.

Now suppose that a teleportation-simulable channel is resource-seizable, as
in Definition~\ref{def:res-seiz-ch}. This means that%
\begin{equation}
\mathcal{N}_{A\rightarrow B}(\rho_{A})=\mathcal{M}_{AA^{\prime}B^{\prime
}\rightarrow B}(\rho_{A}\otimes\omega_{A^{\prime}B^{\prime}}),
\end{equation}
where $\omega_{A^{\prime}B^{\prime}}$ is the resource state and $\mathcal{M}%
_{AA^{\prime}B^{\prime}\rightarrow B}$ is an LOCC\ channel. Furthermore, since
it is resource-seizable, this means that there exists a separable state
$\rho_{A_{M}AB_{M}}$ and a postprocessing LOCC\ channel $\mathcal{D}%
_{A_{M}BB_{M}\rightarrow A^{\prime}B^{\prime}}$ such that%
\begin{equation}
\mathcal{D}_{A_{M}BB_{M}\rightarrow A^{\prime}B^{\prime}}(\mathcal{N}%
_{A\rightarrow B}(\rho_{A_{M}AB_{M}}))=\omega_{A^{\prime}B^{\prime}}.
\end{equation}
To see that the channel is implementable from its image, consider that
$\rho_{A_{M}AB_{M}}$ has a decomposition as follows, given that it is
separable:%
\begin{equation}
\sum_{x}p_{X}(x)\psi_{A_{M}A}^{x}\otimes\phi_{B_{M}}^{x},
\end{equation}
for $p_{X}$ a probability distribution and $\{\psi_{A_{M}A}^{x}\}_{x}$ and
$\{\phi_{B_{M}}^{x}\}_{x}$ sets of pure states. Now define the input state
$\sigma_{A_{M}AX_{A}}$ as%
\begin{equation}
\sigma_{A_{M}AX_{A}}\equiv\sum_{x}p_{X}(x)\psi_{A_{M}A}^{x}\otimes
|x\rangle\langle x|_{X_{A}},
\end{equation}
and note that this is the state we can use for implementing the channel's
image. Define the LOCC\ measure-prepare channel $\mathcal{P}_{X_{A}%
\rightarrow B_{M}}$ as%
\begin{equation}
\mathcal{P}_{X_{A}\rightarrow B_{M}}(\cdot)\equiv\sum_{x}\langle x|_{X_{A}%
}(\cdot)|x\rangle_{X_{A}}\ \phi_{B_{M}}^{x},
\end{equation}
which is understood to be implemented via LOCC by measuring Alice's system $X_A$, communicating the outcome $x$ to Bob, who 
then prepares the state $\phi^x_{B_M}$ based on the outcome. We find that
\begin{multline}
(\mathcal{D}_{A_{M}BB_{M}\rightarrow A^{\prime}B^{\prime}}\circ\mathcal{P}%
_{X_{A}\rightarrow B_{M}}\circ\mathcal{N}_{A\rightarrow B})(\sigma
_{A_{M}AX_{A}})\\
=\omega_{A^{\prime}B^{\prime}}.
\end{multline}
We finally conclude that%
\begin{align}
& \mathcal{N}_{A\rightarrow B}(\rho_{A})\nonumber\\
& =\mathcal{M}_{AA^{\prime}B^{\prime}\rightarrow B}(\rho_{A}\otimes
\omega_{A^{\prime}B^{\prime}})\\
& =\mathcal{L}_{AA_{M}X_{A}\bar{B}\rightarrow B}(\rho_{A}\otimes
\mathcal{N}_{\bar{A}\rightarrow\bar{B}}(\sigma_{A_{M}\bar{A}X_{A}})),
\end{align}
where%
\begin{multline}
\mathcal{L}_{A A_{M}X_{A}\bar{B}\rightarrow B}\equiv\\
\mathcal{M}_{AA^{\prime}B^{\prime}\rightarrow B}\circ\mathcal{D}_{A_{M}\bar
{B}B_{M}\rightarrow A^{\prime}B^{\prime}}\circ\mathcal{P}_{X_{A}\rightarrow
B_{M}},
\end{multline}
so that the channel is implementable from its image by inputting the state
$\sigma_{A_{M}AX_{A}}$ and postprocessing with the LOCC\ channel
$\mathcal{M}_{AA^{\prime}B^{\prime}\rightarrow B}\circ\mathcal{D}_{A_{M}%
BB_{M}\rightarrow A^{\prime}B^{\prime}}\circ\mathcal{P}_{X_{A}\rightarrow
B_{M}}$.

\section{Relation between Choi state of a complementary channel and maximally mixed state sent through isometric extension}

\label{sec:relation-Choi-comp-max-mixed-iso}

The purpose of this appendix is to prove the equality in \eqref{eq:max-mixed-iso-epolar}.
Consider a $d$-dimensional depolarizing channel%
\begin{equation}
\rho\rightarrow\left(  1-p\right)  \rho+p\frac{I}{d}.
\end{equation}
As noted in \cite[Eq.~(3.2)]{DFH06}, a Kraus representation for this channel is as follows:%
\begin{equation}
\left\{  \sqrt{1-p}I,\{\sqrt{p/d}|i\rangle\langle j|\}_{i,j}\right\}
.\label{eq:kraus-op-depo}%
\end{equation}
This is because%
\begin{align}
& \left[  \sqrt{1-p}I\right]  \rho\left[  \sqrt{1-p}I\right] \notag  \\
& \qquad +\sum
_{i,j}\left[  \sqrt{p/d}|i\rangle\langle j|\right]  \rho\left[  \sqrt
{p/d}|j\rangle\langle i|\right] \notag  \\
& =\left(  1-p\right)  \rho+\frac{p}{d}\sum_{i}|i\rangle\langle i|\sum
_{j}\langle j|\rho|j\rangle\\
& =\left(  1-p\right)  \rho+p\operatorname{Tr}\{\rho\}\frac{I}{d}.
\end{align}

Now consider a generic channel $\mathcal{N}_{A\rightarrow B}$\ with Kraus
operators $\{N^{i}\}_{i}$ so that an isometric extension is given by $\sum
_{i}N^{i}\otimes|i\rangle_{E}$. Send the maximally mixed state $\pi=I/d$
through the isometric extension  $\sum_{i}N^{i}\otimes|i\rangle
_{E}$. This leads to the state
\begin{equation}
\frac{1}{d}\sum_{i,j}N^{i}N^{j\dag}\otimes|i\rangle\langle j|_{E}.
\end{equation}
Furthermore, a complementary channel of the original channel, resulting from the isometric extension $\left.\sum
_{i}N^{i}\otimes|i\rangle_{E}\right.$, is then%
\begin{equation}
\rho\rightarrow\mathcal{N}_{A\rightarrow E}^{c}(\rho)=\sum_{i,j}%
\operatorname{Tr}\{N^{i}\rho N^{j\dag}\}|i\rangle\langle j|_{E}.
\end{equation}
The Choi state for this complementary channel is given by%
\begin{align}
\mathcal{N}_{A\rightarrow E}^{c}(\Phi_{RA})  & =\frac{1}{d}\sum_{k,l,i,j}%
|k\rangle\langle l|_{R}\otimes\operatorname{Tr}\{N^{i}|k\rangle\langle
l|_{A}N^{j\dag}\}|i\rangle\langle j|_{E}\notag \\
& =\frac{1}{d}\sum_{k,l,i,j}|k\rangle\langle l|_{R}\otimes\langle
l|_{A}N^{j\dag}N^{i}|k\rangle|i\rangle\langle j|_{E} \notag \\
& =\frac{1}{d}\sum_{k,l,i,j}|k\rangle\langle l|_{A}N^{j\dag}N^{i}%
|k\rangle\langle l|_{R}\otimes|i\rangle\langle j|_{E}\notag \\
& =\frac{1}{d}\sum_{i,j}T(N^{j\dag}N^{i})\otimes|i\rangle\langle j|_{E},
\end{align}
where $T(N^{j\dag}N^{i})$ denotes the transpose of $N^{j\dag}N^{i}$. If it
holds that $N^{i}N^{j\dag}=T(N^{j\dag}N^{i})$, then we conclude that the state
resulting from sending in the maximally mixed state to the isometric extension
of the channel is the same as the Choi state of the complementary channel.
This is the case for the depolarizing channel with the Kraus operators in
\eqref{eq:kraus-op-depo}. Since all complementary channels and isometric
extensions of a channel are related by an isometry acting on the environment
system, we are arrive at the same conclusion for any isometric extension and
the corresponding complementary channel to which it leads.

\section{Matlab code for computing Rains relative entropy}

\label{sec:matlab-listing}

This appendix provides a brief listing of Matlab code that can be used to compute the Rains relative entropy of a bipartite state $\rho_{AB}$ \cite{Rai01,AdMVW02}. The code requires the QuantInf package in order to generate a random state \cite{CubittMatlab}, the CVX package for semi-definite programming optimization
\cite{cvx}, and the CVXQuad package \cite{HFawzi} for relative entropy optimization \cite{Fawzi2018,FF18}.

\begin{widetext}
\begin{lstlisting}[caption=Matlab code for calculating the Rains relative entropy of a random bipartite state $\rho_{AB}$.,label=Rains-rel-ent-matlab]
na = 2; nb = 2;
rho = randRho(na*nb);  % Generate a random bipartite state rho

cvx_begin sdp
    variable tau(na*nb,na*nb) hermitian ;
    minimize ( quantum_rel_entr(rho, tau)/ log(2) );
    tau >= 0;
    norm_nuc(Tx(tau, 2, [ na nb ]))  <= 1;
cvx_end
    
rains_rel_ent = cvx_optval;
\end{lstlisting}
\end{widetext}

\bibliographystyle{alpha}
\bibliography{Ref}

\end{document}